\documentclass[12 pt]{amsart}

\usepackage{amsthm}
\usepackage{amssymb}
\usepackage{latexsym}
\usepackage{url}
\usepackage{amsmath}
\usepackage{verbatim}
\usepackage{graphicx}
\usepackage{epsf}
\usepackage{enumerate}
\usepackage{geometry}
\geometry{top=1.17in, bottom=1.17in, left=1.36in, right=1.36in}
\usepackage[width=6cm]{caption}
\usepackage{hyperref}
\hypersetup{breaklinks=true,
pagecolor=white,
colorlinks=true,
citecolor=black,%
filecolor=black,%
linkcolor=black,%
urlcolor=black
}

\urlstyle{rm}

\newtheorem{thm}{Theorem}[section]

\newtheorem{lem}[thm]{Lemma}
\newtheorem{prop}[thm]{Proposition}

\newtheorem{defin}[thm]{Definition}

\theoremstyle{definition}
\newtheorem{expl}[thm]{Example}
\newtheorem{remark}[thm]{Remark}

\newtheorem{question}[thm]{Question}
\newtheorem{answer}[thm]{Answer}

\newcommand{\dom}{\textnormal{dom}}

\newcommand{\R}{\mathbb{R}}

\newcommand{\N}{\mathbb{N}}

\newcommand{\wt}{\widetilde}

\title[The geometric stability of Voronoi diagrams]{The geometric stability of Voronoi diagrams with respect to small changes of the sites}
\author{Daniel Reem}
\thanks{
\noindent Department of Mathematics, University of Haifa, Mount Carmel, 31905 Haifa, Israel. A large part of this research was done while the author was in the  Department of Mathematics at the Technion - Israel Institute of Technology,  Haifa, Israel.\\
\noindent E-mail: dream@tx.technion.ac.il , dream@math.haifa.ac.il 
}
\subjclass[2010]{46N99, 68U05, 46B20, 65D18} 
\keywords{Approximate, continuity, geometric stability, Hausdorff distance,  perturbation, shape, site, small change, uniformly convex normed  space, Voronoi cell, Voronoi diagram.}

\begin{document}
\maketitle
\begin{abstract}
Voronoi diagrams appear in many areas in science and technology and have numerous  applications. They have been the subject of extensive investigation during the last decades. Roughly speaking, they are a certain decomposition of a given space into cells, induced by a distance function and by a tuple of subsets called the generators or the sites.  
 Consider the following question: does a small change of the  sites, e.g., of their position or shape, yield a small change in  the corresponding Voronoi cells?  This  question is by all means natural and fundamental, since in practice one approximates the sites either because of  inexact  information about them, because of inevitable numerical errors in their representation,  for simplification  purposes and so on, and it is important to know whether the  resulting Voronoi cells approximate the real ones well. The traditional approach to Voronoi diagrams, and, in particular, to (variants of) this question, is combinatorial. However, it seems that there has been a very limited discussion in the geometric sense (the shape of the cells),  mainly an intuitive one, without proofs, in Euclidean spaces. 
We formalize this question precisely, and then show that the answer is positive in the case of $\R^d$, or, more generally, in (possibly infinite dimensional) uniformly convex  normed spaces, assuming there is a common  positive lower bound on the distance between the sites. Explicit bounds are given, and  we allow infinitely many sites of a general form.  The relevance of this result is illustrated using several pictures and many real-world and theoretical examples and counterexamples. 
\end{abstract}
\newpage

\section{Introduction}\label{sec:Intro}
\subsection{Background} The Voronoi diagram (the Voronoi tessellation, the  Voronoi decomposition, the Dirichlet tessellation) is  one of the basic structures
in computational geometry. Roughly speaking, it is a certain decomposition of a given space $X$ into cells, induced by a distance function and by a tuple of subsets $(P_k)_{k\in K}$, called the generators or the sites. More precisely, the Voronoi cell $R_k$ associated with the site $P_k$ is the set
of all the points in $X$ whose distance to $P_k$ is not greater than their distance to the union of the other sites $P_j$.

 Voronoi diagrams appear in a huge number of fields in  science and technology and have  many applications. They have been the subject of   research for at least 160 years, starting formally with L. Dirichlet \cite{Dirichlet} and G. Voronoi \cite{Voronoi}, and of extensive research during the last 40  years.  For several well written surveys on Voronoi diagrams which contain extensive bibliographies and many applications, see  \cite{Aurenhammer}, \cite{AurenhammerKlein}, \cite{OBSC}, and  \cite{VoronoiWeb}.\\

\noindent Consider the following question: 
\begin{question}\label{ques:main}
Does a small change of the  sites, e.g., of their position or shape, yield a small change in the corresponding Voronoi cells?  
\end{question}
This  question is by all means natural, because in practice, no matter which  algorithm is being used for the computation of the Voronoi cells, one  approximates  the sites either because of lack of exact information about them, because of inevitable numerical errors occurring when a site is represented in an analog or a digital manner, for simplification  purposes and so on, and it is important to know whether the resulting Voronoi cells approximate well the real ones.   

For instance, consider the Voronoi diagram whose sites are either 
shops (or large shopping centers),  antennas, or other facilities in some city/district such as post offices. See Figures \ref{fig:Shops}-\ref{fig:ShopsReality}. 
\begin{figure*}
\begin{minipage}[t]{0.5\textwidth}
\begin{center}
{\includegraphics[scale=0.78]{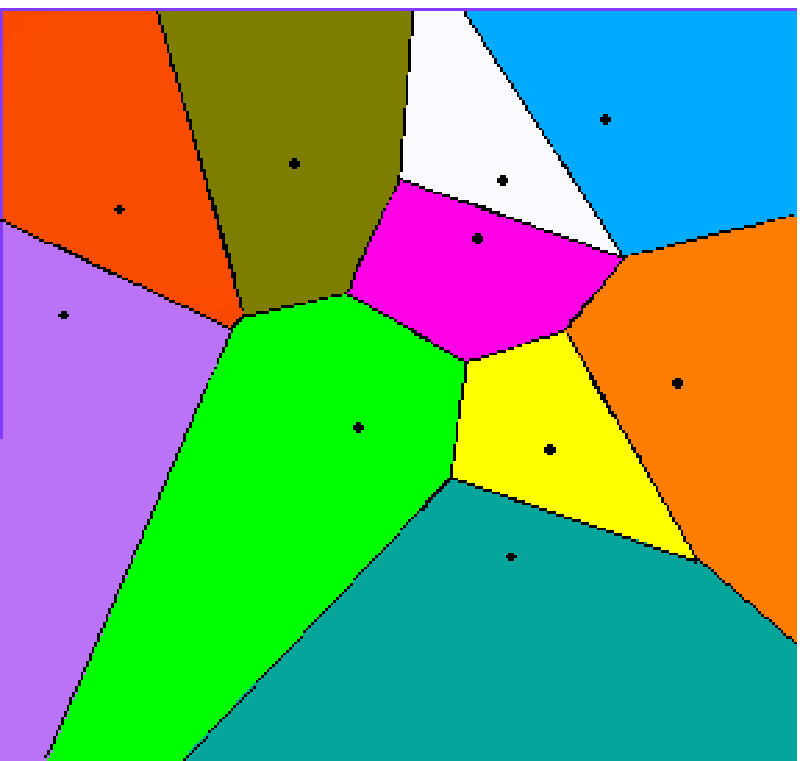}}
\end{center}
 \caption{10 shopping centers (or post offices) in a flat city. Each shopping center is represented by a point. }
\label{fig:Shops} 
\end{minipage}%
\hfill
\begin{minipage}[t]{0.5\textwidth}
\begin{center}
{\includegraphics[scale=0.78]{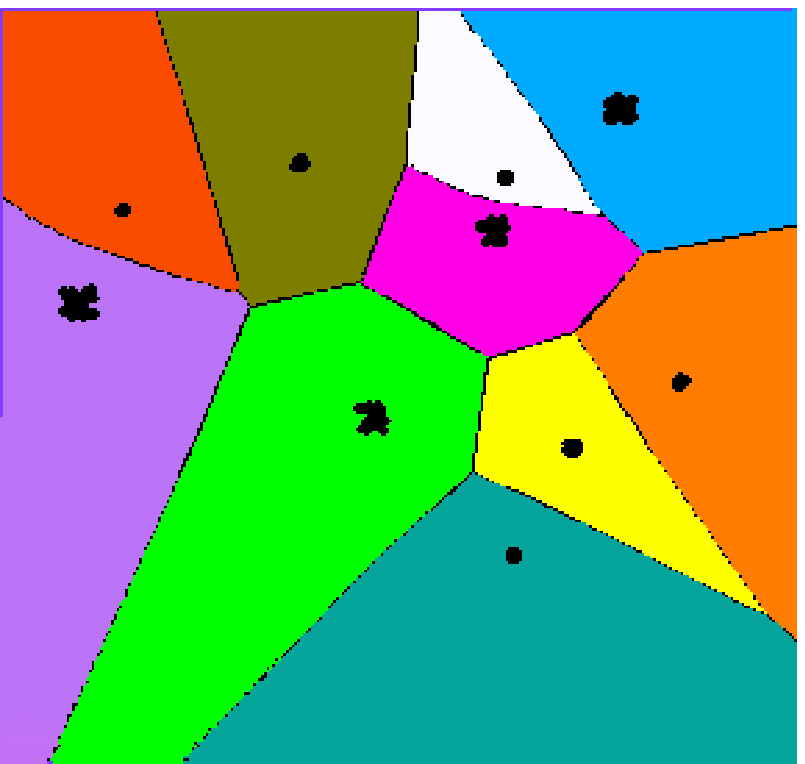}}
\end{center}
 \caption{In reality each shopping center/post office  is not a point and its location is approximated. The combinatorial structure is somewhat different and the Voronoi cells are not exactly polygons, but still, their shapes are almost the same as in Figure \ref{fig:Shops}.}
\label{fig:ShopsReality}
\end{minipage}%
\end{figure*}

Each Voronoi cell is the domain of influence  of its site and it can be used for various purposes, among them  estimating the number of potential costumers \cite{EconomyFacilityPNAS} or  understanding the spreading patterns of mobile phone viruses \cite{VoronoiVirus}. In reality, each site has a somewhat vague shape, and its real location is not known exactly. However, to simplify matters we regard each  site as a point (or a finite collection of points if we consider firms of shops) located more or less near the real location. As a result, the resulting cells only approximate the real ones, but we hope that the approximation will be good in the geometric sense, i.e., that the shapes of the corresponding real and approximate cells will be almost the same. (See Section \ref{sec:examples} for many additional examples, including ones with infinitely many sites or in higher/infinite dimensional spaces.) As the counterexamples in  Section  \ref{sec:CounterExamples} show, it is  definitely not obvious that this is the case. 

A similar question to Question \ref{ques:main} can be asked regarding any geometric structure/algorithm, and, in our opinion, it is a fundamental question which is analogous to the question about the stability of the solution of a differential equation with respect to small changes in the initial conditions. 

The traditional approach to Voronoi diagrams, and, in particular, to  (variants of) Question \ref{ques:main}, is combinatorial.  For instance, as already mentioned in Aurenhammer \cite[p. 366]{Aurenhammer}, the combinatorial structure of Voronoi diagrams (in the case of the Euclidean distance with point sites), i.e., the structure of vertices, edges and so 
on, is not stable under continuous motion of the sites, but it is stable ``most of the time''. A more extensive discussion about this issue, still with point sites but possibly in higher dimensions, can be found in Weller \cite{Weller}, Vyalyi et al. \cite{VGT}, and Albers et al.  \cite{AGMR}. 

However, it seems that this question, in the geometric sense, has been raised or discussed only rarely in the context of Voronoi diagrams. In fact, after a comprehensive survey of the literature about Voronoi diagrams, we have found only very few places that have a very brief, particular, and intuitive discussion which  is somewhat related to  this question. The following cases were mentioned: the Euclidean plane with finitely many sites \cite{Kaplan}, the Euclidean plane with finitely many point sites  
\cite[p. 366]{Aurenhammer}, and the $d$-dimensional Euclidean space with finitely many point sites \cite{AGMR} (see also some of the references therein). It was claimed there without proofs and exact definitions that the  Voronoi cells have a continuity property: a small change in the position or the shape of the sites yields a small change in the corresponding Voronoi cells.

Another continuity property was discussed by Groemer \cite{Groemer} in the  context of the geometry of numbers. He considered Voronoi diagrams generated by a lattice of points in a $d$-dimensional Euclidean space, and proved that if a sequence of  lattices converges to a certain lattice (meaning that the basis elements which generate the lattices converge with respect to the Euclidean distance to the basis which  generates the limit lattice), then the corresponding Voronoi cells of the origin converge, with respect to the Hausdorff distance, to the cell of the origin of the limit lattice. His result is, in a sense and in a very particular case, a stability result, but it definitely does not answer Question \ref{ques:main} (which, actually, was  not asked at all in \cite{Groemer}) for several reasons:  first, usually the sites or the perturbed ones do not form a (infinite) lattice. Second, in many cases they are not points (singletons). Third, a site is usually different from the perturbed site (in \cite{Groemer} the discussed sites equal $\{0\}$). In this connection, we also note that Groemer's proof is very restricted to the above setting and it uses arguments based on compactness and no explicit bounds are given.

It is quite common in the computational geometry literature to assume ``ideal conditions'', say infinite precision in the computation, exact and simple  input, and so on. These conditions are somewhat non-realistic. Issues related  to the stability of geometric structures under small perturbations of their building blocks (not necessarily the geometric  stability) are not so common  in  the literature, but they can be found in several places, e.g.,  in  \cite{AGGKKRS, AryaMalamatosMount, AttaliBoissonnatEdelsbrunner, BandyopadhyaySnoeyink, ChazalCohenSteinerLieutier, ChoiSeidel, SteinerEdelsbrunerHarer,  FortuneStability,  HarPeled, Khanban, LofflerPhD, LofflerKreveld, GuibasSalesinStolfi,  SugiharaIriInagakiImai}. However, in many of the above places the discussion has combinatorial characteristics and there are several restrictive assumptions: for instance, the underlying setting is usually a finite dimensional space (in many cases only $\R^2$ or $\R^3$), with the Euclidean distance, and with  finitely many  objects of a specific form (merely points in many cases). In addition, the methods are restricted to this setting. In contrast, the infinite dimensional case or the case of (possibly infinitely many) general objects or general norms have never been considered.

\subsection{Contribution of this paper} We discuss the question of stability of Voronoi diagrams with respect to small changes of the corresponding sites. We first formalize this question precisely, and then show that the answer is positive in the case of $\R^d$, or, more generally, in the case of (possibly infinite dimensional) uniformly convex normed spaces, assuming there is a  common positive lower bound on the distance between the sites. Explicit bounds are presented, and we allow infinitely many sites of a general form. We also present several counterexamples which show that the assumptions  formulated in the main result are crucial. We illustrate the relevance of this result using several pictures and many real-world and theoretical examples and counterexamples. To the best of our knowledge, the main result and the approach used for deriving it are new. Two of our main tools are: a new representation theorem which characterizes the Voronoi cells as a collection of line segments and a new geometric lemma which provides an explicit geometric estimate.   

\subsection{The structure of the paper} In Section \ref{sec:Definitions} we present the basic definitions and notations. Exact formulation of Question  \ref{ques:main} and informal description of the main result are given in Section \ref{sec:FormalInformal}. The relevance of the main result  is illustrated using many theoretical and real-world  examples in Section \ref{sec:examples}. The main result is presented in Section \ref{sec:Outline}, and we discuss briefly some aspects related to its proof. In Section \ref{sec:CounterExamples} we present several interesting counterexamples showing that the assumptions imposed in the main result are crucial. We end the paper in Section \ref{sec:Concluding} with several concluding remarks.  Since the proof of the main result is quite long and technical, and because the main goal of this paper is to introduce the issue and to discuss it in a qualitative manner, rather than going deep into technical details, proofs were omitted from the main body of the text. Full proofs  can be found in the appendix (Section \ref{sec:appendix}) and a preliminary version in \cite{ReemPhD}.

\section{Notation and basic definitions}\label{sec:Definitions}
In this section we present our notation and basic definitions. In the main discussion we consider a closed and convex set $X\neq \emptyset$ in some uniformly convex normed space $(\widetilde{X},|\cdot|)$ (see Definition \ref{def:UniformlyConvex} below), real or complex, finite or infinite dimensional. The induced metric is $d(x,y)=|x-y|$. We assume that $X$ is not a singleton, for otherwise everything is trivial.   We denote by $[p,x]$ and $[p,x)$ the closed and half open line segments connecting $p$ and $x$, i.e., the sets $\{p+t(x-p): t\in [0,1]\}$ and $\{p+t(x-p): t\in [0,1)\}$ respectively. The (possibly empty) boundary of $X$ with respect to the affine hull spanned by $X$ is denoted by $\partial X$. The open ball with center $x\in X$ and radius $r>0$ is denoted by $B(x,r)$. 

\begin{defin}\label{def:dom}
Given two nonempty subsets $P,A\subseteq X$, the dominance region
$\dom(P,A)$ of $P$ with respect to $A$ is the set of all $x\in X$
whose distance to $P$ is not greater than their distance to $A$, i.e.,
\begin{equation*}
\dom(P,A)=\{x\in X: d(x,P)\leq d(x,A)\}.
\end{equation*}
Here $d(x,A)=\inf\{d(x,a): a\in A\}$ and in general we denote  $d(A_1,A_2)=\inf\{d(a_1,a_2): a_1\in A_1,\,a_2\in A_2\}$ for any nonempty  subsets $A_1,A_2$. 
\end{defin}
\begin{defin}\label{def:Voronoi}
Let $K$ be a set of at least 2 elements (indices), possibly
infinite. Given a  tuple $(P_k)_{k\in K}$ of nonempty subsets
$P_k\subseteq X$, called the generators or the sites, the Voronoi diagram  induced by this tuple is the tuple $(R_k)_{k\in K}$ of non-empty subsets
$R_k\subseteq X$, such that for all $k\in K$,
\begin{equation*}
R_k=\dom(P_k,{\underset{j\neq k}\bigcup P_j})
=\{x\in X: d(x,P_k)\leq d(x,P_j)\,\,\forall j\in K ,\, j\neq k \}.
\end{equation*}
 In other words,  the Voronoi cell $R_k$ associated with the site $P_k$ is the set of all $x\in X$ whose distance to $P_k$ is not greater than their distance to the union of the other sites $P_j$.
\end{defin}
In general, the Voronoi diagram induces  a decomposition of $X$   into its Voronoi cells and the rest. If $K$ is finite, then the union of the cells is the whole space. However, if $K$ is infinite, then there may be a ``neutral cell'': for example, if $X$ is the Euclidean plane, $K=\N=\{1,2,3,\ldots\}$ and $P_k=\R\times \{1/k\}$, then no point in the lower half-plane $\R\times (-\infty,0]$ belongs to any  Voronoi cell. In the above definition and the rest of the paper we ignore the neutral cell.

We now recall the definition of strictly and uniformly convex spaces.
\begin{defin}\label{def:UniformlyConvex}\label{page:UniConvDef}
A normed space $(\widetilde{X},|\cdot|)$ is said to be strictly convex if   for all $x,y\in \wt{X}$ satisfying $|x|=|y|=1$ and $x\neq y$, the inequality $|(x+y)/2|<1$ holds. $(\widetilde{X},|\cdot|)$ is said to be uniformly convex if for any $\epsilon\in (0,2]$ there exists $\delta\in (0,1]$ such that for all $x,y\in \wt{X}$, if $|x|=|y|=1$ and $|x-y|\geq \epsilon$, then $|(x+y)/2|\leq 1-\delta$.
\end{defin}
Roughly speaking, if the space is uniformly convex, then for any $\epsilon>0$ there exists a uniform positive lower bound on how deep the midpoint between any two unit vectors must penetrate the unit ball, assuming the distance  between them is at least $\epsilon$. In general normed spaces the penetration is not necessarily positive, since the unit sphere may contain  line segments. $\R^2$ with the max norm $|\cdot|_{\infty}$ is a typical example for this.  A uniformly convex space is always strictly convex, and if it is also  finite dimensional, then the converse is true too. The $m$-dimensional Euclidean  space $\R^m$, or more generally, inner product spaces, the sequence spaces   $\ell_p$, the Lebesgue spaces $L_p(\Omega)$ ($1<p<\infty$),  and a uniformly convex product of a finite number of uniformly convex spaces, are all examples of uniformly convex spaces. See Clarkson \cite{Clarkson} and, for instance, Goebel-Reich \cite{GoebelReich} and Lindenstrauss-Tzafriri \cite{LindenTzafriri} for more information about uniformly convex spaces. 

From the definition of uniformly convex spaces we can obtain a function which
assigns to the given $\epsilon$ a corresponding value $\delta(\epsilon)$. There are several  ways to obtain such a function, but for our purpose we only need $\delta$ to be increasing, and to satisfy $\delta(0)=0$ and $\delta(\epsilon)>0$ for any $\epsilon\in (0,2]$. One choice, which is not necessarily the most convenient one, is the modulus of convexity, which is the function $\delta:[0,2]\to[0,1]$ defined by
\begin{equation*}
\displaystyle{\delta(\epsilon)=\inf\{1-|(x+y)/2|: |x-y|\geq \epsilon,\,|x|=|y|=1\}}.\label{eq:delta}
\end{equation*}
For specific spaces we can take more convenient functions. For instance, for the spaces $L_p(\Omega)$ or $\ell_p\,$, $1<p<\infty$,  we can take
\begin{equation*}
\begin{array}{l}
\delta(\epsilon)=1-(1-\left(\epsilon/2)^p\right)^{1/p},\,\, \textnormal{for}\,\,p\geq 2,\\
\delta(\epsilon)=1-\left(1-(\epsilon/2)^q\right)^{1/q},\,\, \textnormal{for}\,\,1<p\leq 2\,\,
\textnormal{and}\,\,\frac{1}{p}+\frac{1}{q}=1.
\end{array}
\end{equation*}
We finish this section with the definition of the Hausdorff distance, a definition which  is essential for the rest of the paper. 
\begin{defin}\label{def:Hausdorff}
Let $(X,d)$ be a metric space. Given two nonempty sets $A_1,A_2\subseteq X$, the Hausdorff distance between them is defined by
\begin{equation*}
D(A_1,A_2)=\max\{\sup_{a_1\in A_1}d(a_1,A_2),\sup_{a_2\in A_2}d(a_2,A_1)\}.
\end{equation*}
\end{defin}
Note that the Hausdorff distance $D(A_1,A_2)$ is definitely different from the usual distance $d(A_1,A_2)=\inf\{d(a_1,a_2): a_1\in A_1,\,a_2\in A_2\}$. As a matter of fact, $D(A_1,A_2)\leq \epsilon$ if and only if  $d(a_1,A_2)\leq \epsilon$ for any $a_1\in A_1$, and $d(a_2,A_1)\leq \epsilon$ for any $a_2\in A_2$. In addition, if  $D(A_1,A_2)<\epsilon$, then for any $a_1\in A_1$ there exists $a_2\in A_2$ such that $d(a_1,a_2)<\epsilon$, and for any $b_2\in A_2$ there exists $b_1\in A_1$ such that $d(b_2,b_1)<\epsilon$.  These properties explain why the Hausdorff distance is the natural distance to be used when discussing approximation and stability in the context of sets:  suppose that our resolution is at most $r$, i.e., we are not able to distinguish between two points whose distance is at most some given positive number $r$. If it is known that $D(A_1,A_2)<r$, then we cannot distinguish between the sets $A_1$ and $A_2$, at least not by inspections based only on distance measurements. As a result of the above discussion, the intuitive phrase ``two sets have almost the same shape'' can be formulated precisely: the Hausdorff distance between the sets is smaller than some given positive parameter (note that a set and a rigid transformation of it usually have different shapes). 

\section{Exact formulation of the main question and informal formulation of the main result}\label{sec:FormalInformal}
The exact formulation of Question \ref{ques:main} is based on the concept of Hausdorff distance for reasons which were explained at the end of the previous section. 
\begin{question}
Suppose that $(P_k)_{k\in K}$ is a tuple of non-empty sets  in $X$. Let $(R_k)_{k\in K}$ be the corresponding Voronoi diagram.   Is it true that a small change of the sites yields a small change in the corresponding Voronoi cells, where both changes are measured with respect to the Hausdorff distance? More precisely, is it true that for any $\epsilon>0$ there exists $\Delta>0$ such that for any tuple $(P'_k)_{k\in K}$, the condition $D(P_k,P'_k)<\Delta$ for each $k\in K$ implies that $D(R_k,R'_k)<\epsilon$ for each $k\in K$, where $(R'_k)_{k\in K}$ is the Voronoi diagram of $(P'_k)_{k\in K}$? 
\end{question}
The main result (Theorem \ref{thm:stabilityUC}) says that the answer is positive. Here is an informal description of it:
\begin{answer}
Suppose that the underlying subset $X$ is a closed and convex set of a  (possibly infinite dimensional) uniformly convex normed space $\wt{X}$.  Suppose that a certain boundedness condition on the distance between points in $X$ and the sites holds, e.g., when $X$ is bounded  or when the sites form a (distorted) lattice. If there is a common positive lower bound on the distance between the sites, and the distance to each of them is attained, then indeed a small enough change of the (possibly infinitely many) sites yields a small change of the corresponding  Voronoi cells, where both changes are measured with respect to the Hausdorff distance; in other words, the shapes of the real cells and the corresponding perturbed ones are almost the same. Moreover, explicit bounds on the changes can be derived and they hold simultaneously for all the cells. There are counterexamples which show that the assumptions imposed above are crucial. 
\end{answer}
 The condition that the distance to a site is attained holds, e.g., when  the site is either a closed subset contained in a (translation of a) finite dimensional  space, or a compact set, or a convex and closed subset in a uniformly convex Banach space. The sites can always be assumed to be closed, since the distance and the Hausdorff distance preserve their values when the involved subsets are replaced by their closures. The ``certain boundedness condition on the distance between points in $X$ and the sites'' is a somewhat technical condition expressed in \eqref{eq:BallRhokThm} (see also Remark \ref{rem:rho}). 
 
\section{ The relevance of the main result}\label{sec:examples}
In Section \ref{sec:Intro} we explained why Question \ref{ques:main} is natural and fundamental, and mentioned the real-world example of a Voronoi diagram induced by shops/cellular antennas. The goal of this section is to illustrate further  the relevance of the main result using a (far from being exhaustive) list of real-world and theoretical exampls. In these examples the shape or the position of the real sites are obviously approximated, and the main result (Theorem \ref{thm:stabilityUC}) ensures that the   approximate Voronoi cells and the real ones have almost the same shape, and no unpleasant phenomenon such as the one described in Figures  \ref{fig:InStability000}-\ref{fig:InStability1Full} can occur.

One example is in molecular biology for modeling the proteins structure  
(Richards \cite{Richards}, Kim et al. \cite{KKCRCP}, Bourquard et al. \cite{VoronoiBiology2}), where the sites are either the atoms of a protein or special selected points in the amino acids and they are approximated by spheres/points. Another example is related to collision detection and robot motion (Goralski-Gold  \cite{GoralskiGold}, Schwartz et al. 
\cite{SchwartzSharirHopcroft}), where the sites are the (static  or dynamic) obstacles located in an environment in which a vehicle/airplane/ship/robot/satellite should move. A third example is in solid state physics (Ashcroft-Mermin \cite{AshcroftMermin}; here the common terms are ``the first Brillouin zone'' or ``the Wigner-Seitz cell'' instead of ``the Voronoi cell''), where the sites are infinitely many point  atoms in a (roughly)  periodic structure which represents a crystal. A fourth example is in material engineering (Li-Ghosh \cite{LiGhosh}), where the sites are cracks in a material. 

A fifth example is in numerical simulations of various dynamical phenomena, e.g., gas, fluid or urban dynamics (Slotterback et al. \cite{GranularMatter}, Mostafavi et al.  \cite{VoronoiSpatial}). Here the sites are certain points/shapes taken from the sampled data of the  simulated phenomena, and the cells help to cluster and analyze the data continuously. A sixth example is in astrophysics (Springel et al.  \cite{DarkMatterGalactic}) where the (point) sites are actually huge regions in the universe (of diameter equals to hundreds of light years) used in simulations performed for understanding the behavior of (dark) matter in the universe. A seventh example is in image processing and computer graphics,  where the sites are either certain important features/parts in an image (Tagare et al. \cite{TagareJaffeDuncan}, Dobashi et al. \cite{DobashiHagaJohanNishita},  Sabha-Dutr\'e \cite{SabhaDutre}) used for processing/analyzing it, or they are points form a useful configuration such as (an approximate) centroidal Voronoi diagram (CVD) which  induces cells  having good shapes (Du et al. \cite{VoronoiCVD_Review},  Liu et al. \cite{Graphics_CVD}, Faustino-Figueiredo \cite{FaustinoFigueiredo}). 

An eighth example is in computational geometry, and it is actually a large  collection of familiar problems in this field where Voronoi cells appear  
and being used, possibly indirectly: (approximate) nearest neighbor searching/the post office problem, cluster analysis, (approximate) closest pairs, motion planning, finding (approximate) minimum spanning trees, finding good triangulations, and so on. See, e.g., Aurenhammer \cite{Aurenhammer}, Aurenhammer-Klein  \cite{AurenhammerKlein}, Clarkson \cite{ClarksonNN}, Indyk \cite{Indyk}, and Okabe et al.  \cite{OBSC}.  Here the  sites are either points or other shapes, and the space is usually $\R^n$ with some norm. In some of the above problems  our stability result is clearly related because of the analysis being used (e.g., cluster analysis)  or because the position/shapes of the sites are approximated (e.g., motion planning, the post office problem). However, it  may be related also because  in many of the previous problems the difficulty/complexity of finding an exact  solution is too high, so one is forced to use approximate algorithms,  to impose a general position assumption, and so on. Now, by perturbing  slightly the sites (moving points, replacing a non-smooth curve by an approximating smooth one, etc.,) one may obtain much simpler and tractable configurations  and now, using the geometric stability of the Voronoi cells, one may estimate how well the obtained solution approximates the best/real one.  

As for a theoretical example of a different nature, we mention Kopeck\'a et al.  \cite{KopeckaReemReich} in which the stability results described here have been used, in particular,  for proving the existence of a zone diagram (a concept which was first introduced by Asano et al. \cite{AMTn} in the Euclidean plane with point sites) of finitely many compact sites which are  strictly contained in a (large) compact and convex subset of a uniformly convex space, and also for proving  interesting properties of Voronoi cells there. 

Another example  is for the infinite dimensional Hilbert space $L_2(I)$ for some $I$ (perhaps an interval or a finite dimensional space): functions in it are used in signal processing and in many other areas in science and technology. In practice the signals (functions) can be distorted, e.g.,  because of  noise, and in addition, they are approximated by finite series (e.g., finite  Fourier  series) or integrals (e.g., Fourier transform). Given several signals, the (approximate)  Voronoi cell of a given signal may help, at least in theory, to cluster or analyze data related to the sites. Such an analysis can be done also when the signal is considered  as a point in a finite dimensional space of a high dimension. See, for instance, Conway-Sloane \cite[pp. 66-69,  451-477]{ConwaySloane} (coding) and Shannon  \cite{Shannon} (communication) for a closely related discussion (in \cite{Shannon} Voronoi diagrams are  definitely used in various places, but without their explicit name).

We mention several additional examples  related to our stability result,  sometimes in a somewhat unexpected way. For instance, Voronoi diagrams of  infinitely many sites generated by a Poisson process (Okabe et al. \cite[pp. 39, 291-410]{OBSC}), Voronoi diagrams of atom nuclei used for the mathematical analysis of stability phenomena in matter (Lieb-Yau  \cite{LiebYau}),  Voronoi diagrams of infinitely many lattice points in a multi-dimensional Euclidean space which appear in the original  works of Dirichlet \cite{Dirichlet} and Voronoi \cite{Voronoi} (see also Groemer \cite{Groemer} and Gruber-Lekkerkerker \cite{GruberLek}  regarding the geometry of numbers and quadratic  forms; Groemer used his stability result for deriving the Mahler compactness theorem \cite{Mahler}),
and packing problems such as the Kepler conjecture and the Dodecahedral conjecture (Hales \cite{HalesKepler},\cite{KeplerDCG}, Hales-McLaughlin \cite{HalesMcLaughlin}; because of continuity arguments needed in the proof) or those described in Conway-Sloane \cite{ConwaySloane}.

\section{The main result and some aspects related to its proof}\label{sec:Outline}
In this section we formulate the main result and discuss briefly issues  related to its proof. See also the remarks after  
Theorem \ref{thm:stabilityUC} for several relevant clarifications. 

\begin{thm}\label{thm:stabilityUC}
Let $(\wt{X},|\cdot|)$ be a uniformly convex normed space. Let  $X\subseteq \wt{X}$ be closed and convex. Let $(P_k)_{k\in K}$, $(P'_k)_{k\in K}$ be two given tuples of nonempty subsets of  $X$ with the property that the distance between each $x\in X$ and each $P_k,P'_k$ is attained. For each $k\in K$ let $A_k=\bigcup_{j\neq k}P_j,\,A'_k=\bigcup_{j\neq k}P'_j$. Suppose that the following conditions hold:
\begin{equation}\label{eq:eta}
\eta:=\inf\{d(P_k,P_j): j,k\in K, j\neq k\}>0,
\end{equation}
\begin{multline}\label{eq:BallRhokThm}
\exists \rho\in (0,\infty)\,\, \textnormal{such that for all}\,\,k\in K\,\,\textnormal{and for all}\,\, x\in X\,\,\\
\textnormal{the open ball}\,\, B(x,\rho)\,\,\textnormal{intersects} \,\,A_k.
\end{multline}
For each $k\in K$ let $R_k=\dom(P_k,A_k),R'_k=\dom(P'_k,A'_k)$ be, respectively, the Voronoi cells associated with the original site $P_k$ and the perturbed one $P'_k$. Then for each $\epsilon\in (0,\eta/6)$ there exists $\Delta>0$ such that if   $D(P_k,P'_k)<\Delta$ for each $k\in K$, then $D(R_k,R'_k)<\epsilon$ for each $k\in K$.
\end{thm}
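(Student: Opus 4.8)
The plan is to reformulate cell membership through the \emph{defect function} $g_k(x) = d(x,A_k) - d(x,P_k)$, so that $R_k = \{x\in X : g_k(x)\ge 0\}$ and $R'_k = \{x\in X: g'_k(x)\ge 0\}$ with $g'_k(x) = d(x,A'_k)-d(x,P'_k)$. First I would record the elementary Lipschitz-type stability of distance under the Hausdorff distance: for every $x$ one has $|d(x,P_k)-d(x,P'_k)|\le D(P_k,P'_k)<\Delta$, and, since $d(x,A_k)=\inf_{j\ne k}d(x,P_j)$ and an infimum is $1$-Lipschitz in its data, also $|d(x,A_k)-d(x,A'_k)|\le \sup_{j\ne k}D(P_j,P'_j)<\Delta$. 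Hence $\sup_{x\in X}|g_k(x)-g'_k(x)|<2\Delta$ uniformly in $k$. I would also note two facts that let the argument run symmetrically: $d(P_k,A_k)=\inf_{j\ne k}d(P_k,P_j)\ge \eta$ by \eqref{eq:eta}, and, for $\Delta$ small, the perturbed sites inherit a positive separation $d(P'_k,P'_j)\ge \eta-2\Delta>0$ and an inflated version of \eqref{eq:BallRhokThm} with $\rho$ replaced by $\rho+\Delta$.

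Next I would use the representation of each cell as a union of segments reaching to nearest site points: if $x\in R_k$ and $p\in P_k$ attains $d(x,P_k)=|x-p|$ (such $p$ exists by hypothesis), then the whole segment $[p,x]$ lies in $R_k$. Indeed, for $y=(1-t)p+tx$ one has $d(y,P_k)\le|y-p|=t\,d(x,P_k)$, while $d(y,A_k)\ge d(x,A_k)-|x-y|\ge d(x,P_k)-(1-t)d(x,P_k)=t\,d(x,P_k)$, using $d(x,A_k)\ge d(x,P_k)$ and convexity of $X$ to keep $y\in X$. This star-shapedness toward nearest points is what makes ``moving inward, toward the site'' a legitimate and universal operation, and it is the structural content I would attribute to the representation theorem.

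The heart of the matter is a quantitative \emph{transversality} lemma saying that moving inward strictly increases the defect at a rate bounded below. Concretely: if $g_k(x)$ is small, $p\in P_k$ is nearest to $x$, and $y\in[x,p]$ with $|x-y|=s$, then $g_k(y)\ge g_k(x)+\psi(s)$ for an explicit increasing $\psi$ with $\psi(s)>0$. The gain comes entirely from uniform convexity applied to the two unit vectors pointing from $x$ toward a nearest point $p\in P_k$ and toward any $a\in A_k$: their difference has norm $\ge |p-a|/d(x,A_k)\ge \eta/d(x,A_k)$, so the modulus of convexity $\delta$ forces the distance to $A_k$ to decay strictly slower than the trivial rate as one moves toward $p$. (In an inner-product space a law-of-cosines computation gives $|y-a|^2\ge (d(x,P_k)-s)^2+s\eta^2/d(x,A_k)$, hence $\psi(s)\sim \eta^2 s/(2\,d(x,A_k)^2)$; the general case replaces this by an estimate in terms of $\delta$.) Here condition \eqref{eq:BallRhokThm} is indispensable: it yields $d(x,A_k)<\rho$, so the argument $\eta/d(x,A_k)>\eta/\rho$ of $\delta$ stays bounded away from $0$ and $\psi$ does not degenerate; without such a bound the radii could blow up, $\delta(\eta/d(x,A_k))\to 0$, and the cells could shift uncontrollably, which is exactly the mechanism behind the counterexamples. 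Proving this lemma, with an explicit $\psi$ valid in a general uniformly convex space (where no inner product is available and uniform convexity yields upper bounds on midpoints rather than the lower bound one wants directly), is the step I expect to be the main obstacle.

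Finally I would assemble the Hausdorff estimate. Fix $\epsilon\in(0,\eta/6)$, choose $s<\epsilon$, and then pick $\Delta>0$ so small that $2\Delta<\psi(s)$, that the analogous gain $\psi'$ for the perturbed diagram also exceeds $2\Delta$, and that $\Delta<\eta/3$. For the first inclusion, take $x\in R_k$: if $g_k(x)\ge 2\Delta$ then $g'_k(x)>0$, so $x\in R'_k$; if $0\le g_k(x)<2\Delta$, move to $y\in[x,p]$ with $|x-y|=s$ (or to $p$ itself when $d(x,P_k)\le s$, where $g_k(p)\ge\eta$), obtaining $g_k(y)\ge\psi(s)>2\Delta$, hence $g'_k(y)>0$ and $y\in R'_k$; either way $d(x,R'_k)\le s<\epsilon$. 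The reverse inclusion $d(x',R_k)<\epsilon$ for $x'\in R'_k$ is identical with primed and unprimed roles exchanged, using the inherited separation $\eta-2\Delta$ and the inflated ball condition. Since every constant ($\eta$, $\rho$, $\delta$, $\psi$, $\Delta$) is independent of $k$, the bound $D(R_k,R'_k)\le s<\epsilon$ holds simultaneously for all $k\in K$, which is the assertion of Theorem \ref{thm:stabilityUC}.
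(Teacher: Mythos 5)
Your overall architecture coincides with the paper's: the star-shapedness of cells toward nearest site points is the content of the representation theorem (Theorem \ref{thm:domInterval}), the quantitative ``defect gain'' from moving inward is the paper's key geometric lemma (Lemma \ref{lem:StrictSegment}), and your assembly via the uniform bound $\sup_{x\in X}|g_k(x)-g'_k(x)|<2\Delta$ is a clean repackaging of Propositions \ref{prop:stability} and \ref{prop:stabilityVor}. The assembly step is correct, and your observation that \eqref{eq:BallRhokThm} is precisely what keeps the argument of $\delta$ bounded away from $0$ uniformly in $x$ and $k$ is exactly the role it plays in the paper.

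The genuine gap is the transversality lemma itself, which you assert rather than prove, and whose sketch contains a false step. The claimed angle bound --- that $u=(p-x)/|p-x|$ and $v=(a-x)/|a-x|$ satisfy $|u-v|\ge|p-a|/d(x,A_k)$ --- fails: take $x=(0,0)$, $p=(1,0)$, $a=(3,0)$ in the Euclidean plane, where $|u-v|=0$ while $|p-a|/d(x,a)=2/3$ (and there the defect is constant along $[x,p]$, so the additive form $g_k(y)\ge g_k(x)+\psi(s)$ fails as well). The correct estimate is only $|u-v|\ge\bigl(|p-a|-(d(x,a)-d(x,p))\bigr)/d(x,a)$, which degenerates exactly when $a$ lies nearly on the ray from $x$ through $p$; since the bound must hold for \emph{every} $a\in A_k$ (not just near-minimizers) in order to control $\inf_a d(y,a)$, this collinear regime cannot be excluded, and there uniform convexity gives nothing --- one must argue directly that then $d(x,a)\approx d(x,p)+|p-a|\ge d(x,p)+\eta$, so the defect contributed by such $a$ is already large. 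The paper's Lemma \ref{lem:StrictSegment_Appendix} carries out precisely this dichotomy (angle above or below an explicit threshold), and in the large-angle case it extracts the gain not from the definition of uniform convexity (which, as you note, bounds midpoints from above rather than giving the lower bound you want) but from Clarkson's strong triangle inequality \eqref{eq:ClarkTriangle}, applied with $x_1=x-y$, $x_2=a-x$. That inequality is the missing ingredient that converts the modulus $\delta$ into an explicit $\psi(s)$ of the order $2s\,\delta\bigl(\eta/(\mathrm{const}\cdot\rho)\bigr)$; without it, the step you yourself flag as ``the main obstacle'' remains open, and it is the heart of the proof.
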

See Figures  \ref{fig:Stability_0005_lpi_Before}, \ref{fig:Stability_0005_lpi_After} for an illustration. The pictures were produced using the algorithm described in \cite{ReemISVD09}. 

\begin{figure*}[t]
\begin{minipage}[t]{0.5\textwidth}
\begin{center}
{\includegraphics[scale=0.75]{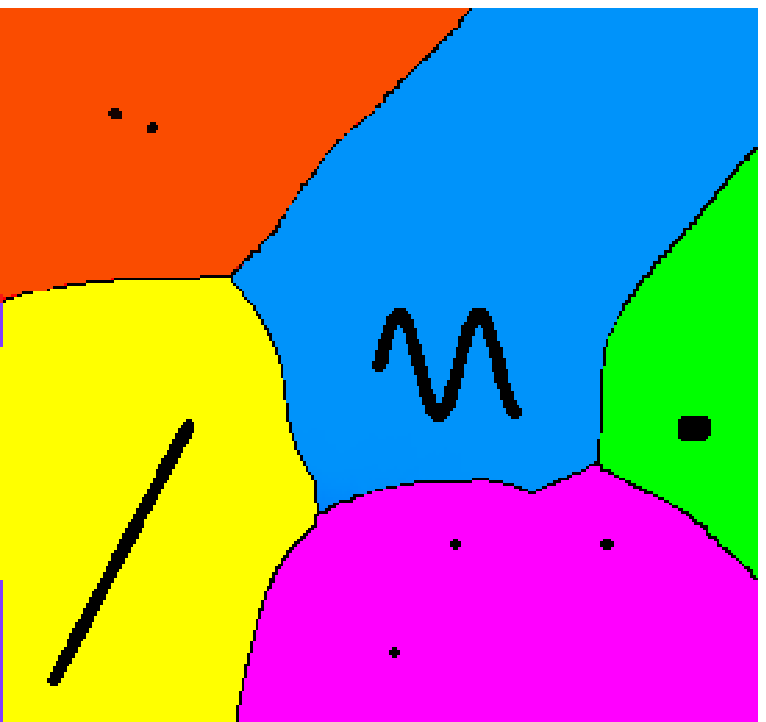}}
\end{center}
 \caption{Illustration of Theorem \ref{thm:stabilityUC}: five sites in a square in $(\R^2,\ell_p)$ where the parameter is $p=3.14159$.}
\label{fig:Stability_0005_lpi_Before} 
\end{minipage}%
\hfill
\begin{minipage}[t]{0.5\textwidth}
\begin{center}
{\includegraphics[scale=0.75]{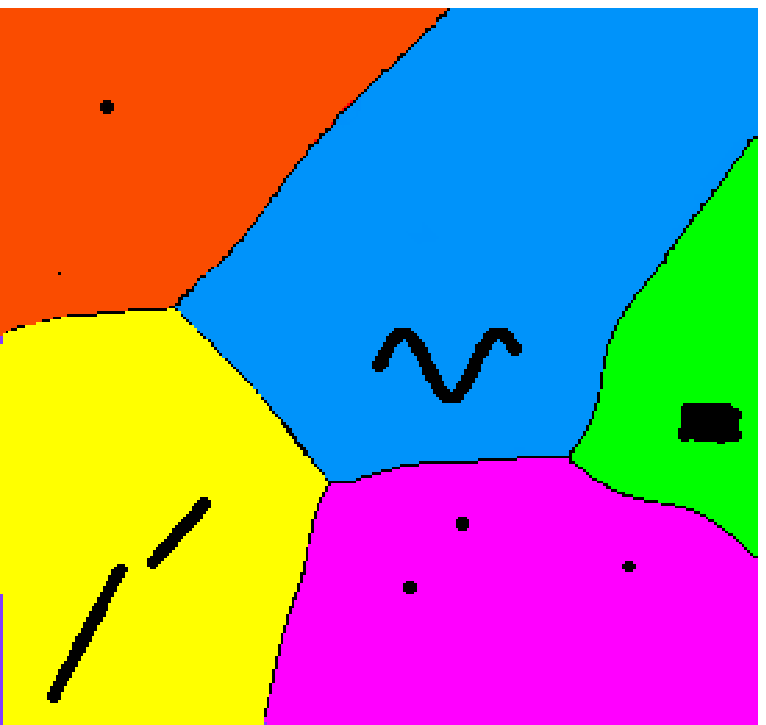}}
\end{center}
 \caption{The sites have been slightly perturbed: the two points have merged,  the ``sin'' has shrunk, and so on. The cells have been slightly perturbed.}
\label{fig:Stability_0005_lpi_After}
\end{minipage}%
\end{figure*}
\begin{remark}\label{rem:rho}
The assumption mentioned in \eqref{eq:BallRhokThm} may seem somewhat  complicated at a first glance, but it actually expresses a certain uniform  boundedness condition on the distance between any point in $X$ to  its  neighbor sites. No matter which point $x\in X$ and which site $P_k$ are chosen, the distance between $x$ and the collection of other sites $P_j,\,j\neq k$ cannot be arbitrary large. A sufficient condition for it to hold is when a uniform bound on the diameter of the cells (including the neutral one, if it is nonempty) is known in advance, e.g., when $X$ is bounded or when the sites form a (distorted) lattice. But \eqref{eq:BallRhokThm} can hold even if the cells are not bounded, e.g., when the setting is the Euclidean plane and $P_k=\R\times \{k\}$ where $k$ runs over all integers. 
\end{remark}
\begin{remark}\label{rem:Delta}
In general, we have $\Delta=O(\epsilon^2)$. However,  if there is a positive lower bound on the distance between  the sites and the boundary of $X$ (relative to the affine hull spanned by $X$), i.e., if  the sites are strictly contained in the (relative) interior of $X$, then actually the better estimate $\Delta=O(\epsilon)$ holds. The constants inside the big $O$ can be described explicitly: when $\Delta=C\epsilon^2$ we can take
\begin{equation*}
C=\frac{1}{16(\rho+5\eta/12)}\cdot\delta\left(\frac{\eta}{12\rho+5\eta}
\right),
\end{equation*} 
and when $\Delta=C\epsilon$ we can take 
\begin{equation*}
C=\min\left\{\displaystyle{ \frac{1}{16}\delta\left(\frac{\eta}{12\rho+5\eta}\right), \frac{d(\bigcup_{k\in K}P_k,\partial X)}{8(\rho+\eta/6)}}\right\}.
\end{equation*} \\
In the second case, in addition to $\epsilon <\eta/6$, the inequality 
$\epsilon\leq 8\cdot d(\bigcup_{k\in K}P_k,\partial X)$  should be satisfied too.
\end{remark} 
The proof of Theorem \ref{thm:stabilityUC} is quite long and technical, and hence it is given in the appendix. Despite this, we want to say a few words about the proof and some of the difficulties which arise along the way. First, as the counterexamples mentioned in Section \ref{sec:CounterExamples} show, one must take into account all the assumptions mentioned in the formulation of the theorem. 

Second, in order to arrive to the generality described in the theorem, 
one is forced to avoid many familiar arguments used in computational geometry and elsewhere, such as Euclidean arguments (standard angles, trigonometric functions, normals, etc.), arguments  based on lower envelopes and algebraic arguments (since the intersections between the graphs generating the lower envelope may be complicated and since the boundaries of the cells may not be algebraic), arguments based on continuous motion of points, arguments based on finite dimensional properties such as compactness (since in infinite dimensional spaces closed and bounded balls are not compact), arguments based on finiteness (since we allow infinitely many sites and sites consist of infinitely many points) and so on. Our way to overcome these difficulties is to use a new  representation theorem for dominance regions as a collection of line segments (Theorem \ref{thm:domInterval} below) and a new geometric lemma (Lemma \ref{lem:StrictSegment} below) which enables us to arrive to the explicit  bounds mentioned in the theorem. As a matter of fact, we are not aware of any other way to obtain these explicit bounds even in a square in the Euclidean plane with point sites. These tools also allow us to overcome the difficulty of a potential infinite accumulated error due to the possibility of  infinitely many sites/sites with infinitely many points/infinite dimension.  

\begin{thm}\label{thm:domInterval}
Let $X$ be a closed and convex subset of a normed space $(\widetilde{X},|\cdot|)$, and
 let $P,A\subseteq X$ be nonempty. Suppose that for all $x\in X$ the distance between $x$ and $P$ is attained. Then $\dom(P,A)$ is a union of line segments starting at the points of $P$. More precisely, given $p\in P$ and a unit vector $\theta$, let
\begin{multline*}
T(\theta,p)=\sup\{t\in [0,\infty): p+t\theta\in X\,\,\mathrm{and}\,\ 
 d(p+t\theta,p)\leq d(p+t\theta,A)\}.
\end{multline*}
Then
\begin{equation*}\label{eq:dom}
\dom(P,A)=\bigcup_{p\in P}\bigcup_{|\theta|=1}[p,p+T(\theta,p)\theta].
\end{equation*}
When $T(\theta,p)=\infty$, the notation $[p,p+T(\theta,p)\theta]$ means the ray $\{p+t\theta: t\in [0,\infty)\}$.
\end{thm}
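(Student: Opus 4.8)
The plan is to prove the asserted set equality by establishing the two inclusions separately, relying on a single elementary monotonicity observation together with the hypothesis that the distance to $P$ is attained. Throughout I would fix $p\in P$ and a unit vector $\theta$ and write $q_t=p+t\theta$ for $t\geq 0$; the computation underlying everything is simply that $d(q_t,p)=|t\theta|=t$, so that the defining condition $d(q_t,p)\leq d(q_t,A)$ reads $t\leq d(q_t,A)$. Call such a $t$ \emph{admissible} when, in addition, $q_t\in X$; thus $T(\theta,p)$ is the supremum of the admissible values.

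For the inclusion $\supseteq$ the key is a \emph{monotonicity} step: if $t$ is admissible, then every $q_s$ with $0\leq s\leq t$ already lies in $\dom(P,A)$. Indeed, the reverse triangle inequality gives $d(q_s,A)\geq d(q_t,A)-d(q_s,q_t)\geq t-(t-s)=s$, whereas $d(q_s,P)\leq d(q_s,p)=s$; hence $d(q_s,P)\leq d(q_s,A)$, so $q_s\in\dom(P,A)$, and $q_s\in[p,q_t]\subseteq X$ by convexity of $X$. Because $T(\theta,p)$ is a supremum, for every $s<T(\theta,p)$ there is an admissible $t\geq s$, and so the half-open segment $[p,p+T(\theta,p)\theta)$ lies in $\dom(P,A)$. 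For the endpoint, when $T(\theta,p)<\infty$, I would take admissible $t_n\nearrow T(\theta,p)$ and pass to the limit: since $X$ is closed and $\dom(P,A)=\{x\in X:d(x,P)-d(x,A)\leq 0\}$ is closed (both $x\mapsto d(x,P)$ and $x\mapsto d(x,A)$ being $1$-Lipschitz), the limit $p+T(\theta,p)\theta$ lies in $\dom(P,A)$ as well. This yields the whole closed segment, and hence one inclusion.

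For the reverse inclusion $\subseteq$, let $x\in\dom(P,A)$. Here the attainment hypothesis is exactly what is needed: it provides $p\in P$ with $d(x,p)=d(x,P)$. If $x=p$, then $x\in[p,p+T(\theta,p)\theta]$ trivially, as $t=0$ is always admissible. Otherwise I set $\theta=(x-p)/|x-p|$, a genuine unit vector, and $t_0=|x-p|$, so that $x=q_{t_0}$; it then remains to check that $t_0$ is admissible. Indeed $x\in X$, and $d(q_{t_0},p)=t_0=d(x,P)\leq d(x,A)=d(q_{t_0},A)$, where the middle inequality is precisely the hypothesis $x\in\dom(P,A)$. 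Consequently $t_0\leq T(\theta,p)$, whence $x\in[p,p+T(\theta,p)\theta]$, completing this inclusion and the proof.

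The conceptual heart — and the step I expect to require the most care — is the monotonicity argument, which converts a pointwise domination condition into a segment: one must observe that the triangle inequality forces the inequality $d(q_s,P)\leq d(q_s,A)$ to propagate \emph{inward} toward $p$ along the ray, so that $[p,p+T(\theta,p)\theta]$ is filled in even when the set of admissible parameters has gaps. The only other delicate point is the endpoint of the segment, where closedness of both $X$ and $\dom(P,A)$ is invoked. Finally, it is worth emphasizing that the attainment hypothesis is indispensable for $\subseteq$: without a minimizer $p\in P$ realizing $d(x,P)$, a point of $\dom(P,A)$ need not lie on any segment emanating from a point of $P$.
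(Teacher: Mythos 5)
Your proof is correct and follows essentially the same route as the paper's: the inward-propagation (monotonicity) step is exactly the paper's Lemma \ref{lem:segment} (you derive it from the $1$-Lipschitz property of $d(\cdot,A)$ rather than by chaining the triangle inequality through an arbitrary $a\in A$, but the computation is the same), the endpoint is handled by the same continuity/closedness argument, and the reverse inclusion uses the attainment hypothesis in the same way.
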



\begin{lem}\label{lem:StrictSegment}
Let $(\widetilde{X},|\cdot|)$ be a uniformly convex normed space and let 
$A\subseteq \widetilde{X}$ be nonempty. Suppose that $y,p\in \widetilde{X}$ satisfy $d(y,p)\leq d(y,A)$ and $d(p,A)>0$. Let $x\in [p,y)$. Let $\sigma\in (0,\infty)$ be arbitrary. Then $d(x,p)<d(x,A)-r$ for any $r>0$ satisfying
\begin{equation*}
r\leq\!\min\!\left\{\sigma,\frac{4d(p,A)}{10},d(y,x)\delta\left(\frac{d(p,A)}{10(d(x,A)+\sigma+d(y,x))}\right)\!\right\}\!.
\end{equation*}
\end{lem}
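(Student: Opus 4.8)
The plan is to fix an arbitrary $\alpha\in A$ and prove the single-site margin estimate $d(x,\alpha)\ge d(x,p)+r$; taking the infimum over $\alpha$ then yields $d(x,A)\ge d(x,p)+r$, and the slack built into the constants will upgrade this to the strict inequality claimed. Write $u=x-\alpha$ and $w=y-x$, and record three elementary facts that set the stage. First, since $x\in[p,y)$ lies on the segment, $|y-p|=|x-p|+|y-x|$ and the unit vectors $(x-p)/|x-p|$ and $w/|w|$ coincide. Second, the hypothesis $d(y,p)\le d(y,A)$ gives $|y-\alpha|\ge|y-p|$ for every $\alpha$, whence $|x-\alpha|\ge|y-\alpha|-|y-x|\ge|x-p|$; in particular $d(x,p)\le d(x,A)$, so $x$ already lies weakly in the cell and the margin $g:=|x-\alpha|-|x-p|$ is nonnegative. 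Third, if $|x-\alpha|\ge d(x,A)+\sigma$ then $g\ge\sigma\ge r$ for free, so it remains to treat the ``near'' sites with $|x-\alpha|<d(x,A)+\sigma$; this is exactly where the constraint $r\le\sigma$ is consumed.

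For a near site I would bound $g$ from below in two independent ways and then balance them through $\mu:=\bigl|\,u/|u|-w/|w|\,\bigr|$, the deviation of $u$ from the backward ray direction. The first bound is a pure triangle estimate: writing $p-\alpha=u-|x-p|\,w/|w|=|x-p|\bigl(u/|u|-w/|w|\bigr)+(|u|-|x-p|)\,u/|u|$ and using $d(p,A)\le|p-\alpha|$ together with $|u|\ge|x-p|$ gives
\begin{equation*}
g\ \ge\ d(p,A)-|x-p|\,\mu .
\end{equation*}
The second comes from uniform convexity, via the quantitative reverse triangle inequality: for nonzero $a,b$ one has $|a+b|\le|a|+|b|-2\min\{|a|,|b|\}\,\delta(\theta)$ with $\theta=\bigl|a/|a|-b/|b|\bigr|$, obtained (say for $|a|\le|b|$) by writing $a+b=|a|\bigl(a/|a|+b/|b|\bigr)+(|b|-|a|)\,b/|b|$ and inserting $\bigl|a/|a|+b/|b|\bigr|\le 2(1-\delta(\theta))$ from Definition \ref{def:UniformlyConvex}. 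Applying this with $a=u,\ b=w$, so that $a+b=y-\alpha$, and combining with $|y-\alpha|\ge|y-p|=|x-p|+|y-x|$, I obtain
\begin{equation*}
g\ \ge\ 2\min\{|x-\alpha|,|y-x|\}\,\delta(\mu).
\end{equation*}

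The two estimates are complementary: when $\alpha$ is nearly on the backward ray ($\mu$ small, the degenerate configuration in which $p,x,\alpha$ are almost collinear with $\alpha$ just beyond $p$) the first bound is close to $d(p,A)$ and beats $r$ because $r\le 4d(p,A)/10$; when $\mu$ is bounded below, the second bound is controlled by $\delta(\mu)$. To splice them I would lower-bound $\mu$ precisely in the regime where the first bound fails, namely $\mu>(d(p,A)-r)/|x-p|\ge\tfrac{6}{10}\,d(p,A)/|x-p|$, and feed this into the second bound; using $|x-p|\le d(x,A)$ and $|x-\alpha|<d(x,A)+\sigma$ to fit everything under the common scale $d(x,A)+\sigma+d(y,x)$, the factor $10$ in the argument of $\delta$ and the monotonicity of $\delta$ give $\delta(\mu)\ge\delta\bigl(d(p,A)/(10(d(x,A)+\sigma+d(y,x)))\bigr)$, so the second bound dominates $d(y,x)\,\delta(\cdots)\ge r$.

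The step I expect to be the main obstacle is making this crossover airtight while keeping the clean factor $d(y,x)$ (rather than $\min\{|x-\alpha|,d(y,x)\}$) in the final estimate. When $|x-\alpha|<d(y,x)$ the reverse triangle inequality only supplies the smaller factor $2|x-\alpha|$, and closing the gap then seems to require either a short case analysis (exploiting that in this sub-regime $|x-\alpha|$ is itself bounded below in terms of $d(p,A)$) or the standard fact that $\epsilon\mapsto\delta(\epsilon)/\epsilon$ is nondecreasing for the modulus of convexity; tuning the explicit constants $1/10$ and $4/10$ so that the ``small $\mu$'' and ``large $\mu$'' ranges genuinely overlap is the delicate bookkeeping. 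The remaining points are routine: $x=p$ is handled separately (there one needs only $d(p,A)>r$, guaranteed by $r\le 4d(p,A)/10$); the possible non-attainment of the distance to $A$ is precisely why the argument is run per $\alpha$ and then passed to the infimum, with $\sigma$ absorbing the gap; and the strict inequality is extracted from the factor-of-several slack present in both lower bounds.
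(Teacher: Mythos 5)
Your strategy is correct in outline but takes a genuinely different route from the paper's. The paper fixes a single near-minimizer $a$ with $d(x,a)<d(x,A)+r$ and splits on the size of the Clarkson angle $\alpha(x-y,a-y)$ at the vertex $y$: for a large angle it applies Clarkson's strong triangle inequality (Theorem \ref{thm:Clarkson}) to $x_1=y-x$, $x_2=x-a$, which delivers the term $2\delta(\alpha(x-y,a-y))\,d(y,x)$ with the \emph{full} coefficient $d(y,x)$ attached to its own angle; for a small angle it projects $a$ onto the ray from $y$ through $p$ and concludes by a collinearity computation. Your two bounds play exactly these two roles, but with the angle $\mu$ measured at $x$ and with the elementary $\min$-form of the convexity inequality in place of Clarkson's. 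The cost is precisely the obstacle you flag, and of your two proposed repairs only the second one works. The case analysis via $|x-\alpha|\ge d(x,A)\ge d(p,A)/2$ does not close the regime $|x-\alpha|<d(y,x)$ on its own: it reduces the required inequality to $\delta(\epsilon_0)\ge 4/10$ or to $d(p,A)\ge d(y,x)$, neither of which is available. The monotonicity of $\epsilon\mapsto\delta(\epsilon)/\epsilon$ does work, and cleanly: failure of the collinear bound gives $\mu>\tfrac{6}{10}d(p,A)/|x-p|\ge\tfrac{6}{10}d(p,A)/|x-\alpha|\ge 6\epsilon_0$ (with $\epsilon_0$ the argument of $\delta$ in \eqref{eq:EpsilonUniform}, using $|x-p|\le d(x,A)\le|x-\alpha|<d(x,A)+\sigma$), whence $\delta(\mu)\ge(\mu/\epsilon_0)\delta(\epsilon_0)$ and $2|x-\alpha|\delta(\mu)>12\,(d(x,A)+\sigma+d(y,x))\,\delta(\epsilon_0)\ge 12r$; the awkward factor $|x-\alpha|$ cancels. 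Invoking this extra property is legitimate even though the paper only assumes $\delta$ increasing: every admissible $\delta$ is dominated by the canonical modulus of convexity, so you may prove the lemma for the canonical modulus (for which $\delta(\epsilon)/\epsilon$ is nondecreasing) and deduce it for any smaller $\delta$. What Clarkson's inequality buys the paper is exactly the avoidance of this extra fact and of the $\min\{|x-\alpha|,d(y,x)\}$ case split.

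Two bookkeeping points remain. Arguing per $\alpha$ and passing to the infimum yields only the non-strict $d(x,p)\le d(x,A)-r$; to get strictness either rerun the collinear branch with a threshold slightly above $r$ (all other branches already give a margin of at least $2r$), or imitate the paper and work with one $a$ satisfying $d(x,a)<d(x,A)+r$, which turns the slack into a strict inequality directly. Also note, as the paper does, that the argument of $\delta$ stays in $[0,2]$: $\epsilon_0\le 0.2$ because $d(p,A)\le 2d(x,A)$, and $\mu\le 2$ automatically, so the crossover threshold never leaves the domain of $\delta$.
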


The proof of Lemma \ref{lem:StrictSegment} is based on the  strong triangle inequality of Clarkson 
\cite[Theorem 3]{Clarkson}. It is interesting to note that although  this inequality was formulated in the very famous paper \cite{Clarkson} of Clarkson, it seems that it has been almost totally forgotten, and in fact, despite a comprehensive search we have made in the literature, we have found evidences to its existence only in \cite{Clarkson} and later in  \cite{Plant}.

\section{Counterexamples}\label{sec:CounterExamples}
\begin{figure*}
\begin{minipage}[t]{0.5\textwidth}
\begin{center}
{\includegraphics[scale=0.8]{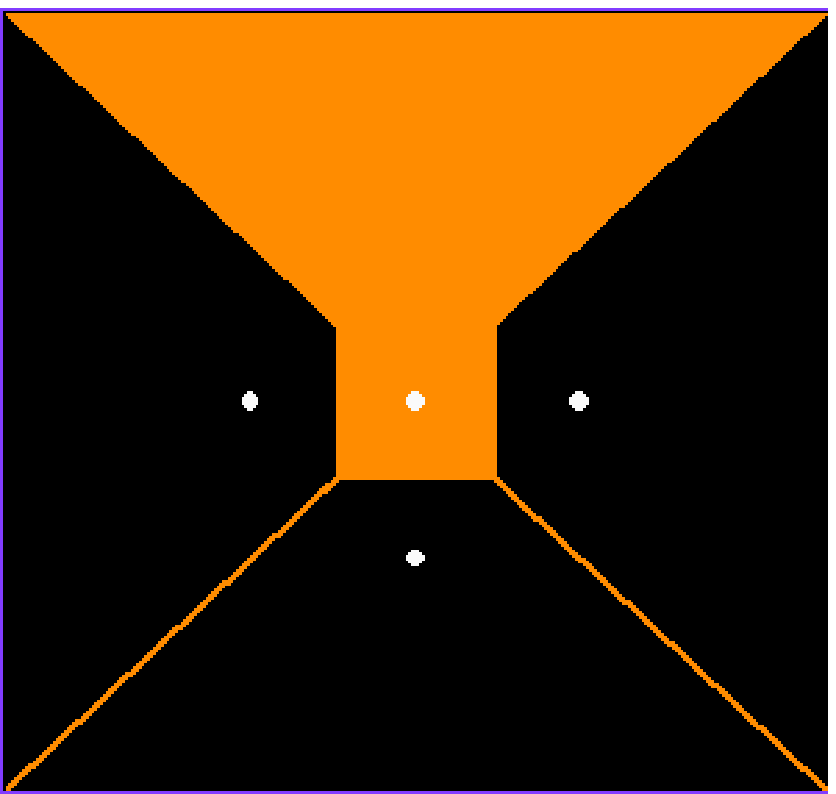}}

\end{center}
 \caption{Four sites in a square in $(\R^2,\ell_{\infty})$. The cell of $P_1=\{(0,0)\}$ is displayed. The other sites 
 are $\,P_2=\{(2,0)\}$,$\,P_3=\{(-2,0)\}$, $\,P_4=\{(0,-2)\}$.}
\label{fig:InStability000} \label{page:InStability000}
\end{minipage}%
\hfill
\begin{minipage}[t]{0.5\textwidth}
\begin{center}
{\includegraphics[scale=0.8]{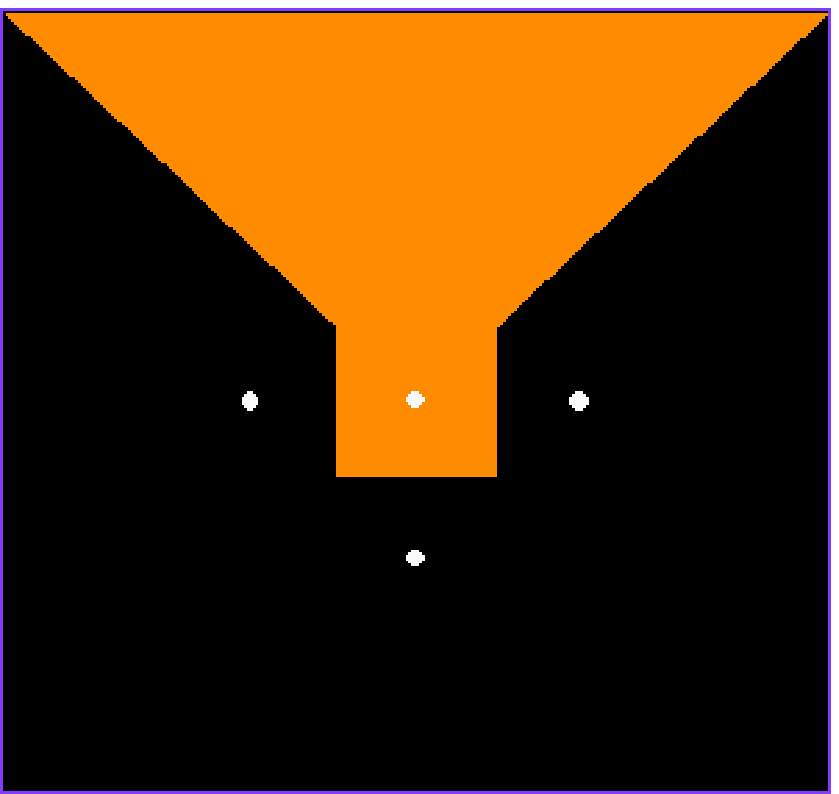}}
\end{center}
 \caption{Now either $P_4$ is the  square $[-\beta,\beta]\times [-2-\beta,-2+\beta]$  or $P_1=\{(\beta,\beta)\}$, $\beta>0$ arbitrary small. The two lower rays have disappeared. No stability.}
\label{fig:InStability001}
\end{minipage}%
\hfill
\begin{minipage}[t]{0.5\textwidth}
\begin{center}
{\includegraphics[scale=0.8]{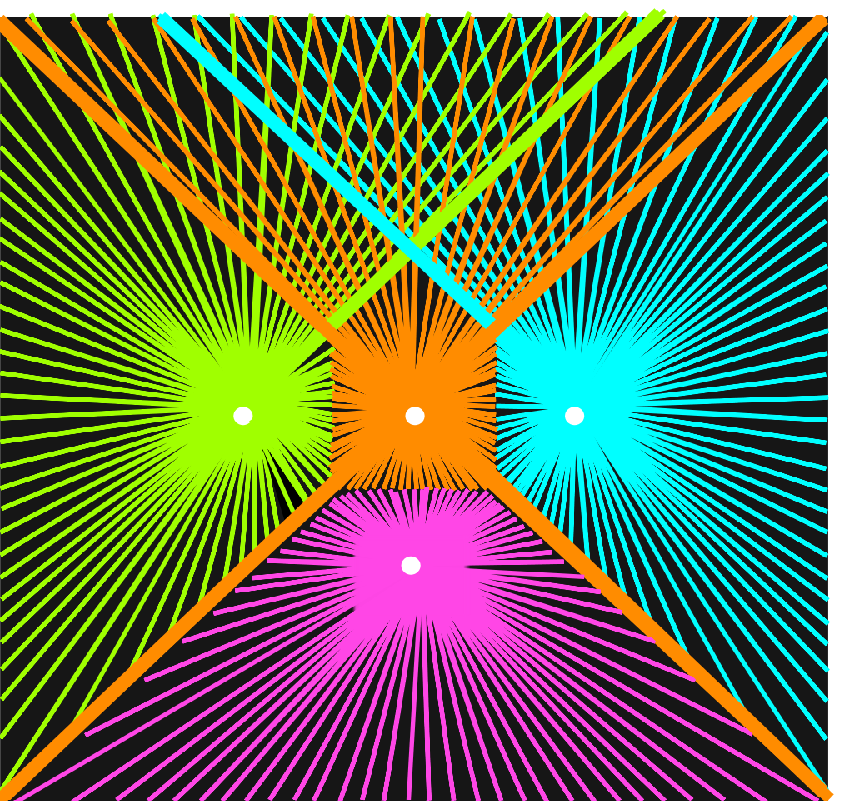}}
\end{center}
 \caption{The full diagram of Figure \ref{fig:InStability000}. Note the large intersection between cells 1,2, and 3. For emphasizing this intersection, each cell  is represented as a union of rays (see Theorem~ \ref{thm:domInterval} for more information) and some rays were  emphasized.}
\label{fig:InStability0Full} \label{page:InStability000}
\end{minipage}%
\hfill
\begin{minipage}[t]{0.5\textwidth}
\begin{center}
{\includegraphics[scale=0.8]{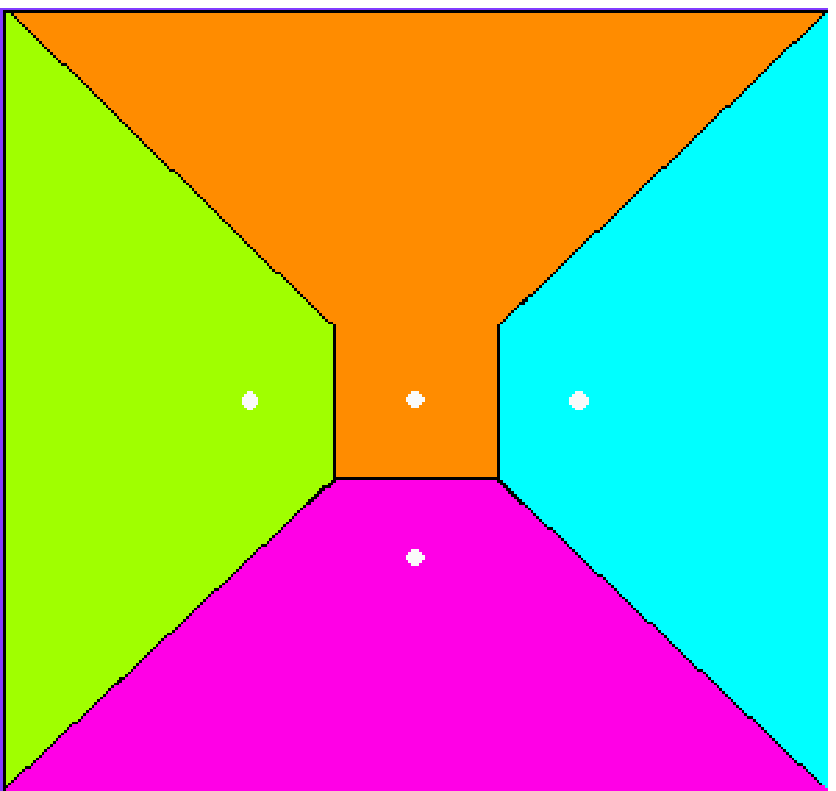}}
\end{center}
 \caption{The full diagram of Figure \ref{fig:InStability001} when $P_1=\{(\beta,\beta)\}$. Cells 2,3 have been (significantly) changed too.}
\label{fig:InStability1Full}
\end{minipage}%
\end{figure*}

In this section we mention several counterexamples which show that the assumptions in  Theorem \ref{thm:stabilityUC} are essential. 

If the space is not uniformly convex, then the Voronoi cells may not be stable as shown in Figures  
\ref{fig:InStability000}-\ref{fig:InStability1Full}.  Here the setting is point sites in a square in $\R^2$ with the max norm. 

The positive common lower bound expressed in \eqref{eq:eta} is necessary even in a square in the Euclidean plane. Consider $X=[-10,10]^2$, $P_{1,\beta}=\{(0,\beta)\}$ and $P_{2,\beta}=\{(0,-\beta)\}$, where $\beta\in [0,1]$. As long as $\beta>0$, the cell  $\dom(P_{1,\beta},P_{2,\beta})$ is the upper half of $X$. However, if   $\beta=0$, then $\dom(P_{1,0},P_{2,0})$ is $X$. A more interesting example occurs when considering in $X$ the rectangle $P_{1,\beta}=[-a,a]\times [-10,-\beta]$ and the line segment  $P_2=[-10,10]\times \{0\}$, where $a,\beta\in [0,1]$.  If $\beta=0$, then $d(P_{1,0},P_2)=0$, and the cell $\dom(P_{1,0},P_2)$ contains the rectangle $[-a,a]\times [0,10]$. However, if  $\beta>0$, then this cell does not contain this rectangle. 

The assumption expressed in \eqref{eq:BallRhokThm} is essential even in the Euclidean plane with two points. Indeed, given $\beta\geq 0$, let $P_{1,\beta}=\{(\beta,1)\}$ and $P_2=\{(0,-1)\}$. Then $\dom(P_{1,0},P_2)$ is the upper half space. However, if $\beta>0$,  then the half space $\dom(P_{1,\beta},P_2)$ contains points $(x,y)$ with $y\to -\infty$. Thus the Hausdorff distance between the  original cell $\dom(P_{1,0},P_2)$ and the perturbed one  $\dom(P_{1,\beta},P_2)$ is  $\infty$, so there can be no stability.

\section{Concluding Remarks}\label{sec:Concluding}
We conclude the paper with the following remarks.
First, despite the counterexamples mentioned above, some of the assumptions can be weakened, with some caution. For instance, under a certain compactness assumption and a certain geometric  condition that the sites should satisfy,  the main result can be generalized to normed spaces which are not uniformly  convex. A particular case of these assumptions is when the space is  $m$-dimensional with the $|\cdot|_{\infty}$ norm, the sites are positively separated, and no two points of different sites are on a hyperplane perpendicular to one of the standard axes.  

Second, an interesting (but not immediate) consequence of the main result is that it implies the stability of the (multi-dimensional) volume, namely a small change in the sites yields a small change in the volumes of the corresponding Voronoi cells. 

Third, it can be shown that the function $T$ defined in Theorem~  \ref{thm:domInterval}  also has a certain continuity property if the space is uniformly convex and this expresses a certain continuity property of the boundary of the cells. 

Fourth, the estimate for $\Delta$ from Remark \ref{rem:Delta} is definitely not optimal and it can be improved, but, as simple examples show, $\Delta$ cannot be too much larger and its estimate should be taken into account when  performing a relevant analysis. There is nothing strange in this and the situation is analogous to the familiar case of real valued functions. For instance, consider  $f:\R\to\R$ defined by $f(x)=0$, $\,x<0$,  $f(x)=(1/\beta)x,\,x\in [0,\beta]$, $f(x)=1,\,\,x>\beta$, where $\beta>0$ is given. Although $f$ is continuous, a ``large''  change near $x=0$ (of more than $\beta$) will cause a large change to $f$.

\section*{Acknowledgments}
I want to thank Dan Halperin, Maarten L{\"o}ffler,  and Simeon Reich for helpful discussions regarding some of the references. I also thank all the  reviewers for their comments. 

\bibliographystyle{amsplain}
\bibliography{biblio}
\newpage

\section{Appendix 1: proof of the main theorem}\label{sec:appendix}
This appendix is devoted for the proofs. In the sequel, unless otherwise stated, $(\widetilde{X},|\cdot|)$ is a normed space; $X$ is a nonempty closed and convex subset of $\wt{X}$; $P,P',A$, and $A'$ are nonempty subsets of $X$;  $(P_k)_{k\in K}$ and $(P'_k)_{k\in K}$ are two tuples of nonempty subsets of $X$ representing the sites and the perturbed ones respectively; it is assumed that the distance between any point $x\in X$ and each of the subsets $P,P',P_k,P'_k$ (for any  $k\in K$) is attained; for each $k\in K$, we let $A_k=\bigcup_{j\neq k}P_j$ and $A'_k=\bigcup_{j\neq k}P'_j$; unit vectors will usually be denoted by $\theta$ or $\phi$. 
 
The following simple lemma is needed for proving the representation theorem below it.
\begin{lem}\label{lem:segment}
Let $\emptyset\neq A\subseteq \widetilde{X}$. Suppose that $y,p\in \widetilde{X}$ satisfy $d(y,p)\leq d(y,A)$. Then
$d(x,p)\leq d(x,A)$ for any $x\in [p,y]$.
\end{lem}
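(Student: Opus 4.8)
The plan is to reduce everything to a single application of the triangle inequality after parametrizing the segment. Since $x\in[p,y]$, write $x=p+t(y-p)$ for some $t\in[0,1]$. Then the two quantities I must compare are completely explicit on the $p$-side: $d(x,p)=|x-p|=t\,|y-p|=t\,d(y,p)$, and similarly $|y-x|=|(1-t)(y-p)|=(1-t)\,d(y,p)$. The goal $d(x,p)\le d(x,A)$ is equivalent to showing $|x-a|\ge t\,d(y,p)$ for \emph{every} $a\in A$, since $d(x,A)=\inf_{a\in A}|x-a|$; so I would fix an arbitrary $a\in A$, prove the bound for it, and then pass to the infimum at the very end.

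The key step is a lower bound on $|x-a|$ obtained by anchoring $a$ against $y$ rather than against $x$. By the triangle inequality $|y-a|\le|y-x|+|x-a|$, hence $|x-a|\ge|y-a|-|y-x|$. Here the hypothesis enters: for any $a\in A$ we have $|y-a|=d(y,a)\ge d(y,A)\ge d(y,p)$, the last inequality being exactly the assumption $d(y,p)\le d(y,A)$. Substituting this and the value $|y-x|=(1-t)\,d(y,p)$ computed above gives
\begin{equation*}
|x-a|\ \ge\ d(y,p)-(1-t)\,d(y,p)\ =\ t\,d(y,p)\ =\ d(x,p).
\end{equation*}
Since $a\in A$ was arbitrary, taking the infimum over $a$ yields $d(x,A)\ge d(x,p)$, which is the claim.

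There is essentially no real obstacle here: the statement is a purely metric fact requiring only the triangle inequality and the infimum definition of $d(\cdot,A)$, and in particular no convexity or uniform convexity of $\wt{X}$ is used (consistent with the lemma being stated for a general normed space). The only points worth a moment's care are the two degenerate endpoints, which I would note in passing: at $t=0$ we have $x=p$ and the inequality $0=d(p,p)\le d(p,A)$ is automatic because $A\neq\emptyset$, while at $t=1$ we have $x=y$ and the conclusion is precisely the hypothesis. Thus the single displayed estimate, valid uniformly in $a$, covers the whole segment and completes the proof.
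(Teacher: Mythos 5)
Your proof is correct and is essentially the paper's argument: both rest on the segment identity $d(y,x)+d(x,p)=d(y,p)$ (which you express via the parametrization $d(x,p)=t\,d(y,p)$, $d(y,x)=(1-t)\,d(y,p)$), combined with the triangle inequality $d(y,a)\le d(y,x)+d(x,a)$ and the hypothesis, followed by taking the infimum over $a\in A$. No substantive difference.
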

\begin{proof}
Let $a\in A$. Since  $d(y,p)\leq d(y,a)$ and $x\in [p,y]$, we have
\begin{equation*}
d(y,x)+d(x,p)=d(y,p)\leq d(y,a) \leq d(y,x)+d(x,a).
\end{equation*}
Thus $d(x,p)\leq d(x,a)$ for each $a\in A$, so $d(x,p)\leq d(x,A)$.
\end{proof}
\begin{thm}{\bf(The representation theorem)}\label{thm:domIntervalApp}
 Suppose that for all $x\in X$ the distance between $x$ and $P$ is attained. Then $\dom(P,A)$ is a union of line segments starting at the points of $P$. More precisely, given $p\in P$ and a unit vector $\theta$, let
\begin{equation}\label{eq:Tdef}
T(\theta,p)=\sup\{t\in [0,\infty): p+t\theta\in X\,\,\mathrm{and}\,\,
 d(p+t\theta,p)\leq d(p+t\theta,A)\}.
\end{equation}
Then
\begin{equation}\label{eq:dom}
\dom(P,A)=\bigcup_{p\in P}\bigcup_{|\theta|=1}[p,p+T(\theta,p)\theta].
\end{equation}
When $T(\theta,p)=\infty$, the notation $[p,p+T(\theta,p)\theta]$ means the ray $\{p+t\theta: t\in [0,\infty)\}$.
\end{thm}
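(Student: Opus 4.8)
The plan is to prove the claimed formula for $\dom(P,A)$ by establishing the two inclusions separately, using Lemma \ref{lem:segment} as the engine for one direction and the attainment hypothesis for the other. The guiding observation is that, for a fixed $p\in P$, the predicate ``$d(\cdot,p)\le d(\cdot,A)$'' propagates along rays emanating from $p$: once it holds at some point $p+s\theta$, Lemma \ref{lem:segment} forces it to hold at every intermediate point of $[p,p+s\theta]$. Combined with the convexity of $X$, this shows that for each fixed $p$ and unit $\theta$ the set $S(\theta,p)=\{t\ge 0: p+t\theta\in X,\ d(p+t\theta,p)\le d(p+t\theta,A)\}$ is an interval of the form $[0,T(\theta,p)]$ or $[0,T(\theta,p))$ containing $0$ (indeed $0\in S(\theta,p)$, since $p\in X$ and $d(p,p)=0\le d(p,A)$). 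Hence each $[p,p+T(\theta,p)\theta]$ is a genuine segment or ray issuing from $p$, and both directions reduce to understanding this interval.

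For the inclusion $\supseteq$, I would fix $p$ and $\theta$ and take any $x=p+t\theta$ with $t\in[0,T(\theta,p)]$ (or $t\in[0,\infty)$ when $T(\theta,p)=\infty$). If $t<T(\theta,p)$, then $t$ is not an upper bound of $S(\theta,p)$, so there is $s\in S(\theta,p)$ with $s>t$; applying Lemma \ref{lem:segment} with $y=p+s\theta$ yields $d(x,p)\le d(x,A)$, whence $d(x,P)\le d(x,p)\le d(x,A)$, while $x\in X$ by convexity, so $x\in\dom(P,A)$. The remaining case is the right endpoint $x=p+T(\theta,p)\theta$ when $T(\theta,p)<\infty$: here I would pick $t_n\in S(\theta,p)$ with $t_n\uparrow T(\theta,p)$ and pass to the limit, using that $X$ is closed (so the limit point lies in $X$) and that both $u\mapsto d(u,p)$ and $u\mapsto d(u,A)$ are continuous (the latter being $1$-Lipschitz) to preserve the inequality $d(\cdot,p)\le d(\cdot,A)$ in the limit.

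For the inclusion $\subseteq$, I would take $x\in\dom(P,A)$, so $x\in X$ and $d(x,P)\le d(x,A)$. This is exactly where the attainment hypothesis is used: there exists $p\in P$ with $d(x,p)=d(x,P)\le d(x,A)$. If $x=p$ we are done, since $p=p+0\cdot\theta$ lies in every segment $[p,p+T(\theta,p)\theta]$. Otherwise set $\theta=(x-p)/|x-p|$ and $t=|x-p|$, so that $x=p+t\theta$; then $p+t\theta=x\in X$ and $d(p+t\theta,p)=d(x,p)\le d(x,A)=d(p+t\theta,A)$, i.e.\ $t\in S(\theta,p)$, forcing $t\le T(\theta,p)$ and hence $x\in[p,p+T(\theta,p)\theta]$.

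I expect the only genuinely non-formal step to be the treatment of the supremum in the $\supseteq$ direction, namely verifying that the \emph{endpoint} $p+T(\theta,p)\theta$ is retained in $\dom(P,A)$; this is the closedness/continuity limit argument above and is where the closedness of $X$ is essential. Everything else is a direct unwinding of the definitions, and the potential worry that $S(\theta,p)$ might be a scattered set (rather than an honest interval ending at its supremum) is settled cleanly by Lemma \ref{lem:segment}.
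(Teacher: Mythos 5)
Your proposal is correct and follows essentially the same route as the paper's own proof: Lemma \ref{lem:segment} together with convexity of $X$ gives the inclusion $\supseteq$ on the half-open segment, the endpoint $p+T(\theta,p)\theta$ is recovered by the same continuity-of-distance and closedness-of-$X$ limit argument, and the attainment hypothesis yields the inclusion $\subseteq$ exactly as in the paper. No gaps.
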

\vspace*{0.2cm}
\begin{proof}
Given $p\in P$ and $|\theta|=1$, let $T(\theta,p)$ be defined as in \eqref{eq:Tdef}.
Obviously the set which defines $T$ is nonempty and $T(\theta,p)\geq 0$.  If $T(\theta,p)=0$, then $[p,p+T(\theta,p)\theta]\subseteq \dom(P,A)$. Otherwise, the segment $[p,p+s\theta]$ is contained in
$\dom(P,A)$ for any $s\in [0,T(\theta,p))$ by the definition of $T(\theta,p)$ and Lemma \ref{lem:segment}.  If $T(\theta,p)=\infty$, then this simply means that the ray $\{p+t\theta: t\in [0,\infty)\}$   is contained in $\dom(P,A)$, so  $[p,p+T(\theta,p)\theta]\subseteq \dom(P,A)$ by our notation. Otherwise, $T(\theta,p)$ is finite. To see that $p+T(\theta,p)\theta\in \dom(P,A)$, we simply use the fact that the function $t\mapsto d(p+t\theta,p)-d(p+t\theta,A)$ is non-positive for all $t\in [0,T(\theta,p))$, and that it is continuous. All the points are in $X$  because it is closed and convex. Thus $\bigcup_{p\in P}\bigcup_{|\theta|=1}[p,p+T(\theta,p)\theta]\subseteq \dom(P,A)$.

Now let $y\in \dom(P,A)$. By assumption, $d(y,P)=d(y,p)$ for some $p\in P$. If $y=p$, then $y$ is in the union $\bigcup_{q\in P}\bigcup_{|\theta|=1}[q,q+T(\theta,q)\theta]$. Otherwise, let $t=d(y,p)$ and
$\theta_0=(y-p)/t$. Then $\theta_0\in \{\theta: |\theta|=1\}$. In order to show that $y$ is in the union $\bigcup_{q\in P}\bigcup_{|\theta|=1}[q,q+T(\theta,q)\theta]$, it suffices to show that $y\in [p,p+T(\theta_0,p)\theta_0]$. Since $d(y,p)\leq d(y,A)$ and $y=p+t\theta\in X$, this is a simple consequence of  the definition of $T(\theta_0,p)$. 
\end{proof}

The following elementary lemma will be useful later. 
\begin{lem}\label{lem:InX}
Let $\wt{X}$ be a normed space. 
\begin{enumerate}[(a)]
\item\label{item:theta} Given $x,y,z\in \wt{X}$, $x\neq y, z\neq y$, 
\begin{equation}
\left|\frac{z-y}{|z-y|}-\frac{x-y}{|x-y|}\right|\leq \min\left\{2\frac{|z-x|}{|x-y|},2\frac{|z-x|}{|z-y|}\right\}.
\end{equation}
\item\label{item:B_H(x,r)X} Let $X\subseteq \wt{X}$ be nonempty and convex.  Let $p$ and $y$, $p\neq y$ be two points in $X$. Suppose that the open ball $B_H(p,R)=B(p,R)\cap H$, relative to the affine hull $H$ spanned by $X$, is contained in $X$. If $x\in (p,y)$, then $B_H(x,r)\subset X$ for any $r\leq \min\{d(x,y),d(x,p),Rd(x,y)/(2d(p,y))\}$.  
\end{enumerate}
\end{lem}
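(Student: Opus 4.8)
The two parts are independent and both reduce to elementary estimates, so I will treat them in turn.

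For part (a), the plan is the standard trick of comparing two normalized vectors. Writing $u=z-y$ and $v=x-y$ (both nonzero by hypothesis), I insert the intermediate term $v/|u|$ to split off the two sources of error:
\[
\frac{u}{|u|}-\frac{v}{|v|}=\frac{u-v}{|u|}+v\cdot\frac{|v|-|u|}{|u|\,|v|}.
\]
Taking norms and using the reverse triangle inequality $\bigl||v|-|u|\bigr|\le|u-v|$ gives
\[
\left|\frac{u}{|u|}-\frac{v}{|v|}\right|\le\frac{|u-v|}{|u|}+\frac{|u-v|}{|u|}=\frac{2|u-v|}{|u|}.
\]
Since $u-v=z-x$ and $|u|=|z-y|$, this is exactly $2|z-x|/|z-y|$. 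Interchanging the roles of $u$ and $v$ yields the bound $2|z-x|/|x-y|$, and taking the minimum of the two gives the claim. There is no real obstacle here beyond the bookkeeping.

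For part (b), the key idea is that, since $X$ is convex and contains both $y$ and the relative ball $B_H(p,R)$, it contains the convex hull of $\{y\}\cup B_H(p,R)$; the whole argument amounts to showing that $B_H(x,r)$ sits inside this hull. Write $x=(1-t)p+ty$ with $t=d(x,p)/d(p,y)\in(0,1)$, so that $1-t=d(x,y)/d(p,y)$. Given an arbitrary $w\in B_H(x,r)$, I define the rescaled point
\[
q=\frac{1}{1-t}\,(w-ty),
\]
which is an affine combination of $w$ and $y$ (the coefficients $1/(1-t)$ and $-t/(1-t)$ sum to $1$), hence lies in the affine hull $H$. A direct computation gives $q-p=(w-x)/(1-t)$, so
\[
|q-p|=\frac{|w-x|}{1-t}<\frac{r\,d(p,y)}{d(x,y)}.
\]
The third entry of the minimum defining $r$, namely $r\le R\,d(x,y)/(2d(p,y))$, forces $|q-p|<R/2<R$, so $q\in B_H(p,R)\subseteq X$. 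Finally $w=(1-t)q+ty$ exhibits $w$ as a convex combination of $q\in X$ and $y\in X$, whence $w\in X$ by convexity. Since $w\in B_H(x,r)$ was arbitrary, $B_H(x,r)\subseteq X$.

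The only points that require care are verifying that the rescaled point $q$ genuinely remains in the affine hull $H$ (so that it qualifies as a point of $B_H(p,R)$, not merely of $B(p,R)$), and tracking the strict-versus-weak inequalities so that the open ball $B_H(x,r)$ is carried into the open ball $B_H(p,R)$. I remark that the first two entries $d(x,y)$ and $d(x,p)$ in the minimum defining $r$ are not actually needed for this argument — the third entry alone suffices, in fact even without the factor $2$ — but they do no harm, and the stated (more restrictive) hypothesis only makes the conclusion easier to establish.
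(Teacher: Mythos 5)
Your proof of part (a) is correct and essentially the same computation as the paper's: both split the difference of the normalized vectors into a ``direction'' error and a ``length'' error and apply the reverse triangle inequality; the paper merely performs the split after putting everything over the common denominator $|z-y||x-y|$, and likewise obtains the second bound by symmetry. Part (b) is also correct, but there you take a genuinely different route. The paper invokes part (a): for $z\in B_H(x,r)$ it forms the direction $\theta_2=(z-y)/|z-y|$, uses (a) to show that $y+d(p,y)\theta_2$ lands in $B_H(p,R)$, and then places $z$ on the segment $[y,\,y+d(p,y)\theta_2]$ --- a step that needs $d(z,y)<d(p,y)$ and $z\neq y$, which is exactly where the entries $d(x,y)$ and $d(x,p)$ of the minimum are used. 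You instead apply the homothety centered at $y$ carrying $x$ to $p$, namely $w\mapsto q=y+(w-y)/(1-t)$ with $1-t=d(x,y)/d(p,y)$, verify $q\in B_H(p,R)$ directly from $|q-p|=|w-x|/(1-t)$, and recover $w$ as the convex combination $(1-t)q+ty$ of two points of $X$. This is cleaner: it makes part (b) independent of part (a), it never requires $z\neq y$ or any estimate on unit vectors, and, as you correctly observe, it shows that the single constraint $r\leq Rd(x,y)/(2d(p,y))$ (indeed even without the factor $2$) already yields the conclusion, so the first two entries of the minimum are redundant for the statement itself --- they are artifacts of the paper's particular argument. Since the lemma is only applied with the stated, more restrictive bound on $r$, nothing downstream is affected either way.
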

\begin{proof}
\begin{enumerate}[(a)]
\item From the triangle inequality,
\begin{multline*}
\left|\frac{z-y}{|z-y|}-\frac{x-y}{|x-y|}\right|=\left|\frac{|x-y|(z-y)-|z-y|(x-y)}{|z-y||x-y|}\right|=\\
\left|\frac{(|x-y|-|z-y|)(z-y)+|z-y|((z-y)-(x-y))}{|z-y||x-y|}\right|
\leq \frac{2|x-z||z-y|}{|z-y||x-y|}=\frac{2|x-z|}{|x-y|}.
\end{multline*}
The inequality $|(z-y)/|z-y|-(x-y)/|x-y||\leq 2|z-x|/|z-y|$ is proved in a similar way.
\item Let $z\in B_H(x,r)$. If $z=x$, then obviously $z\in X$. Now assume that $z\neq x$. By assumption $d(y,x)\geq r$ and $z\in B(x,r)$. Thus $y\neq z$. 
Let $\theta_1=(x-y)/|x-y|=(p-y)/|p-y|$ and $\theta_2=(z-y)/|z-y|$. The point $y+d(p,y)\theta_2$ is in $H$ since $[y,z]\subset H$. This point is also in $B(p,R)$ since 
\begin{multline*}
|(y+d(p,y)\theta_2)-p|=|(y+d(p,y)\theta_2)-(y+d(p,y)\theta_1)|
=d(p,y)|\theta_1-\theta_2|\\
\leq 2|p-y||z-x|/|x-y|<2|p-y|r/|x-y|\leq R
\end{multline*}
 by part \eqref{item:theta} and the choice of $r$. Therefore $y+d(p,y)\theta_2\in B_H(p,R)\subseteq X$. In addition, 
\begin{equation*}
d(z,y)\leq d(z,x)+d(x,y)<r+d(x,y)\leq d(p,x)+d(x,y)=d(p,y),
\end{equation*}
 and hence $z\in [y,y+d(p,y)\theta_2]\subseteq X$. 
\end{enumerate}
\end{proof}

The following technical proposition is central.
\begin{prop}\label{prop:stability}
Suppose that $d(P,A)>0$. Let $\epsilon>0$ be such that $\epsilon\leq d(P,A)/6$ and suppose that the following conditions hold:
\begin{multline}\label{eq:property}
\textnormal{There exists}\,\, \lambda\in (0,\epsilon)\,\, \textnormal{such that for each}\,\, p\in P, y\in \dom(P,A),\,x\in [p,y],\, \\
\textnormal{if}\,\, d(x,y)=\epsilon/2,\,\, \textnormal{then}\,\,d(x,p)\leq d(x,A)-\lambda.
\end{multline}
\begin{multline}\label{eq:property'}
\textnormal{There exists}\,\, \lambda'\in (0,\epsilon)\,\, \textnormal{such that for each}\,\, p'\in P',y'\in \dom(P',A'),\,x'\in [p',y']\, \\
\textnormal{if}\,\, d(x',y')=\epsilon/2,\,\, \textnormal{then}\,\,d(x',p')\leq d(x',A')-\lambda'.
\end{multline}
 Suppose that there are two positive numbers $M,M'$ such that
\begin{equation}\label{eq:MM'}
\sup\{T(\theta,p): p\in P,|\theta|=1\}\leq M,\quad  \sup\{T'(\theta',p'): p'\in P',|\theta'|=1\}\leq M',
\end{equation}
where $T'(\theta',p')$ is defined as in \eqref{eq:Tdef}, but with $A'$ instead of $A$. Let $\epsilon_1,  \epsilon_3$ be positive numbers satisfying
\begin{equation}\label{eq:epsilon_k}
\epsilon_1+\epsilon_3(3+4M/\epsilon)<\lambda/2, \quad
\epsilon_1+\epsilon_3(3+4M'/\epsilon)<\lambda'/2.
\end{equation}
If
\begin{equation}\label{eq:HausdorffEpsilon}
D(A,A')< \epsilon_1, \quad
D(P,P')< \epsilon_3,
\end{equation}
 then $D(\dom(P,A),\dom(P',A'))<\epsilon$.
\end{prop}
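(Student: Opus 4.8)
The plan is to bound the two one-sided Hausdorff distances separately, exploiting the complete symmetry of the hypotheses between the primed and unprimed data: \eqref{eq:property} mirrors \eqref{eq:property'}, the constants $M,M'$ and the inequalities \eqref{eq:epsilon_k} come in matching pairs, and both relations in \eqref{eq:HausdorffEpsilon} are symmetric. Hence it suffices to show that every $x\in\dom(P,A)$ lies within distance $\epsilon$ of $\dom(P',A')$, the reverse inclusion following by interchanging the roles of $(P,A,\lambda,M)$ and $(P',A',\lambda',M')$. So fix $x\in\dom(P,A)$. By the representation theorem (Theorem~\ref{thm:domIntervalApp}) I may write $x=p+t\theta$ with $p\in P$, $|\theta|=1$, $t=d(x,p)=d(x,P)$ and $0\le t\le T(\theta,p)$, and by \eqref{eq:MM'} the number $T(\theta,p)$ is finite, so that $y:=p+T(\theta,p)\theta$ is a genuine point of $\dom(P,A)$ (the supremum defining $T$ is attained, exactly as in the proof of Theorem~\ref{thm:domIntervalApp}).

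The technical heart is a \emph{margin-transfer} step: any $z\in X$ with $d(z,P)\le d(z,A)-\lambda$ automatically belongs to $\dom(P',A')$. Indeed, \eqref{eq:HausdorffEpsilon} and the definition of the Hausdorff distance give the two one-sided Lipschitz estimates $d(z,P')\le d(z,P)+\epsilon_3$ and $d(z,A')\ge d(z,A)-\epsilon_1$, whence
\[
d(z,P')\le d(z,P)+\epsilon_3\le d(z,A)-\lambda+\epsilon_3\le d(z,A')+\epsilon_1+\epsilon_3-\lambda<d(z,A'),
\]
the last inequality because \eqref{eq:epsilon_k} forces $\epsilon_1+\epsilon_3<\lambda$ (as $3+4M/\epsilon>1$). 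Since $z\in X$, this yields $z\in\dom(P',A')$.

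It remains to produce, near the given $x$, a point carrying margin $\lambda$. Property \eqref{eq:property} says precisely that every point of a site-segment lying at distance $\epsilon/2$ from a point of $\dom(P,A)$ has this margin. Thus, when $T(\theta,p)\ge\epsilon/2$ I would set $z:=p+\min\{t,\,T(\theta,p)-\epsilon/2\}\,\theta$; this $z$ lies on the segment $[p,y]\subseteq\dom(P,A)\subseteq X$ at distance $\epsilon/2$ from a point of $\dom(P,A)$ (namely $x+(\epsilon/2)\theta$ if $t\le T(\theta,p)-\epsilon/2$, and $y$ otherwise), so it has margin $\lambda$ with respect to $(P,A)$ and satisfies $d(x,z)\le\epsilon/2$. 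The margin-transfer step then places $z\in\dom(P',A')$, giving $d(x,\dom(P',A'))\le\epsilon/2$. In the residual case $T(\theta,p)<\epsilon/2$ the segment is too short to contain a margin point, but then $x$ is within $\epsilon/2$ of $p$, and choosing $p'\in P'$ with $d(p,p')<\epsilon_3$ (possible by \eqref{eq:HausdorffEpsilon}) one has $p'\in\dom(P',A')$ and $d(x,p')<\epsilon/2+\epsilon_3$. In either case the bound is uniform and strictly below $\epsilon$, which is the desired one-sided estimate.

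I expect the main obstacle to be the quantitative transfer, and in particular the reason the hypothesis \eqref{eq:epsilon_k} carries the amplification factor $3+4M/\epsilon$ rather than the bare $\epsilon_1+\epsilon_3<\lambda$ used above. The clean version is available only because the margin point $z$ already lies in $X$; if one instead insists on exhibiting the approximant explicitly as a point $p'+t\phi$ on a segment emanating from the perturbed site, then besides the base-point shift $d(p,p')<\epsilon_3$ one must control the angular deviation $|\theta-\phi|$, which by Lemma~\ref{lem:InX}\eqref{item:theta} is of order $\epsilon_3/t$ and is amplified by the segment length (at most $M$) into a positional error of order $M\epsilon_3/\epsilon$. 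Keeping this accumulated error below $\epsilon$ while still absorbing it into the margin $\lambda$ is precisely what leads to a bound of the shape $\epsilon_1+\epsilon_3(3+4M/\epsilon)<\lambda/2$, and one must also verify, via Lemma~\ref{lem:InX}\eqref{item:B_H(x,r)X}, that the transported point does not leave $X$. By contrast, the margin-transfer inequality and the $\epsilon/2$-density of margin points are comparatively routine. Finally, running the same argument with $(P',A',\lambda',M')$ gives $d(x',\dom(P,A))<\epsilon$ for every $x'\in\dom(P',A')$, and the two one-sided bounds combine to $D(\dom(P,A),\dom(P',A'))<\epsilon$.
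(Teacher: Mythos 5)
Your argument is correct, but the key quantitative step is genuinely different from the paper's, and in fact simpler. The paper performs the same symmetry reduction and also starts from the representation $y=p+t\theta$ of Theorem \ref{thm:domIntervalApp}, but in its main case ($t>\epsilon$ there) it exhibits the approximant explicitly as $y'=p'+(t-\epsilon/2)\theta'$ with $\theta'=(y-p')/|y-p'|$, a point of the segment $[p',y]$ (which is how it guarantees $y'\in X$), and then transfers the margin from $x=p+(t-\epsilon/2)\theta$ to $y'$; this forces it to control the angular deviation $|\theta-\theta'|<2\epsilon_3/\epsilon$ via Lemma \ref{lem:InX}\eqref{item:theta}, and the amplification of that deviation by the segment length $t\le M$ is precisely the source of the factor $3+4M/\epsilon$ in \eqref{eq:epsilon_k} --- your closing diagnosis of this point is exactly right. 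You instead keep the candidate point $z$ on the original segment inside $\dom(P,A)$, so $z\in X$ comes for free, and show that the margin $\lambda$ alone pushes $z$ into $\dom(P',A')$; this uses only $\epsilon_1+\epsilon_3<\lambda$ together with $\epsilon_3<\epsilon/6$, both of which follow from (and are much weaker than) \eqref{eq:epsilon_k}. Your case split on whether $T(\theta,p)\geq\epsilon/2$ (versus the paper's split on whether $t\leq\epsilon$) is sound in both branches: the witness point at distance $\epsilon/2$ beyond $z$ lies in $\dom(P,A)$ by the representation theorem, the supremum defining $T$ is attained as you note, and $p'\in\dom(P',A')$ trivially in the short-segment case, giving the uniform bound $\epsilon/2+\epsilon_3<\epsilon$ on both one-sided distances. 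What the paper's heavier route buys is an approximant presented in the form $p'+s\theta'$ on a ray emanating from the perturbed site, i.e., one compatible with the representation of the perturbed cell; what yours buys is brevity, no appeal to Lemma \ref{lem:InX}, and effectively better constants, since \eqref{eq:epsilon_k} could be weakened for your argument, which would propagate to a larger admissible $\Delta$ in the main theorem.
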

\begin{proof}
We first note that $5\epsilon\leq d(P',A')$. Indeed, let $p'\in P'$ and $a'\in A'$ be given.
By  \eqref{eq:HausdorffEpsilon} and \eqref{eq:epsilon_k} there are $p\in P$ and $a\in A$ such that $d(p,p')<\epsilon_3<\epsilon/2$ and $d(a',a)<\epsilon_1<\epsilon/2$, so
\begin{equation}\label{eq:d(P',A')}
6\epsilon\leq d(p,a)\leq d(p,p')+d(p',a')+d(a',a)< \epsilon+d(p',a'),
\end{equation}
which implies that $5\epsilon\leq d(P',A')$. We will show that there exists $\wt{\epsilon}\in (0,\epsilon)$ such that $D(\dom(P,A),\dom(P',A'))\leq \wt{\epsilon}<\epsilon$. In fact, we will show that we can take $\wt{\epsilon}:=\epsilon-\epsilon_3$. In order to prove that $D(\dom(P,A),\dom(P',A'))\leq \wt{\epsilon}$, it suffices to show that  for any $y\in \dom(P,A)$ there exists $y'\in \dom(P',A')$ such that $d(y,y')<\wt{\epsilon}$, and that for any $z'\in \dom(P',A')$ there exists $z\in \dom(P,A)$ such that $d(z',z)<\wt{\epsilon}$. We will show the first inequality, since the proof of the second one is  similar using \eqref{eq:property'},\eqref{eq:MM'}, and \eqref{eq:epsilon_k} with $\lambda'$ and $M'$. 

Let $y\in \dom(P,A)$. By Theorem \ref{thm:domIntervalApp} we  can write $y=p+t\theta$ for some $p\in P$, a unit vector  $\theta$, and
$t\in [0,T(\theta,p)]$. By \eqref{eq:MM'} we have $t\leq M$. By \eqref{eq:HausdorffEpsilon}  there is $p'\in P'$ such that  $d(p,p')<\epsilon_3$. Suppose first that $t\leq \epsilon$. 
Let $y'=y$. Then $y'\in X$, $d(y,y')<\epsilon-\epsilon_3$, and $d(y',p')\leq d(y',p)+d(p,p')< 2\epsilon$. In addition,
\begin{equation}\label{eq:4epsilonP'A'}
5\epsilon\leq d(P',A')\leq d(p',A')\leq d(p',y')+d(y',A')< 2\epsilon+d(y',A').
\end{equation}
Hence $d(y',p')< 2\epsilon< d(y',A')$ and $y'\in\dom(P',A')$.

Now assume that $\epsilon<t$.  Then $t=d(p,y)\leq d(p,p')+d(p',y)< \epsilon_3+|y-p'|$. Hence $0<t-\epsilon_3<|y-p'|$ since $t-\epsilon_3>t-\epsilon/2\geq \epsilon/2>0$.  Let $\theta'=(y-p')/|y-p'|$ and denote $\epsilon_2=2\epsilon_3/\epsilon$. Then, using Lemma \ref{lem:InX}\eqref{item:theta},  
\begin{equation}\label{eq:ThetaCont}
\left|\theta-\theta'\right|=\left|\frac{y-p}{|y-p|}-\frac{y-p'}{|y-p'|}\right|
\leq \frac{2|p-p'|}{|y-p|}<\frac{2\epsilon_3}{\epsilon}=\epsilon_2.
\end{equation}
Let $x=p+(t-0.5\epsilon)\theta$. Then $x\in [p,y]$ and $d(x,y)=\epsilon/2$, so by   \eqref{eq:property} we have $d(x,p)\leq d(x,A)-\lambda$. 
Let $y'=p'+(t-0.5\epsilon)\theta'$. Because $t-\epsilon/2<|y-p'|$ (see the discussion above \eqref{eq:ThetaCont}) we have  $y'\in [p',y]\subseteq X$. In addition, 
\begin{multline*}
d(y,y')\leq d(y,x)+d(x,y')
\leq\epsilon/2+(t-0.5\epsilon)|\theta-\theta'|+d(p,p')
<\epsilon/2+M\epsilon_2+\epsilon_3<\epsilon-\epsilon_3
\end{multline*}
by \eqref{eq:MM'} and \eqref{eq:epsilon_k}. Recalling that $d(y',A)\leq d(y',A')+D(A,A')$ holds in general, 
we have 
\begin{multline*}
d(y',p')\leq d(y',x)+d(x,p)+d(p,p')
<M\epsilon_2+\epsilon_3+d(x,A)-\lambda+\epsilon_3\\
\leq M\epsilon_2+\epsilon_3+d(x,y')+d(y',A')+D(A,A')
+\epsilon_3-\lambda\\
< 2M\epsilon_2+3\epsilon_3+\epsilon_1+d(y',A')-\lambda< d(y',A')-\lambda/2,
\end{multline*}
 where we used \eqref{eq:epsilon_k} and $\epsilon_2=2\epsilon_3/\epsilon$ in the last inequality. Thus $d(y',p')<d(y',A')$. Consequently, in  both cases we have $d(y,y')<\wt{\epsilon}:=\epsilon-\epsilon_3$ and $y'\in \dom(P',A')$, as  required. 
\end{proof}

\begin{prop}\label{prop:stabilityVor}
Suppose that $\inf\{d(P_k,P_j): j,k\in K,\,j\neq k\}>0$. Let $\epsilon>0$ be such that $\epsilon\leq\inf\{d(P_k,P_j): j,k\in K,\,j\neq k\}/6$. Suppose  that the following conditions hold:
\begin{multline}\label{eq:property_k}
\exists\,\lambda\in (0,\epsilon)\,\, \textnormal{such that for each}\,\, k\in K, p\in P_k,\,\,y\in \dom(P_k,A_k),\, \textnormal{and}\, x\in [p,y]\, \\
\textnormal{if}\,\,d(x,y)=\epsilon/2,\,\, \textnormal{then}\,\,d(x,p)\leq d(x,A_k)-\lambda.
\end{multline}
\begin{multline}\label{eq:property'_k}
\exists\,\lambda'\in (0,\epsilon)\,\, \textnormal{such that for each}\,\, k\in K, p'\in P'_k,\,y'\in \dom(P'_k,A'_k),\,\, \textnormal{and}\,\,x'\in [p',y'] \\
\textnormal{if}\,\, d(x',y')=\epsilon/2,\,\, \textnormal{then}\,\,d(x',p')\leq d(x',A'_k)-\lambda'.
\end{multline}
Let
\begin{equation*}\label{eq:R_k}
R_k=\dom(P_k,A_k), \quad
\quad R'_k=\dom(P'_k,A'_k).
\end{equation*}
Suppose that there are  $M,M'\in (0,\infty)$ such that for all $k\in K$,
\begin{equation}\label{eq:Mk}
\sup\{T_k(\theta,p): p\in P, |\theta|=1\}\leq M,\quad\sup\{T'_k(\theta',p'): p'\in P',|\theta'|=1\}\leq M',
\end{equation}
where $T_k(\theta,p)$ and $T'_k(\theta',p')$ are defined as in \eqref{eq:Tdef}, but with $A_k$ and $A'_k$ instead of $A$.
Let  $\epsilon_4$ be a positive number satisfying
\begin{equation}\label{eq:epsilon24M}
4(1+M/\epsilon)\epsilon_4<\lambda/2, \quad 4(1+M'/\epsilon)\epsilon_4<\lambda'/2.
\end{equation}
If
\begin{equation}\label{eq:epsilon24k}
  \quad D(P_k,P'_k)< \epsilon_4 \,\,\,\,\forall k\in K,
\end{equation}
 then $D(R_k,R'_k)<\epsilon$ for each $k\in K$.
\end{prop}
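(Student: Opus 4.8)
The plan is to reduce this multi-cell statement to the two-set Proposition \ref{prop:stability}, applied once for each index. Fix $k\in K$ and put $P=P_k$, $A=A_k$, $P'=P'_k$, $A'=A'_k$, so that $\dom(P,A)=R_k$ and $\dom(P',A')=R'_k$. First I would verify that the hypotheses of Proposition \ref{prop:stability} hold, uniformly in $k$. Writing $\eta:=\inf\{d(P_k,P_j):j,k\in K,\,j\neq k\}$, which is positive by assumption, we have $d(P_k,A_k)=\inf_{j\neq k}d(P_k,P_j)\geq\eta>0$, and since $\epsilon\leq\eta/6$ also $\epsilon\leq d(P_k,A_k)/6$. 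The separation conditions \eqref{eq:property} and \eqref{eq:property'} are precisely \eqref{eq:property_k} and \eqref{eq:property'_k} read at the single index $k$, with the same $\lambda,\lambda'$, and the uniform reach bounds \eqref{eq:MM'} are exactly \eqref{eq:Mk}, with the same $M,M'$. The attainment hypothesis needed by the representation theorem (Theorem \ref{thm:domIntervalApp}) concerns only $P_k$ and $P'_k$, which is part of the standing assumption. Crucially, none of the constants $\eta,\lambda,\lambda',M,M'$ depends on $k$.

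The one new ingredient is to control the perturbation of the competitor sets $A_k=\bigcup_{j\neq k}P_j$ and $A'_k=\bigcup_{j\neq k}P'_j$. For this I would first record the elementary monotonicity of the Hausdorff distance under unions, namely $D\!\left(\bigcup_j B_j,\bigcup_j C_j\right)\leq\sup_j D(B_j,C_j)$ for any families of nonempty sets; this is immediate from Definition \ref{def:Hausdorff}, since each point of $B_{j_0}$ lies within $D(B_{j_0},C_{j_0})$ of $C_{j_0}\subseteq\bigcup_j C_j$, and symmetrically. Applying it with $B_j=P_j$ and $C_j=P'_j$ over $j\neq k$ and invoking \eqref{eq:epsilon24k} yields $D(A_k,A'_k)\leq\sup_{j\neq k}D(P_j,P'_j)\leq\epsilon_4$.

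With this in hand I would instantiate Proposition \ref{prop:stability} by choosing $\epsilon_1=\epsilon_3=\epsilon_4$. Then condition \eqref{eq:epsilon_k} becomes $\epsilon_4+\epsilon_4(3+4M/\epsilon)=4(1+M/\epsilon)\epsilon_4<\lambda/2$, together with the analogous inequality for $M',\lambda'$, which is exactly hypothesis \eqref{eq:epsilon24M}; and the perturbation hypothesis \eqref{eq:HausdorffEpsilon} asks for $D(A_k,A'_k)<\epsilon_1$ and $D(P_k,P'_k)<\epsilon_3$, the latter being \eqref{eq:epsilon24k} and the former the union bound just established. Proposition \ref{prop:stability} then delivers $D(R_k,R'_k)=D(\dom(P_k,A_k),\dom(P'_k,A'_k))<\epsilon$, and since $\epsilon_1,\epsilon_3$ were chosen independently of $k$, this holds simultaneously for all $k\in K$, which is the assertion.

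I expect the only delicate point --- and the place where the generality of infinitely many sites really bites --- to be the strictness in the union bound: the supremum over the possibly infinite index set $\{j\neq k\}$ gives $D(A_k,A'_k)\leq\epsilon_4$ and not necessarily the strict inequality $D(A_k,A'_k)<\epsilon_1=\epsilon_4$ demanded by \eqref{eq:HausdorffEpsilon}. Since \eqref{eq:epsilon24M} is strict, this is harmless and can be handled in either of two ways: enlarge $\epsilon_1$ and $\epsilon_3$ slightly above $\epsilon_4$ while keeping \eqref{eq:epsilon_k} valid (restoring the strict inequalities in \eqref{eq:HausdorffEpsilon}), or observe that the concluding estimate in the proof of Proposition \ref{prop:stability} already contains a strict inequality coming from the term $d(p,p')<\epsilon_3$, so that a non-strict $D(A,A')\leq\epsilon_1$ still produces the strict conclusion $d(y',p')<d(y',A')-\lambda/2$. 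The remaining content is entirely bookkeeping, the real work having been done in Proposition \ref{prop:stability}.
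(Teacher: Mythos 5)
Your proposal is correct and follows essentially the same route as the paper: reduce to Proposition \ref{prop:stability} cell by cell via the union bound $D(A_k,A'_k)\leq\sup_{j\neq k}D(P_j,P'_j)\leq\epsilon_4$, and handle the resulting non-strict inequality by enlarging the parameter slightly --- the paper's proof does exactly this, taking $\epsilon_3=\epsilon_4$ and $\epsilon_1=\epsilon_4+r$ with $r$ a quarter of the slack left in \eqref{eq:epsilon24M}, which is the first of the two fixes you describe.
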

\begin{proof}
From the condition $\epsilon\leq d(P_k,P_j)/6$ for each $k\neq j$ it follows that $\epsilon\leq d(P_k,\bigcup_{j\neq k}P_j)/6$. In addition, the assumption $D(P_k,P'_k)<\epsilon_4$ for all $k\in K$ implies  $D(\bigcup_{i\in I}P_i,\bigcup_{i\in I}P'_i)\leq\epsilon_4$ for any $I\subseteq K$, and in particular this is true for $I=K\backslash\{k\}$, i.e.,  $D(A_k,A'_k)\leq \epsilon_4$. Indeed, we only have to observe that given $y\in \bigcup_{i\in I} P_i$, there exists $j\in I$ such that $y\in P_j$, so
\begin{equation*}
d(y,\bigcup_{i\in I}P'_i)\leq d(y,P'_j)\leq D(P_j,P'_j)<\epsilon_4,
\end{equation*}
and the inequality follows. Let $\epsilon_3:=\epsilon_4$ and  $\epsilon_1:=\epsilon_4+r$, where $r$ is defined as the minimum of  the two  positive numbers  $(\lambda/2-4\epsilon_4(1+M/\epsilon))/4$ and $(\lambda'/2-4\epsilon_4(1+M'/\epsilon))/4$. Then    $\epsilon_1+\epsilon_3(3+4M/\epsilon)<\lambda/2$ and $\epsilon_1+\epsilon_3(3+4M'/\epsilon)<\lambda'/2$.
In addition $D(P_k,P'_k)<\epsilon_3$ and  $D(A_k,A'_k)\leq\epsilon_4<\epsilon_1$ for each $k\in K$. Thus all the  conditions of Proposition \ref{prop:stability} are satisfied for each $k\in K$ with $P=P_k$ and $A=A_k$. Hence $D(R_k,R'_k)<\epsilon$.
\end{proof}

The following lemma provides a simple sufficient condition for \eqref{eq:MM'} to hold.
\begin{lem}\label{lem:BallRho}
Suppose that the following condition holds:
\begin{equation}\label{eq:BallRho}
\exists \rho\in (0,\infty)\,\, \textnormal{such that}\,\,\forall x\in X\,\,\textnormal{the open ball}\,\, B(x,\rho)\,\,\textnormal{intersects} \,\,A.
\end{equation}
Then
\begin{enumerate}[(a)]
\item\label{item:Trho} $\sup\{T(\theta,p): |\theta|=1, p\in P\}\leq \rho$.
\item\label{item:RhoAA'} $d(x,A)< \rho$ and $d(x,A')<\rho+D(A,A')$ for each $x\in X$.
\item\label{item:T'rho} $\sup\{T'(\theta',p'): |\theta'|=1, p'\in P'\}\leq \rho+D(A,A')$.
\end{enumerate}
In the same way, if $A_k=\bigcup_{j\neq k}P_j$ for each $k\in K$ and the following condition holds:
\begin{multline}\label{eq:BallRhok}
\exists \rho\in (0,\infty)\,\, \textnormal{such that for all}\,\,k\in K\,\,\textnormal{and for all}\,\, x\in X\,\,\\
\textnormal{the open ball}\,\, B(x,\rho)\,\,\textnormal{intersects} \,\,A_k,
\end{multline}
then all the above claims remain true with $T_k,P_k,A_k$ instead of  $T,P,A$.
\end{lem}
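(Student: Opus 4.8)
The plan is to prove part (b) first, since both (a) and (c) then follow almost immediately. The one elementary observation driving everything is that for a \emph{unit} vector $\theta$ and a point of the form $x=p+t\theta$ with $t\geq 0$, one has $d(x,p)=|t\theta|=t$. This turns the dominance condition $d(p+t\theta,p)\leq d(p+t\theta,A)$ appearing in the definition \eqref{eq:Tdef} of $T$ into a direct numerical bound on the parameter $t$.

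For part (b), the inequality $d(x,A)<\rho$ is immediate from hypothesis \eqref{eq:BallRho}: since the \emph{open} ball $B(x,\rho)$ meets $A$, there is $a\in A$ with $d(x,a)<\rho$, whence $d(x,A)\leq d(x,a)<\rho$. For the second inequality I would first record the standard relation $d(x,A')\leq d(x,A)+D(A,A')$, valid for any point $x$ and any two nonempty sets. This is obtained by taking $a\in A$ nearly realizing $d(x,A)$, then using $d(a,A')\leq D(A,A')$ (from the definition of the Hausdorff distance) to select $a'\in A'$ close to $a$, and finally applying the triangle inequality $d(x,A')\leq d(x,a')\leq d(x,a)+d(a,a')$ and letting the two approximation errors tend to zero. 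Combining this relation with $d(x,A)<\rho$ gives $d(x,A')<\rho+D(A,A')$, completing (b).

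For part (a), let $t$ be any element of the set defining $T(\theta,p)$ in \eqref{eq:Tdef}. Then $p+t\theta\in X$ and $d(p+t\theta,p)\leq d(p+t\theta,A)$; the left-hand side equals $t$ by the observation above, and the right-hand side is strictly less than $\rho$ by part (b). Hence $t<\rho$, so the defining set is contained in $[0,\rho)$ and its supremum satisfies $T(\theta,p)\leq\rho$, uniformly over $p\in P$ and $|\theta|=1$. Part (c) is identical, replacing $A$ by $A'$ and invoking $d(x,A')<\rho+D(A,A')$ from (b) in place of $d(x,A)<\rho$, which yields $T'(\theta',p')\leq\rho+D(A,A')$. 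Finally, the assertions for $T_k,P_k,A_k$ follow verbatim from the same three steps, since hypothesis \eqref{eq:BallRhok} is precisely \eqref{eq:BallRho} imposed uniformly over all $k\in K$; one simply reruns the argument with the index-$k$ objects. There is no genuine obstacle here: the only point needing slight care is the Hausdorff triangle inequality in (b), and the remainder is a direct substitution into the definition of $T$.
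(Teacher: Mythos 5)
Your proof is correct, and for part (a) it is in fact slightly more direct than the paper's. The paper proves (a) by evaluating at the single point $x=p+\rho\theta$, splitting into the cases $x\notin X$ and $x\in X$, and in the second case appealing to the monotonicity-along-segments Lemma \ref{lem:segment} to transfer the strict inequality $d(x,A)<d(x,p)$ back into a bound on $T(\theta,p)$. You instead bound every element $t$ of the defining set in \eqref{eq:Tdef} at once: membership forces $p+t\theta\in X$, so $t=d(p+t\theta,p)\leq d(p+t\theta,A)<\rho$ by the ball hypothesis \eqref{eq:BallRho}, and the supremum of a nonempty subset of $[0,\rho)$ is at most $\rho$ --- no case analysis and no appeal to Lemma \ref{lem:segment}. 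What the paper's route buys is nothing extra here; what yours buys is brevity and the isolation of the one relevant fact, namely that $d(p+t\theta,p)=t$ for a unit vector $\theta$. Parts (b) and (c) coincide with the paper's argument: the same Hausdorff triangle inequality $d(x,A')\leq d(x,A)+D(A,A')$ and the same substitution of $\rho+D(A,A')$ for $\rho$. The only cosmetic omission is the degenerate case $D(A,A')=\infty$ in (c), which the paper dispatches explicitly and which your argument covers anyway, since $A'$ is nonempty and hence $d(x,A')<\infty$. The $k$-indexed statement is, as you say, a verbatim rerun with $T_k,P_k,A_k$ in place of $T,P,A$.
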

\begin{proof}
\begin{enumerate}[(a)]
\item Let $p\in P$ be arbitrary and let $\theta$ be an arbitrary unit vector.  Let $t=\rho$ and $x=p+t\theta$. If $x\notin X$ then $T(\theta,p)\leq t$, since otherwise there is some $s>t$ such that $p+s\theta\in X$ and $d(p+s\theta,p)\leq d(p+s\theta,A)$ by the definition of $T$ (see \eqref{eq:Tdef}). Hence $x\in [p,p+s\theta]\subseteq X$, a contradiction.  
Now assume that $x\in X$. By \eqref{eq:BallRho} there exists $a\in A$ such that $d(x,a)<\rho=d(x,p)$, so $d(x,A)<d(x,p)$. By the definition of $T$  and Lemma \ref{lem:segment}  we conclude that $T(\theta,p)\leq \rho$ in this case too.
\item Let $x\in X$. By \eqref{eq:BallRho} we know that $d(x,A)<\rho$. Since $d(x,A')\leq d(x,A)+D(A,A')$ holds in general, we have $d(x,A')<\rho+D(A,A')$.
\item If $D(A,A')=\infty$, then the assertion holds trivially. Assume now that $D(A,A')<\infty$. By part \eqref{item:RhoAA'} the open ball $B(x,\rho+D(A,A'))$ intersects $A'$ for each $x\in X$ and the proof continues as in part \eqref{item:Trho} with $\rho+D(A,A')$ instead of $\rho$.
\end{enumerate}
The proofs in the case where \eqref{eq:BallRhok} holds are the same as above.
\end{proof}

The next lemma is the key step in establishing \eqref{eq:property},\eqref{eq:property'}, \eqref{eq:property_k}, and \eqref{eq:property'_k} in the case of uniformly convex normed spaces. It improves upon Lemma \ref{lem:segment} under the additional assumptions that $(\widetilde{X},|\cdot|)$ is uniformly convex and $d(p,A)>0$. Its proof is based on the following definition and on a special case of the forgotten strong triangle inequality of Clarkson \cite[Theorem~3]{Clarkson}. See page \pageref{page:UniConvDef} for the definition of uniformly convex spaces and the meaning of the function $\delta$. Clarkson's theorem, which is formulated in \cite{Clarkson} for uniformly convex Banach spaces, has nothing to do with completeness and it remains true without this assumption. 
\begin{defin}\label{def:angle}
Given two non-vanishing vectors $x,y\in \wt{X}$, the angle (or Clarkson's angle, or the normed angle) $\alpha(x,y)$ between them is the distance between their directions, i.e., it is defined by
\begin{equation*}
\alpha(x,y)=\left|\frac{x}{|x|}-\frac{y}{|y|}\right|.
\end{equation*}
\end{defin}
\begin{thm}\label{thm:Clarkson}{\bf (Clarkson)}
Let $x_1,x_2$ be two non-vanishing vectors in a uniformly convex normed space $(\widetilde{X},|\cdot|)$. If $x_1+x_2\neq 0$, then
\begin{equation}\label{eq:ClarkTriangle}
|x_1+x_2|\leq |x_1|+|x_2|-2\delta(\alpha_1)|x_1|-2\delta(\alpha_2)|x_2|,
\end{equation}
where $\alpha_l=\alpha(x_l,x_1+x_2),\,l=1,2$.
\end{thm}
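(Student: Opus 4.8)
The plan is to reduce everything to the defining midpoint inequality of the modulus of convexity, applied to two carefully chosen pairs of unit vectors, and then to recombine the two resulting estimates with a single use of the ordinary triangle inequality. Write $s=x_1+x_2$, which is non-vanishing by hypothesis, and introduce the three unit vectors $u_1=x_1/|x_1|$, $u_2=x_2/|x_2|$, and $w=s/|s|$. With this notation the angles appearing in the statement are exactly $\alpha_1=\alpha(x_1,s)=|u_1-w|$ and $\alpha_2=\alpha(x_2,s)=|u_2-w|$.

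First I would record the two basic convexity estimates. Since $|u_1-w|=\alpha_1$ and both $u_1,w$ are unit vectors, the defining property of $\delta$ gives $|(u_1+w)/2|\leq 1-\delta(\alpha_1)$, that is, $|u_1+w|\leq 2-2\delta(\alpha_1)$; likewise $|u_2+w|\leq 2-2\delta(\alpha_2)$. (When $\alpha_l=0$ these collapse to the trivial bound $|u_l+w|\leq 2$, consistent with $\delta(0)=0$.) Multiplying the first inequality by $|x_1|$ and the second by $|x_2|$, and using the homogeneity of the norm together with $x_l=|x_l|u_l$, yields
\[
|x_1+|x_1|w|\leq |x_1|(2-2\delta(\alpha_1)),\qquad |x_2+|x_2|w|\leq |x_2|(2-2\delta(\alpha_2)).
\]

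The decisive step is to add these two auxiliary vectors and evaluate their sum exactly. Since $x_1+x_2=s=|s|w$, we have $(x_1+|x_1|w)+(x_2+|x_2|w)=s+(|x_1|+|x_2|)w=(|s|+|x_1|+|x_2|)w$, whose norm is simply $|s|+|x_1|+|x_2|$, because $w$ is a unit vector and all three coefficients are nonnegative. Applying the triangle inequality to the left-hand side and substituting the two displayed bounds on the right-hand side gives
\[
|s|+|x_1|+|x_2|\leq 2|x_1|+2|x_2|-2\delta(\alpha_1)|x_1|-2\delta(\alpha_2)|x_2|,
\]
and cancelling the common term $|x_1|+|x_2|$ produces exactly the asserted inequality \eqref{eq:ClarkTriangle}.

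The argument is short and essentially algebraic once the right objects are in place, so I do not expect a serious analytic obstacle. The only point demanding a little care is the degenerate case in which one of the angles vanishes — equivalently, when $x_l$ is a positive multiple of $s$ — where one must invoke the trivial estimate $|u_l+w|\leq 2$ rather than the (possibly vacuous) strict content of $\delta$. I expect the main conceptual hurdle to be spotting the correct recombination: pairing each summand $x_l$ with the auxiliary vector $|x_l|w$, rather than with the other summand, is precisely what turns the two independent convexity estimates into a single exact identity along the common direction $w$, and this is where the strength of the inequality over the plain triangle inequality comes from.
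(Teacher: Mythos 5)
Your proof is correct. Note that the paper itself does not prove this statement: it is quoted as an external result, namely Clarkson's ``strong triangle inequality'' \cite[Theorem~3]{Clarkson}, with only the remark that completeness is not needed. Your argument supplies a complete, self-contained derivation, and every step checks out: the pair $(u_l,w)$ is admissible in the defining property of $\delta$ (with the degenerate case $\alpha_l=0$ covered by $\delta(0)=0$), homogeneity turns $|u_l+w|\leq 2-2\delta(\alpha_l)$ into $\bigl|x_l+|x_l|w\bigr|\leq |x_l|(2-2\delta(\alpha_l))$, and the sum $(x_1+|x_1|w)+(x_2+|x_2|w)=(|s|+|x_1|+|x_2|)w$ has norm exactly $|s|+|x_1|+|x_2|$ because the scalar is nonnegative, so one triangle inequality and a cancellation finish the job. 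This is in fact essentially Clarkson's original argument, and it has the added virtue of working for any admissible choice of the function $\delta$ (not only the modulus of convexity itself), which matches how $\delta$ is used elsewhere in the paper.
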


\begin{lem}\label{lem:StrictSegment_Appendix}
Let $(\widetilde{X},|\cdot|)$ be a uniformly convex normed space and let $\emptyset\neq A\subseteq \widetilde{X}$. Suppose that $y,p\in \widetilde{X}$ satisfy $d(y,p)\leq d(y,A)$ and $d(p,A)>0$. Let $x\in [p,y)$. Let $\sigma\in (0,\infty)$ be arbitrary. Then $d(x,p)<d(x,A)-r$ for any $r>0$ satisfying
\begin{equation}\label{eq:EpsilonUniform}
r\leq\min\left\{\sigma,\frac{4d(p,A)}{10},d(y,x)\delta\left(\frac{d(p,A)}{10(d(x,A)+\sigma+d(y,x))}\right)\right\}.
\end{equation}
\end{lem}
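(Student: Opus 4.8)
The plan is to prove the equivalent statement that the one-sided gap $d(x,a)-d(x,p)$ is bounded below by a \emph{positive quantity that is uniform in} $a\in A$, and then to pass to the infimum over $a$. The main tool is Clarkson's strong triangle inequality (Theorem~\ref{thm:Clarkson}), applied to the vectors $x_1=x-y$ and $x_2=a-x$, whose sum is $a-y$. First I would record a few non-degeneracy facts. Since $x\in[p,y)$ exists, $p\neq y$; Lemma~\ref{lem:segment} gives $d(x,p)\le d(x,A)$, which together with $d(p,A)>0$ forces $x\notin A$, so $a\neq x$ for every $a\in A$; and $d(y,p)\le d(y,A)\le d(y,a)$ excludes $a=y$. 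Hence all three vectors are non-vanishing and Clarkson's inequality applies.

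Feeding $x_1,x_2$ into \eqref{eq:ClarkTriangle} and writing $\alpha_1=\alpha(x-y,a-y)$, $\alpha_2=\alpha(a-x,a-y)$, one gets
\[
d(y,a)\le d(y,x)+d(x,a)-2\delta(\alpha_1)d(y,x)-2\delta(\alpha_2)d(x,a).
\]
Combining this with $d(y,p)\le d(y,a)$ and the collinearity identity $d(y,p)=d(y,x)+d(x,p)$ (valid since $x\in[p,y)$), the term $d(y,x)$ cancels and, discarding the nonnegative $\alpha_2$-term, I obtain the key estimate
\[
d(x,a)-d(x,p)\ \ge\ 2\delta(\alpha_1)\,d(y,x).
\]
Because $x-y$ and $p-y$ are positive multiples of one another, $\alpha_1=\alpha(p-y,a-y)$ is the Clarkson angle at $y$ subtended by $p$ and $a$.

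The heart of the matter is a lower bound on $\alpha_1$, and here I would split into two regimes for a fixed $a$. If $d(x,a)\ge d(x,p)+d(p,A)/2$, the gap is already at least $d(p,A)/2$ and nothing more is needed. Otherwise $d(x,a)-d(x,p)<d(p,A)/2$; using $d(y,a)-d(y,p)\le d(x,a)-d(x,p)$ this bounds $d(y,a)-d(y,p)$, so the point $q=y+d(y,p)\,(a-y)/|a-y|$, which satisfies $|p-q|=d(y,p)\,\alpha_1$ and $|q-a|=d(y,a)-d(y,p)$, obeys $|q-a|<d(p,A)/2$. The triangle inequality then gives $|p-q|\ge d(p,a)-|q-a|>d(p,A)/2$, whence, using $d(y,p)=d(y,x)+d(x,p)\le d(y,x)+d(x,A)+\sigma$,
\[
\alpha_1=\frac{|p-q|}{d(y,p)}>\frac{d(p,A)}{2\bigl(d(x,A)+\sigma+d(y,x)\bigr)}\ \ge\ \frac{d(p,A)}{10\bigl(d(x,A)+\sigma+d(y,x)\bigr)}.
\]
As $\delta$ is increasing (and this argument of $\delta$ lies in $(0,2]$), the key estimate yields $d(x,a)-d(x,p)\ge 2\,\delta\bigl(\tfrac{d(p,A)}{10(d(x,A)+\sigma+d(y,x))}\bigr)d(y,x)\ge 2r$ in this regime.

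Finally I would merge the regimes: every $a\in A$ satisfies $d(x,a)-d(x,p)\ge\min\{d(p,A)/2,\,2r\}$, a positive number independent of $a$ which, since $r\le 4d(p,A)/10<d(p,A)/2$ and $r>0$, is strictly larger than $r$. Taking the infimum over $a$ converts this into $d(x,A)-d(x,p)\ge\min\{d(p,A)/2,2r\}>r$, i.e. $d(x,p)<d(x,A)-r$. I expect the main obstacle to be the angle estimate, because Clarkson's refinement degenerates exactly when $p,a,y$ are nearly collinear with $a$ beyond $p$ (then $\alpha_1\to0$); the two-regime split is the device that handles this, since in that degenerate situation the gap $d(x,a)-d(x,p)$ is automatically large and the vanishing of $\delta(\alpha_1)$ costs nothing. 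The other delicate point is keeping the bound uniform in $a$ so that it survives the infimum and produces a \emph{strict} inequality; the parameter $\sigma$ and the slack hidden in the constant $10$ supply the necessary room.
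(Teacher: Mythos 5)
Your proof is correct and follows essentially the same route as the paper's: Clarkson's strong triangle inequality applied at $y$, together with a dichotomy between ``$a$ is already far from $x$ because it lies nearly on the ray through $p$ beyond $p$'' and ``the Clarkson angle at $y$ is bounded below, so $\delta$ bites.'' The only organizational difference is that the paper splits on the size of the angle and works with a near-minimizer $a$ satisfying $d(x,a)<d(x,A)+r$, whereas you split on the size of the gap $d(x,a)-d(x,p)$ and obtain a bound uniform in $a\in A$ before passing to the infimum -- a harmless reshuffling (which, incidentally, never uses the constraint $r\le\sigma$).
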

\begin{proof}
The assertion is obvious if either $p=y$ or $p=x$, so assume $p\neq y$ and $p\neq x$ from now on. Let $a\in A$ satisfy $d(x,a)<d(x,A)+r$. We distinguish between two cases. Suppose first that the angle $\alpha(x-y,a-y)$ is greater or equal to $d(p,A)/(10(d(x,A)+\sigma+d(y,x)))$. By the assumption, the strong triangle inequality \eqref{eq:ClarkTriangle} and the fact that $\delta$ is increasing,
\begin{multline*}
d(y,x)+d(x,p)=d(y,p)\leq d(y,A)\leq d(y,a)\\
\leq d(y,x)+d(x,a)-2\delta(\alpha(x-y,a-y))d(y,x)-2\delta(\alpha(a-x,a-y))d(x,a)\\
<d(y,x)+d(x,A)+r-2r-0.
\end{multline*}
Hence $d(x,p)+r<d(x,A)$ as required. We note that all the angles are well defined, since $0<d(x,p)\leq d(x,a)$ by Lemma \ref{lem:segment}, and $0<d(y,p)\leq d(y,a), \,0<d(y,x)$ by assumption. In addition, the expression in \eqref{eq:EpsilonUniform} is well defined, since $d(p,A)\leq d(p,x)+d(x,A)\leq 2d(x,A)$ by Lemma \ref{lem:segment}, so the argument inside $\delta$ is in the interval $[0,0.2]\subseteq [0,2]$. In addition, the  minimum in \eqref{eq:EpsilonUniform} is positive since all the numbers inside $\delta$ are positive and the space is uniformly convex. 

Assume now that $\alpha(x-y,a-y)< d(p,A)/(10(d(x,A)+\sigma+d(y,x)))$. Since $x$ is between $p$ and $y$, the unit vector $\theta=(p-y)/|p-y|$ satisfies  $p=y+d(y,p)\theta$ and $x=y+d(y,x)\theta$. Let $\phi=(a-y)/|a-y|$. Let $z=y+d(y,a)\theta$. Since $d(y,p)\leq d(y,a)$, it follows that $p$ is between $z$ and $x$ (and maybe $p=z$). Because $\alpha(x-y,a-y)=d(\theta,\phi)$ and $r\leq \sigma$, it follows from the assumption on $\alpha(x-y,a-y)$ that
\begin{equation*}
d(z,a)=d(\theta,\phi)d(y,a)<\frac{d(p,A)(d(y,x)+d(x,a))}{10(d(x,A)+\sigma+d(y,x))}<\frac{d(p,A)}{10}.
\end{equation*}
Thus
\begin{equation*}
d(p,A)\leq d(p,a)\leq d(p,z)+d(z,a)<d(p,z)+d(p,A)/10,
\end{equation*}
so $9d(p,A)/10<d(p,z)$. Since $p$ is between $z$ and $x$, the choice of $r$ in \eqref{eq:EpsilonUniform} implies
\begin{multline*}
d(x,A)+4d(p,A)/10\geq d(x,A)+r>d(x,a)\geq   d(x,z)-d(a,z)\\
=d(x,p)+d(p,z)-d(a,z)>d(x,p)+9d(p,A)/10-d(p,A)/10.
\end{multline*}
Therefore $d(x,p)+r<d(x,A)$ in this case too.
\end{proof}

We are now ready for formulating and proving the main stability result (denoted by Theorem ~\ref{thm:stabilityUC} in the main body of the text).

\begin{thm}\label{thm:stabilityUC_Appendix}
Let $(\wt{X},|\cdot|)$ be a uniformly convex normed space, and let  $X\subseteq \wt{X}$ be closed and convex. Let $(P_k)_{k\in K}$, $(P'_k)_{k\in K}$ be two given tuples of subsets of  $X$ with the property that the distance between each $x\in X$ and each $P_k,P'_k$ is attained. Suppose that 
\begin{equation*}
\eta:=\inf\{d(P_k,P_j): j,k\in K, j\neq k\}>0, 
\end{equation*}
and  let $A_k=\bigcup_{j\neq k}P_j,\,A'_k=\bigcup_{j\neq k}P'_j$ for each $k\in K$. Suppose also that \eqref{eq:BallRhok} holds. For each $k\in K$ let $R_k=\dom(P_k,A_k),R'_k=\dom(P'_k,A'_k)$ be, respectively, the Voronoi cells associated with the original site $P_k$ and the perturbed one $P'_k$. 
Then for each $\epsilon\in (0,\eta/6)$ there there exists $\Delta>0$, namely $\Delta=\min\{C\epsilon^2, 0.5\epsilon\}$ for any positive $C$ satisfying 
\begin{equation*}
C\leq \frac{1}{16(\rho+5\eta/12)}\cdot\delta\left(\frac{\eta}{12\rho+5\eta}
\right),
\end{equation*} 
such that if the inequality  $D(P_k,P'_k)<\Delta$ holds for each $k\in K$, 
then $D(R_k,R'_k)<\epsilon$ for each $k\in K$.
\end{thm}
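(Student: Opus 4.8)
The plan is to \emph{reduce the statement to Proposition \ref{prop:stabilityVor}} by verifying its hypotheses with completely explicit, $k$-independent constants, and then to choose the perturbation threshold $\epsilon_4$ (which will be the desired $\Delta$) as large as those constants permit. Throughout I fix $\epsilon\in(0,\eta/6)$ and impose \emph{at the outset} the a priori cap $\Delta\le 0.5\epsilon$; this cap is what makes the various quantities below uniform and, as explained at the end, resolves a mild circularity. The first task is to produce the uniform segment bounds \eqref{eq:Mk}. Since \eqref{eq:BallRhok} is assumed, Lemma \ref{lem:BallRho}\eqref{item:Trho} gives $T_k(\theta,p)\le\rho$ for every $k$, every $p\in P_k$ and every unit $\theta$, so one may take $M=\rho$; part \eqref{item:T'rho}, together with the inequality $D(A_k,A'_k)\le\sup_j D(P_j,P'_j)\le\Delta$ (shown exactly as in the proof of Proposition \ref{prop:stabilityVor}), yields $M'\le\rho+\Delta\le\rho+0.5\epsilon$. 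Part \eqref{item:RhoAA'} furthermore supplies the upper bounds $d(x,A_k)<\rho$ and $d(x,A'_k)<\rho+0.5\epsilon$ for all $x\in X$. Finally, from $d(P_k,A_k)\ge\eta$ and $D(P_k,P'_k),D(A_k,A'_k)\le\Delta$ one obtains $d(P'_k,A'_k)\ge\eta-2\Delta\ge\eta-\epsilon>5\eta/6>0$, so the perturbed sites stay positively separated (this positivity is needed to invoke Lemma \ref{lem:StrictSegment_Appendix} for the primed data).

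The heart of the argument is establishing the separation-with-a-gap conditions \eqref{eq:property_k} and \eqref{eq:property'_k} via the geometric Lemma \ref{lem:StrictSegment_Appendix}. Given $y\in R_k$, Theorem \ref{thm:domIntervalApp} lets me write $y=p+t\theta$ with $d(y,p)\le d(y,A_k)$ (Lemma \ref{lem:segment}), which is precisely the configuration required by Lemma \ref{lem:StrictSegment_Appendix}; for the only relevant case $t>\epsilon/2$ the point $x=p+(t-\epsilon/2)\theta$ lies in $[p,y)$ with $d(x,y)=\epsilon/2$. I would then apply Lemma \ref{lem:StrictSegment_Appendix} with $A=A_k$ and a fixed $\sigma$, feeding in $d(p,A_k)\ge d(P_k,A_k)\ge\eta$, $d(x,A_k)<\rho$ and $d(x,y)=\epsilon/2<\eta/12$. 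Choosing $\sigma$ so that $10(\rho+\sigma+\eta/12)=12\rho+5\eta$, i.e.\ $\sigma=\rho/5+5\eta/12$, makes the argument of $\delta$ bounded below, uniformly in $k$, $p$ and $x$, by $\eta/(12\rho+5\eta)$; since $\delta$ is increasing, this produces the uniform gap
\[
\lambda:=\frac{\epsilon}{2}\,\delta\!\left(\frac{\eta}{12\rho+5\eta}\right)\in(0,\epsilon).
\]
The same computation for the primed data, now using $d(p',A'_k)\ge\eta-\epsilon$, $d(x',A'_k)<\rho+0.5\epsilon$ and a choice $\sigma'\le\eta/4$, yields a gap $\lambda'\ge\lambda$.

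It remains to optimize the perturbation size. With $M=\rho$ and $M'\le\rho+0.5\epsilon$, condition \eqref{eq:epsilon24M} reads $\Delta<\lambda\epsilon/\bigl(8(\epsilon+\rho)\bigr)$ and $\Delta<\lambda'\epsilon/\bigl(8(\epsilon+M')\bigr)$. Substituting $\lambda$, and using $\epsilon<\eta/6$ to bound both $\epsilon+\rho$ and $\epsilon+M'$ by $\rho+5\eta/12$ (note $\eta/6<5\eta/12$ and $\epsilon+M'<\rho+\eta/4<\rho+5\eta/12$), one checks that $\Delta=C\epsilon^2$ satisfies both inequalities \emph{strictly}, with $C=\frac{1}{16(\rho+5\eta/12)}\delta\!\left(\frac{\eta}{12\rho+5\eta}\right)$, and that $\lambda'\ge\lambda$ guarantees the primed inequality once the unprimed one holds. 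Intersecting with the a priori cap gives $\Delta=\min\{C\epsilon^2,\,0.5\epsilon\}$. Applying Proposition \ref{prop:stabilityVor} with $\epsilon_4=\Delta$ then delivers $D(R_k,R'_k)<\epsilon$ for every $k$, which is the assertion.

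The step I expect to be the main obstacle is securing \emph{uniformity simultaneously over all (possibly infinitely many) indices $k$ and over all points of every cell}: a single $\Delta$ must control every $R_k$ at once, so any hidden $k$- or point-dependence in $\lambda,\lambda',M,M'$ would be fatal. This is exactly where the hypotheses $\eta>0$ and \eqref{eq:BallRhok} are indispensable, since they convert the purely local estimate of Lemma \ref{lem:StrictSegment_Appendix} into the \emph{global} bounds $d(p,A_k)\ge\eta$ and $d(x,A_k)<\rho$, which are what render the argument of $\delta$ and the constant $C$ independent of $k$, $p$ and $\epsilon$. A secondary subtlety is the circular dependence of $M'$ (and hence of $\lambda'$ and of the admissible $\Delta$) on $\Delta$ itself, which the a priori cap $\Delta\le 0.5\epsilon$ is designed to break before any of the estimates are carried out.
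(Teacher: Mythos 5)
Your proposal is correct and follows essentially the same route as the paper's own proof: reduction to Proposition \ref{prop:stabilityVor}, with $M=\rho$ and $M'$ from Lemma \ref{lem:BallRho}, the gaps $\lambda,\lambda'$ from Lemma \ref{lem:StrictSegment_Appendix} applied to the configuration supplied by the representation theorem, and the same final arithmetic yielding $C=\frac{1}{16(\rho+5\eta/12)}\delta\bigl(\frac{\eta}{12\rho+5\eta}\bigr)$. The only deviations are cosmetic (the paper takes $\sigma=\eta/6$ and $M'=\rho+\eta/6$, absorbing the worse primed-case constant into both $\lambda$ and $\lambda'$, whereas you tune $\sigma$ per case and cap $\Delta\le 0.5\epsilon$ up front to break the $M'$--$\Delta$ dependence), and they lead to the same bounds.
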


\begin{proof}
Let $\epsilon\in(0,\eta/6)$ be given. We will show that all the conditions needed in Proposition ~\ref{prop:stabilityVor} are satisfied. Let $k\in K$ be given. As in the first lines of the proof of Proposition \ref{prop:stabilityVor}, we have $D(A_k,A'_k)\leq \sup\{D(P_j,P'_j): j\neq k\}\leq \Delta$, and hence $D(A_k,A'_k)<\epsilon\leq\eta/6$. Let $M=\rho$ and $M'=\rho+\eta/6$ where $\rho$ is from \eqref{eq:BallRhok}. By Lemma \ref{lem:BallRho}\eqref{item:Trho} and Lemma \ref{lem:BallRho}\eqref{item:T'rho} (with $P=P_k$, $P'=P'_k$, $A=A_k$, $A'=A'_k$)  we obtain \eqref{eq:Mk}. 

Let $p'\in P'_k$ be given. We claim that $5\eta/6\leq d(p',A'_k)$.  Indeed, let  $a'\in A'_k$ be arbitrary.  Since $D(P_k,P'_k)<\Delta\leq\epsilon/2$ and $D(A_k,A'_k)<\epsilon/2+\beta$ for any $\beta>0$, there are $p\in P_k$ and $a\in A_k$ such that $d(p,p')<\epsilon/2$ and $d(a',a)<\epsilon/2+\beta$. Therefore 
\begin{equation*}
\eta\leq d(p,A_k)\leq d(p,a)\leq d(p,p')+d(p',a')+d(a',a)< \epsilon+\beta+d(p',a'),
\end{equation*}
 Since $a'$ and $\beta$ were arbitrary and since $5\eta/6\leq \eta-\epsilon$ we have  $5\eta/6\leq d(p',A'_k)$ as claimed. Since $6\epsilon\leq \eta$ we also  conclude that $5\epsilon\leq d(p',A'_k)$. In addition, $0.5\epsilon<2\epsilon<4d(p,A_k)/10$ and $0.5\epsilon<2\epsilon\leq 4d(p',A'_k)/10$ for any $p\in P_k$ and $p'\in P'_k$. 
 
Let $\sigma=\eta/6$. Then obviously $0.5\epsilon<\epsilon\leq\sigma$.  
 By Lemma \ref{lem:BallRho}\eqref{item:RhoAA'} we know that $d(x,A_k)\leq \rho$ and $d(x,A'_k)\leq \rho+\eta/6$ for each $x\in X$. As a result, from  the monotonicity of $\delta$, $0\leq \delta\leq 1$,  $5\eta/6\leq\min\{d(p,A_k),d(p',A'_k)\}$, and from Lemma  \ref{lem:StrictSegment_Appendix}  it follows that 
\begin{equation}\label{eq:lambda_lambda'}
\lambda'=\lambda
=0.5\epsilon\cdot\delta\left(\frac{\eta}{12\rho+5\eta}\right)
\end{equation}
satisfy \eqref{eq:property_k} and \eqref{eq:property'_k} (the minimum in the  expression in the right hand side of \eqref{eq:EpsilonUniform} is the third element and both $\lambda$ and $\lambda'$ are smaller than this element). Let $\epsilon_4=\Delta$. Then $\epsilon_4\leq C\epsilon^2\leq\epsilon\lambda'/(8(M'+\eta/4))<\lambda'/(8(1+M'/\epsilon))$ and hence \eqref{eq:epsilon24M} is satisfied. As a result, we conclude from Proposition \ref{prop:stabilityVor} that $D(R_k,R'_k)<\epsilon$, as required.
\end{proof}

From Theorem \ref{thm:stabilityUC_Appendix} we see that $\Delta=O(\epsilon^2)$. However, under the (practical) additional assumption that the sites are strictly contained in the interior of the world $X$, we can obtain a better bound, namely $\Delta=O(\epsilon)$. The proof is based on the following assertions. We recall that $\partial X$ is the boundary of $X$ relative to the affine hull spanned by $X$ and that the distance from a point to the empty set is infinity. 

\begin{prop}\label{prop:StabilityInterior}
Suppose that $d(P,A)>0$ and let $\epsilon>0$ satisfy  $\epsilon\leq d(P,A)/6$. Suppose that $d(P,\partial X)>0$ and $d(P',\partial X)>0$.  Assume that  \eqref{eq:property} and  \eqref{eq:property'}  hold and that there are two positive numbers $M,M'$ such that \eqref{eq:MM'} hold. Suppose that $\epsilon_1$ and $\epsilon_3$ are two positive numbers such that the following inequalities hold:
\begin{equation}\label{eq:epsilon_kInterior}
\begin{array}{lll}
\epsilon_1+3\epsilon_3 &<& \min\left\{\lambda/2,\lambda'/2\right\}, \\
\epsilon_3&<& \displaystyle{\min\left\{\frac{d(P,\partial X)\epsilon}{4\max\{M,M'\}}, \frac{d(P',\partial X)\epsilon}{4\max\{M,M'\}}\right\}}.
\end{array}
\end{equation}
If 
\begin{equation}\label{eq:HausdorffEpsilonInterior}
D(A,A')< \epsilon_1, \quad
D(P,P')< \epsilon_3,
\end{equation}
 then $D(\dom(P,A),\dom(P',A'))<\epsilon$.
\end{prop}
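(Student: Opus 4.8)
The plan is to mirror the proof of Proposition \ref{prop:stability}, keeping the same skeleton (reduce to a one-sided Hausdorff estimate, invoke the representation Theorem \ref{thm:domIntervalApp}, and split into a short case $t\le\epsilon$ and a long case $t>\epsilon$), but to change the construction of the approximating point in the long case. The reason for the change is exactly the sharper dependence we are after. In Proposition \ref{prop:stability} the approximant was built along the \emph{perturbed} direction $\theta'$, and the angular error between $\theta$ and $\theta'$ (of order $\epsilon_3/\epsilon$) was amplified by the ray length $t\le M$, producing the costly term $4M\epsilon_3/\epsilon$ in \eqref{eq:epsilon_k}. Here I would instead approximate by a pure translation of the backed-off point, which removes the length amplification; the price is that membership of the translated point in $X$ is no longer automatic, and this is precisely where the new hypotheses $d(P,\partial X)>0$, $d(P',\partial X)>0$ and the second inequality of \eqref{eq:epsilon_kInterior} enter.

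First I would reproduce verbatim the opening estimate $5\epsilon\le d(P',A')$ from Proposition \ref{prop:stability}; it uses only $\epsilon_1,\epsilon_3<\epsilon/2$, which follow from \eqref{eq:epsilon_kInterior} since $\lambda,\lambda'<\epsilon$. By symmetry it then suffices to show that every $y\in\dom(P,A)$ admits $y'\in\dom(P',A')$ with $d(y,y')<\epsilon$. Writing $y=p+t\theta$ with $p\in P$, $|\theta|=1$, $t\le M$ via Theorem \ref{thm:domIntervalApp}, the short case $t\le\epsilon$ is handled exactly as in Proposition \ref{prop:stability} by taking $y'=y$ and using $5\epsilon\le d(P',A')$ to force $y'\in\dom(P',A')$.

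For the long case $t>\epsilon$, I set $x=p+(t-\tfrac{\epsilon}{2})\theta$, so $d(x,y)=\epsilon/2$ and \eqref{eq:property} gives $d(x,p)\le d(x,A)-\lambda$; then, choosing $p'\in P'$ with $d(p,p')<\epsilon_3$ from \eqref{eq:HausdorffEpsilonInterior}, I define the translate $y'=x+(p'-p)$. The domination is then exact and cheap: since $y'-p'=x-p$, we have $d(y',p')=d(x,p)$ while $d(x,y')=d(p,p')<\epsilon_3$, and combining $d(x,A')\ge d(x,A)-D(A,A')>d(x,A)-\epsilon_1$ with $d(y',A')\ge d(x,A')-\epsilon_3$ yields $d(y',p')\le d(x,A)-\lambda<d(y',A')$ once $\epsilon_1+\epsilon_3<\lambda$, which the first line of \eqref{eq:epsilon_kInterior} supplies with room to spare. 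Hence $y'\in\dom(P',A')$, and $d(y,y')\le d(y,x)+d(x,y')<\epsilon/2+\epsilon_3<\epsilon$, free of any factor of $M$.

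The step I expect to be the real obstacle, and the only genuinely new ingredient, is verifying $y'\in X$. For this I would invoke Lemma \ref{lem:InX}\eqref{item:B_H(x,r)X} with the relative ball $B_H(p,R)\subseteq X$, where $R:=d(p,\partial X)\ge d(P,\partial X)>0$ (the relative open ball of that radius about an interior point of a closed convex set lies in $X$). Since $x\in(p,y)$ with $d(x,p)=t-\tfrac{\epsilon}{2}>\tfrac{\epsilon}{2}$, $d(x,y)=\epsilon/2$, and $d(p,y)=t\le M$, the lemma gives $B_H(x,r)\subset X$ for every $r\le\min\{d(x,y),d(x,p),Rd(x,y)/(2d(p,y))\}$, whose binding third term satisfies $Rd(x,y)/(2d(p,y))\ge d(P,\partial X)\epsilon/(4M)>\epsilon_3$ exactly by the second inequality of \eqref{eq:epsilon_kInterior}; as $y'\in H$ and $d(x,y')<\epsilon_3\le r$, this places $y'$ in $B_H(x,r)\subseteq X$. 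The backward direction is identical after interchanging primed and unprimed data and using \eqref{eq:property'}, $M'$, and $d(P',\partial X)$, which is why \eqref{eq:epsilon_kInterior} is stated symmetrically in $\lambda,\lambda'$ and $M,M'$.
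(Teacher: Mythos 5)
Your proof is correct and follows essentially the same route as the paper's: the paper's choice $y'=p'+(t-\tfrac{\epsilon}{2})\theta$ (taking $\theta'=\theta$) is exactly your translate $x+(p'-p)$, and membership $y'\in X$ is verified there in the same way, via Lemma \ref{lem:InX}\eqref{item:B_H(x,r)X} together with the second inequality of \eqref{eq:epsilon_kInterior}. The only cosmetic differences are that you use the exact identity $d(y',p')=d(x,p)$ where the paper runs a triangle-inequality chain, and that the paper tracks the slightly sharper target $d(y,y')<\epsilon-\epsilon_3$ so that the strict inequality $D(\dom(P,A),\dom(P',A'))<\epsilon$ follows cleanly (your bound $\epsilon/2+\epsilon_3$ delivers this uniform gap anyway).
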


\begin{proof}
The proof is in the same spirit as the proof of Proposition \ref{prop:stability} and many details are similar and hence they have been omitted here. First, exactly as in the first lines of proof  there we have $5\epsilon\leq d(P',A')$. As explained there, it suffices to show   that  for any $y\in \dom(P,A)$ there exists $y'\in \dom(P',A')$ such that  $d(y,y')<\wt{\epsilon}:=\epsilon-\epsilon_3$. 

Let $y\in \dom(P,A)$. By Theorem \ref{thm:domIntervalApp} we  can write $y=p+t\theta$ for some $p\in P$, a unit vector  $\theta$, and
$t\in [0,T(\theta,p)]$. By \eqref{eq:MM'} we have $t\leq M$. By \eqref{eq:HausdorffEpsilonInterior}  there is $p'\in P'$ such that  $d(p,p')<\epsilon_3$. Suppose first that $t\leq \epsilon$. 
Let $y'=y$. Then $y'\in\dom(P',A')$ exactly as in the proof of Proposition \ref{prop:stability}, and obviously $d(y,y')<\epsilon-\epsilon_3$.

Now assume that $\epsilon<t$.   Let $x=p+(t-0.5\epsilon)\theta$. 
Then $x\in [p,y]$ and $d(x,y)=\epsilon/2$, so by   \eqref{eq:property} we have $d(x,p)\leq d(x,A)-\lambda$. Let $\theta'=\theta$ and  $y'=p'+(t-\epsilon/2)\theta'$. Then $d(x,y')=d(p,p')<\epsilon_3$, and hence $y'\in X$ by Lemma \ref{lem:InX}\eqref{item:B_H(x,r)X} and the second inequality in \eqref{eq:epsilon_kInterior} (in this connection note that since $\lambda<\epsilon$ it follows that  $\epsilon_3<0.5\epsilon\leq\{d(x,y),d(x,p)\}$). In addition, 
\begin{equation*}
d(y,y')\leq d(y,x)+d(x,y')=\epsilon/2+d(p,p')
<\epsilon/2+\epsilon_3<\epsilon-\epsilon_3
\end{equation*}
by \eqref{eq:epsilon_kInterior}. Recalling that $d(y',A)\leq d(y',A')+D(A,A')$ holds in general, we have 
\begin{multline*}
d(y',p')\leq d(y',x)+d(x,p)+d(p,p')
<\epsilon_3+d(x,A)-\lambda+\epsilon_3\\
\leq 2\epsilon_3+d(x,y')+d(y',A')+D(A,A')-\lambda\\
<3\epsilon_3+\epsilon_1+d(y',A')-\lambda< d(y',A')-\lambda/2,
\end{multline*}
 where we used the first inequality in \eqref{eq:epsilon_kInterior} in the last inequality above. Thus $d(y',p')<d(y',A')$ and then $y'\in \dom(P',A')$. Therefore, in  both cases we have $d(y,y')<\wt{\epsilon}=\epsilon-\epsilon_3$ and $y'\in \dom(P',A')$, and hence  $D(\dom(P,A),\dom(P',A'))<\epsilon$. 
\end{proof}

\begin{prop}\label{prop:stabilityVorInterior}
Suppose that $\inf\{d(P_k,P_j): \, j,k\in K,\, j\neq k\}>0$ and let $\epsilon>0$  be such that $\epsilon\leq \inf\{d(P_k,P_j): \, j,k\in K, k\neq j\}/6$. Suppose also that $d(\bigcup_{k\in K}P_k,\partial X)>0$ and $d(\bigcup_{k\in K}P'_k,\partial X)>0$. Suppose also that  \eqref{eq:property_k} and \eqref{eq:property'_k} hold. Let
\begin{equation*}\label{eq:R_k}
R_k=\dom(P_k,A_k), \quad
\quad R'_k=\dom(P'_k,A'_k).
\end{equation*}
Suppose that there are  $M,M'\in (0,\infty)$ such that for all $k\in K$, \eqref{eq:Mk} hold. Let  $\epsilon_4$ be a positive number satisfying
\begin{equation}\label{eq:epsilon_4Interior}
\epsilon_4<\min\left\{\frac{\lambda}{8},\frac{\lambda'}{8},\frac{d(\bigcup_{k\in K}P_k,\partial X)\epsilon}{4\max\{M,M'\}}, \frac{d(\bigcup_{k\in K}P'_k,\partial X)\epsilon}{4\max\{M,M'\}}\right\}.
\end{equation}
If
\begin{equation}\label{eq:epsilon4kInterior}
 D(P_k,P'_k)< \epsilon_4 \,\,\,\,\forall k\in K,
\end{equation}
 then $D(R_k,R'_k)<\epsilon$ for each $k\in K$.
\end{prop}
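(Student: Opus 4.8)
The plan is to reduce the multi-site claim to the single-pair Proposition \ref{prop:StabilityInterior}, exactly as Proposition \ref{prop:stabilityVor} reduces to Proposition \ref{prop:stability}; the only genuinely new point is that the interior (boundary-distance) hypotheses must be shown to descend from the union of all sites to each individual site. Accordingly I would fix $k\in K$ and apply Proposition \ref{prop:StabilityInterior} with $P=P_k$, $A=A_k$, $P'=P'_k$, $A'=A'_k$, checking its hypotheses one by one.

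First I would dispose of the hypotheses that come essentially for free. Since $A_k=\bigcup_{j\neq k}P_j$, we have $d(P_k,A_k)=\inf_{j\neq k}d(P_k,P_j)\geq\inf\{d(P_i,P_j):i\neq j\}\geq 6\epsilon>0$, which gives both $d(P_k,A_k)>0$ and $\epsilon\leq d(P_k,A_k)/6$ at once. Next, repeating the opening argument of Proposition \ref{prop:stabilityVor}: for any $y\in\bigcup_{i\neq k}P_i$ there is $j\neq k$ with $y\in P_j$, whence $d(y,\bigcup_{i\neq k}P'_i)\leq d(y,P'_j)\leq D(P_j,P'_j)<\epsilon_4$, and symmetrically, so $D(A_k,A'_k)\leq\epsilon_4$. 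The hypotheses \eqref{eq:property_k}, \eqref{eq:property'_k} and \eqref{eq:Mk} are assumed directly and supply, for this fixed $k$, the conditions \eqref{eq:property}, \eqref{eq:property'} and the bounds $M,M'$ needed by Proposition \ref{prop:StabilityInterior}.

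The step I regard as the heart of the interior variant is the boundary condition, which is precisely what lets Lemma \ref{lem:InX}\eqref{item:B_H(x,r)X} keep the shifted point inside $X$ and thereby produce the linear rather than quadratic dependence on $\epsilon$. Because $P_k\subseteq\bigcup_{i\in K}P_i$ and distance-to-a-set only increases as the set shrinks, $d(P_k,\partial X)\geq d(\bigcup_{i\in K}P_i,\partial X)>0$, and likewise $d(P'_k,\partial X)\geq d(\bigcup_{i\in K}P'_i,\partial X)>0$. It then remains to choose $\epsilon_1,\epsilon_3$ meeting \eqref{eq:epsilon_kInterior}. I would set $\epsilon_3:=\epsilon_4$ and $\epsilon_1:=\epsilon_4+r$ with $r:=\tfrac12\bigl(\min\{\lambda/2,\lambda'/2\}-4\epsilon_4\bigr)$, which is positive since $\epsilon_4<\min\{\lambda/8,\lambda'/8\}$ by \eqref{eq:epsilon_4Interior}. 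Then $\epsilon_1+3\epsilon_3=4\epsilon_4+r<\min\{\lambda/2,\lambda'/2\}$, verifying the first line of \eqref{eq:epsilon_kInterior}; and $\epsilon_3=\epsilon_4$ satisfies the second line because $d(P_k,\partial X)\geq d(\bigcup_{i\in K}P_i,\partial X)$ together with \eqref{eq:epsilon_4Interior} gives $\epsilon_4<d(P_k,\partial X)\epsilon/(4\max\{M,M'\})$, and similarly for the primed sites.

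Finally, $D(A_k,A'_k)\leq\epsilon_4<\epsilon_1$ and $D(P_k,P'_k)<\epsilon_4=\epsilon_3$ supply \eqref{eq:HausdorffEpsilonInterior}, so Proposition \ref{prop:StabilityInterior} yields $D(R_k,R'_k)=D(\dom(P_k,A_k),\dom(P'_k,A'_k))<\epsilon$; since $k$ was arbitrary this is the claim. The main obstacle is only conceptual and mild: once one sees that the single-pair machinery of Proposition \ref{prop:StabilityInterior} is already in place, every per-site hypothesis follows from the global ones, the single subtlety being the monotonicity of distance-to-boundary when passing from $\bigcup_{i\in K}P_i$ to a single $P_k$ (and its primed counterpart).
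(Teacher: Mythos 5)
Your proposal is correct and follows essentially the same route as the paper: reduce to Proposition \ref{prop:StabilityInterior} for each fixed $k$, establish $D(A_k,A'_k)\leq\epsilon_4$ as in Proposition \ref{prop:stabilityVor}, take $\epsilon_3=\epsilon_4$ and $\epsilon_1=\epsilon_4+r$ with a small positive $r$, and use the monotonicity $d(P_k,\partial X)\geq d(\bigcup_{i\in K}P_i,\partial X)$ to verify the second line of \eqref{eq:epsilon_kInterior}. The only difference is your choice of $r$ (a factor $1/2$ of the slack instead of the paper's $1/4$), which is immaterial since both keep $\epsilon_1+3\epsilon_3<\min\{\lambda/2,\lambda'/2\}$.
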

\begin{proof}
From the condition $\epsilon\leq d(P_k,P_j)/6$ for each $k\neq j$ it follows that $\epsilon\leq d(P_k,\bigcup_{j\neq k}P_j)/6$. In addition, as in the proof of Proposition \ref{prop:stabilityVor} the assumption $D(P_k,P'_k)<\epsilon_4$ for all $k\in K$ implies that  $D(A_k,A'_k)\leq \epsilon_4$ for all $k\in K$. Let $\epsilon_3:=\epsilon_4$ and  $\epsilon_1:=\epsilon_4+r$, where $r$ is defined to be the minimum of  the two  positive numbers  $(\lambda/2-4\epsilon_4)/4$ and $(\lambda'/2-4\epsilon_4)/4$. Then   \eqref{eq:epsilon_kInterior} is satisfied for $P=P_k$ for each $k\in K$ because of \eqref{eq:epsilon_4Interior}. 
In addition $D(P_k,P'_k)<\epsilon_3$ and  $D(A_k,A'_k)\leq\epsilon_4<\epsilon_1$ for each $k\in K$. Thus all the  conditions of Proposition \ref{prop:StabilityInterior} are satisfied for each $k\in K$. Hence $D(R_k,R'_k)<\epsilon$.
\end{proof}

\begin{thm}\label{thm:stabilityUC_Interior}
Let $(\wt{X},|\cdot|)$ be a uniformly convex normed space, and let  $X\subseteq \wt{X}$ be closed and convex. Let $(P_k)_{k\in K}$, $(P'_k)_{k\in K}$ be two given tuples of subsets of  $X$ with the property that the distance between each $x\in X$ and each $P_k$ and $P'_k$ is attained. Suppose that $d(\bigcup_{k\in K}P_k,\partial X)>0$ and $d(\bigcup_{k\in K}P'_k,\partial X)>0$. For each $k\in K$ let $A_k=\bigcup_{j\neq k}P_j,\,A'_k=\bigcup_{j\neq k}P'_j$ and let $R_k=\dom(P_k,A_k),R'_k=\dom(P'_k,A'_k)$ be, respectively, the Voronoi cells associated with the original site $P_k$ and the perturbed one $P'_k$. Suppose that 
\begin{equation*}
\eta:=\inf\{d(P_k,P_j): j,k\in K, j\neq k\}>0, 
\end{equation*}
and that \eqref{eq:BallRhok} holds.
Then for each $\epsilon>0$ satisfying $\epsilon \leq \min\{\eta/6, 8\cdot d(\bigcup_{k\in K}P_k,\partial X)\}$  there exists $\Delta>0$, namely $\Delta=C\epsilon$ for any $C>0$ satisfying 
\begin{equation*}
C\leq \min\left\{\displaystyle{ \frac{1}{16}\delta\left(\frac{\eta}{12\rho+5\eta}\right), \frac{d(\bigcup_{k\in K}P_k,\partial X)}{8(\rho+\eta/6)}}\right\}
\end{equation*} 
such that if the inequality  $D(P_k,P'_k)<\Delta$ holds for each $k\in K$, 
then $D(R_k,R'_k)<\epsilon$ for each $k\in K$.
\end{thm}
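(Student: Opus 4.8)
The plan is to repeat the reduction used for Theorem \ref{thm:stabilityUC_Appendix}, but to verify the hypotheses of Proposition \ref{prop:stabilityVorInterior} (the ``interior'' analogue) in place of those of Proposition \ref{prop:stabilityVor}. Fix $\epsilon$ with $0 < \epsilon \le \min\{\eta/6,\, 8d(\bigcup_{k\in K}P_k,\partial X)\}$ and abbreviate $\mu := \rho + \eta/6$. The entire first half of the argument is word for word the first half of the proof of Theorem \ref{thm:stabilityUC_Appendix} and may simply be quoted: taking $M=\rho$ and $M'=\rho+\eta/6$, Lemma \ref{lem:BallRho}(a),(c) gives \eqref{eq:Mk}; one checks $5\eta/6 \le d(p',A'_k)$ exactly as there; and, with $\sigma=\eta/6$ together with the bounds $d(x,A_k)\le\rho$ and $d(x,A'_k)\le\rho+\eta/6$ from Lemma \ref{lem:BallRho}(b), Lemma \ref{lem:StrictSegment_Appendix} shows that $\lambda=\lambda'=0.5\epsilon\,\delta(\eta/(12\rho+5\eta))$ satisfy \eqref{eq:property_k} and \eqref{eq:property'_k}. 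Since none of this touches the boundary, it carries over unchanged; the novelty is entirely in the choice of $\Delta$ and the verification of \eqref{eq:epsilon_4Interior}.

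I would set $\epsilon_4 := \Delta = C\epsilon$ and inspect the four entries of the minimum in \eqref{eq:epsilon_4Interior}, writing $d := d(\bigcup_{k\in K}P_k,\partial X)$ and $d' := d(\bigcup_{k\in K}P'_k,\partial X)$. The first two entries equal $\lambda/8=\lambda'/8=\tfrac{\epsilon}{16}\,\delta(\eta/(12\rho+5\eta))$, and they dominate $\Delta$ exactly because $C\le\tfrac{1}{16}\delta(\eta/(12\rho+5\eta))$. The third entry is $\tfrac{d\epsilon}{4\mu}$, and it dominates $\Delta$ because the admissible constant obeys $C\le\tfrac{d}{8\mu}<\tfrac{d}{4\mu}$; the factor $8$ in the denominator (rather than $4$) is inserted on purpose to bank a factor of two for the last entry.

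The fourth entry $\tfrac{d'\epsilon}{4\mu}$ is the crux, and it is where I expect the real obstacle, because the declared constant $C$ is allowed to depend only on the unperturbed data, whereas this entry involves $d'$. The resolution is to transport the interior hypothesis from $P_k$ to $P'_k$: since $D(P_k,P'_k)<\Delta$, every $p'\in\bigcup_{k}P'_k$ lies within $\Delta$ of some $p\in\bigcup_{k}P_k$, so $d(p',\partial X)\ge d(p,\partial X)-d(p',p)>d-\Delta$, whence $d'\ge d-\Delta$. As $\Delta\le\tfrac{d\epsilon}{8\mu}<\tfrac{d}{8}$ (using $\epsilon<\eta/6\le\mu$), we get $d'>\tfrac{7d}{8}>\tfrac{d}{2}$, and therefore $\Delta\le\tfrac{d\epsilon}{8\mu}=\tfrac{(d/2)\epsilon}{4\mu}<\tfrac{d'\epsilon}{4\mu}$, as needed; the standing hypothesis $\epsilon\le 8d$ is precisely what certifies that the stated linear constant is admissible and keeps $\Delta$ small compared with $d$. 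With all four entries of \eqref{eq:epsilon_4Interior} verified and \eqref{eq:epsilon4kInterior} holding by hypothesis, Proposition \ref{prop:stabilityVorInterior} yields $D(R_k,R'_k)<\epsilon$ for every $k\in K$, which is the assertion. The only delicate point is the strictness of the first two entries when $C$ attains its upper value $\tfrac{1}{16}\delta(\cdots)$, forcing $\Delta=\lambda/8$; this is harmless, since the strict perturbation inequalities $D(P_k,P'_k)<\Delta$ allow one to run the argument with an $\epsilon_4$ marginally below $\Delta$, equivalently with $C$ taken just under its bound, without disturbing the $O(\epsilon)$ conclusion.
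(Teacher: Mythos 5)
Your proposal is correct and follows essentially the same route as the paper: reuse the first half of the proof of Theorem \ref{thm:stabilityUC_Appendix} (same $M=\rho$, $M'=\rho+\eta/6$, same $\lambda=\lambda'$), then verify \eqref{eq:epsilon_4Interior} for Proposition \ref{prop:stabilityVorInterior}, the crux in both being the transport of the boundary-distance bound to the perturbed sites via $d(\bigcup_k P'_k,\partial X)\geq \tfrac12 d(\bigcup_k P_k,\partial X)$. Your derivation of that half-bound (via $\Delta\leq d/8$ using $\epsilon\leq\rho+\eta/6$, rather than the paper's $\epsilon_4\leq\epsilon/16\leq d/2$ using $\epsilon\leq 8d$) and your explicit handling of the strictness when $C$ attains its upper value are only cosmetic variations.
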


\begin{proof}
The proof is almost identical to the proof of Theorem 
\ref{thm:stabilityUC_Appendix} with the exception of a small verification which is explained in the next paragraph. The  values of $\lambda$ and $\lambda'$ are the same those given in \eqref{eq:lambda_lambda'} and one verifies that with $M=\rho$, $M'=\rho+\eta/6$, and $\epsilon_4=\Delta$ the conditions of Proposition \ref{prop:stabilityVorInterior} are satisfied. 

The small verification that mentioned above and should be done is that the inequality 
$\epsilon_4\leq d(\bigcup_{k\in K}P'_k,\partial X)\epsilon/(4\max\{M,M'\})$  holds. This is true for the following reason: if $\partial X=\emptyset$, then the right hand side in this inequality is infinity. Otherwise, let $b\in \partial X$ and $p'\in \bigcup_{k\in K}P'_k$ be given. Then $p'\in P'_k$ for some $k\in K$. Since $\delta\leq 1$ it follows that $C\leq 1/16$ and hence $\epsilon_4\leq \epsilon/16\leq 0.5 d(\bigcup_{k\in K}P_k,\partial X)$.  Because $D(P_k,P'_k)<\epsilon_4$ it follows that $d(p',p)<\epsilon_4$ for some $p\in P_k$ and therefore 
\begin{multline*}
 d(\bigcup_{k\in K}P_k,\partial X)\leq d(p,b)\leq  d(p,p')+d(p',b)<\epsilon_4+d(p',b)\\
 \leq 0.5 d(\bigcup_{k\in K}P_k,\partial X)+d(p',b)
\end{multline*} 
 Since $p'$ and $b$ were arbitrary it follows that $0.5 d(\bigcup_{k\in K}P_k,\partial X)\leq d(\bigcup_{k\in K}P'_k,\partial X)$. Recalling that  $4\max\{M,M'\}=4(\rho+\eta/6)$, we conclude that 
\begin{equation*}
\epsilon_4\leq \frac{0.5\cdot d(\bigcup_{k\in K}P_k,\partial X)\epsilon}{4(\rho+\eta/6)}\leq \frac{d(\bigcup_{k\in K}P'_k,\partial X)\epsilon}{4(\rho+\eta/6)}.
\end{equation*} 
\end{proof}

\section{Appendix 2: A continuity property of $T$}\label{sec:appendix_T}
This appendix is devoted to a discussion on  the continuity of $T(\cdot,p)$ in uniformly convex spaces which was mentioned briefly in Section \ref{sec:Concluding}. This result is related to the main  result and the approach used for deriving it is similar. 

The following definition plays an important role.
\begin{defin}\label{def:emanation}
Let $X$ be a closed and  convex subset of a normed space. Let $p\in X$. Let $\Theta_p:=\{\theta: |\theta|=1,\,p+t\theta\in X\,\,\textnormal{for some}\,\, t>0\}$ be the set of all directions such that rays emanating from $p$ in these directions intersect $X$ not only in $p$.  Let $\theta\in\Theta_p$. Let $L(\theta)\in (0,\infty]$ be the length of the 
line segment generated from the intersection of $X$ and the ray emanating from $p$  in the direction of $\theta$. The point $p$ is  said to have the emanation property (or to satisfy the  emanation condition)  in the direction of $\theta$ if for each $\epsilon>0$ there exists $\beta>0$ such that for any $\phi\in\Theta_p$, if $|\phi-\theta|<\beta$, then the intersection of $X$ and the ray emanating from $p$  in the direction of $\phi$ is a line segment of length at least $L(\theta)-\epsilon$. 
In other words, $L(\phi)\geq L(\theta)-\epsilon$. The point $p$ is said to have  the emanation property if it has the emanation property in the direction of every $\theta\in\Theta_p$. A subset $C$ of $X$ is said to have  the emanation property if each $p\in C$ has the emanation property. 
\end{defin}

As an illustration, in the following examples each point $p$ in the set $X$  satisfies the  emanation condition. First example: $X$ is any bounded closed convex set and $p\in X$ is an arbitrary point in the interior of $X$ relative to the affine hull spanned by $X$; second example: the boundary of the bounded closed and  convex $X$ is strictly convex (if $a\neq b$ are two points in the boundary, then the open line segment $(a,b)$ is contained in the interior of $X$ relative to the affine hull spanned by $X$) and $p\in X$ is arbitrary. Any ball in a strictly convex space has a strictly convex boundary; third example:  $X$ is a cube (of any finite dimension) and $p\in X$ is arbitrary; fourth example: $X$ is a closed linear subspace and $p\in X$  is arbitrary. The proof of the third example  is by direct computation. The proof of the first and  the second examples  is based on Lemma \ref{lem:EmanationInterior} below  and, e.g., on Lemma \ref{lem:InX}\eqref{item:B_H(x,r)X}. 

Before formulating Lemma \ref{lem:EmanationInterior} we note that the reason we consider only $\Theta_p$ and not the whole unit sphere is that in general it may happen that $L(\phi)=0$ for unit vectors $\phi$ arbitrary close to $\theta$ (consider e.g., $X=[0,1]^2$ in the Euclidean plane, $p=(0,0)$,  $\theta=(1,0)$, and $\phi_n=(\sqrt{1-1/n^2},-1/n),\,n\in\N$ or $X=[0,1]^2\times\{0\}$ in $\R^3$ with the Euclidean norm, $p=(0,0,0)$,  $\theta=(1,0,0)$, and $\phi_n=(\sqrt{1-1/n^2},0,-1/n),\,n\in\N$) and hence the condition $L(\phi)\geq L(\theta)-\epsilon$ does not hold. Intuitively, rays in the  directions of  $\phi\notin \Theta_p$  go out $X$ immediately and they do not affect the ``events''  inside $X$. 

\begin{lem}\label{lem:EmanationInterior}
 Let $X$ be any closed and convex subset of a normed space, let $p\in X$, and let $\theta\in \Theta_p$ be given. Assume that $L(\theta)<\infty$ and that any point in the segment  $(p,p+L(\theta)\theta)$ is in the interior of $X$ relative to the affine hull $H$ spanned by $X$. Then $p$ satisfies the  emanation  condition in the direction of $\theta$. 
\end{lem}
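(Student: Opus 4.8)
The plan is to exploit the relative-interior hypothesis to place a (relative) ball inside $X$ centered at an interior point of the segment $[p,p+L(\theta)\theta]$, and then to transport this ball, via the truncated-cone estimate of Lemma~\ref{lem:InX}\eqref{item:B_H(x,r)X}, to a neighbourhood of a nearby point, so that every ray in a direction $\phi$ close to $\theta$ is forced to remain in $X$ up to the desired length. First I would dispose of the trivial case $\epsilon\geq L(\theta)$: since $L(\phi)\geq 0$ for every $\phi\in\Theta_p$, the inequality $L(\phi)\geq L(\theta)-\epsilon$ then holds automatically, so from now on assume $0<\epsilon<L(\theta)$. Set $s=L(\theta)-\epsilon/2$ and $q=p+s\theta$. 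Because $s\in(0,L(\theta))$, the point $q$ lies on the open segment $(p,p+L(\theta)\theta)$, hence by hypothesis $q$ is in the interior of $X$ relative to $H$; thus there is $R>0$ with $B_H(q,R)\subseteq X$.

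Next set $s'=L(\theta)-\epsilon$ and $x=p+s'\theta$, which is an interior point of the segment $(p,q)$. Applying Lemma~\ref{lem:InX}\eqref{item:B_H(x,r)X} with $q$ playing the role of the ball-centre and $p$ the role of the second point (note $d(x,p)=s'$, $d(x,q)=\epsilon/2$, $d(q,p)=s$), I obtain $B_H(x,r_0)\subseteq X$, where
\begin{equation*}
r_0=\min\left\{s',\,\frac{\epsilon}{2},\,\frac{Rs'}{2s}\right\}>0.
\end{equation*}
Now define $\beta=r_0/s'>0$. Fix any $\phi\in\Theta_p$ with $|\phi-\theta|<\beta$ and put $q'=p+s'\phi$. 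Since $p\in X\subseteq H$ and $p+t\phi\in X\subseteq H$ for some $t>0$, the direction $\phi$ is parallel to $H$, so $q'\in H$; moreover $d(q',x)=s'|\phi-\theta|<s'\beta=r_0$, whence $q'\in B_H(x,r_0)\subseteq X$. By convexity $[p,q']\subseteq X$, so the ray from $p$ in direction $\phi$ stays in $X$ for length at least $s'$, i.e. $L(\phi)\geq s'=L(\theta)-\epsilon$. This is exactly the emanation condition in the direction of $\theta$.

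The only genuinely delicate points are bookkeeping rather than conceptual: correctly matching the hypotheses of Lemma~\ref{lem:InX}\eqref{item:B_H(x,r)X} under the relabelling (the ball is centred at the interior point $q$, while $p$ is the apex of the cone and is not required to carry a ball), and verifying that $q'$ genuinely lies in the affine hull $H$, so that membership in $B_H(x,r_0)$ forces membership in $X$. The finiteness assumption $L(\theta)<\infty$ is used precisely to make $q$ a point at finite distance, and the relative-interior hypothesis is invoked exactly once, to produce the ball $B_H(q,R)$; everything else is convexity of $X$ together with the explicit radius supplied by Lemma~\ref{lem:InX}\eqref{item:B_H(x,r)X}.
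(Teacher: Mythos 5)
Your proof is correct and takes essentially the same route as the paper: a relative ball centered on the open segment, combined with the estimate $|(p+(L(\theta)-\epsilon)\phi)-(p+(L(\theta)-\epsilon)\theta)|=(L(\theta)-\epsilon)|\phi-\theta|$ to force $p+(L(\theta)-\epsilon)\phi$ into $X$. The only difference is that the paper obtains the ball directly at $z=p+(L(\theta)-\epsilon)\theta$ from the relative-interior hypothesis, so your detour through Lemma~\ref{lem:InX}\eqref{item:B_H(x,r)X} (centering the ball at $p+(L(\theta)-\epsilon/2)\theta$ and transporting it) is correct but unnecessary.
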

\begin{proof}
$L(\theta)>0$ since $\theta\in \Theta_p$. Let $\epsilon\in (0,L(\theta))$ be given. Let $z=p+(L(\theta)-\epsilon)\theta$. Then $z$ is well defined and it is in the interior of the segment $[p,p+L(\theta)\theta]$ and hence there exists $\rho>0$ such that the intersection of the ball $B(z,\rho)$ and the affine hull $H$ is contained in $X$. Let $r=\rho/(L(\theta)-\epsilon)$. For  each $\phi\in \Theta_p$, if $|\phi-\theta|<r$, then $|(p+(L(\theta)-\epsilon)\phi)-z|=(L(\theta)-\epsilon)|\phi-\theta|<\rho$.  Thus the point $p+(L(\theta)-\epsilon)\phi$, which is in $H$ since $\phi\in \Theta_p$, is also in the ball $B(z,\rho)$ and hence in $X$. Thus $L(\phi)\geq L(\theta)-\epsilon$ by the definition of $L(\phi)$.
\end{proof}
\begin{expl}\label{ex:NonEmanation}
The emanation condition does not hold in general. Consider the Hilbert space $\ell_2$. Let $(e_n)_{n=1}^{\infty}$ be the standard basis. Let $y_1=e_1$ and for each  $n>1$ let $y_n=e_1/2+e_n/n$. Let $A=\{-e_1\}$.  Let $X$ be the closed convex hull generated by $A\bigcup\{y_n: n=1,2,\ldots\}$. Let $p=0$ and let  $\theta_n=y_n/|y_n|$ for each $n$. Then $p$ does not satisfy the emanation  condition in the direction of $\theta_1$. Indeed, $\lim_{n\to\infty}\theta_n=\theta_1$ but $L(\theta_n)=\sqrt{0.25+1/n^2}<0.99=L(\theta_1)-0.01$ for each $n>1$. The  subset $X$ is in fact compact since the sequence $(y_n)_{n=1}^{\infty}$  converges. 
\end{expl}

Using the definition of the emanation condition, we can prove the continuity of $T(\cdot,p)$. The proof is based on several lemmas.
\begin{lem}\label{lem:y_1y_2}
Let $X$ be a convex subset of a normed space. Let $p\in X$ and $\emptyset\neq A\subset X$ be given. Suppose that $d(p,A)>0$.  Let $\epsilon\in (0,d(p,A)/6)$ and assume that the following conditions hold:
\begin{multline}\label{eq:propertyT}
\textnormal{There exists}\,\, \lambda\in (0,\epsilon)\,\, \textnormal{such that for each}\,\, y\in \dom(p,A),\,\, x\in [p,y],\\
 \textnormal{if}\,\, d(x,y)=\epsilon/2,\,\, \textnormal{then}\,\,
 d(x,p)\leq d(x,A)-\lambda,
\end{multline} 
\begin{equation}\label{eq:MT}
\textnormal{there exists}\,\, M\in (0,\infty)\,\, \textnormal{such that}
\,\,\,\sup\{T(\theta,p): |\theta|=1\}\leq M.
\end{equation}
Let $\beta$ be a positive number satisfying $\beta\leq \lambda/(4M)$. Let $y_i=p+t_i\theta_i,\,i=1,2$ where  
$t_1\geq t_2+0.5\epsilon$, $t_2\geq 0$, and $\theta_i\in \Theta_p,\,i=1,2$. If $y_1\in \dom(p,A)$ and $|\theta_1-\theta_2|<\beta$, then $d(y_2,p)\leq d(y_2,A)$. 
\end{lem}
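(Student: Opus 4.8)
The plan is to transport the quantitative margin guaranteed by \eqref{eq:propertyT} inward along the ray in direction $\theta_1$, down to radius $t_2$, and then to absorb the small angular discrepancy between $\theta_1$ and $\theta_2$ using the bound $\beta\leq\lambda/(4M)$. First I would record the crude but essential estimate $t_2<M$: since $y_1=p+t_1\theta_1\in\dom(p,A)$ with $t_1>0$, the definition \eqref{eq:Tdef} of $T$ gives $t_1\leq T(\theta_1,p)$, and then \eqref{eq:MT} yields $t_1\leq M$; as $t_2\leq t_1-0.5\epsilon<t_1$ we obtain $t_2<M$. The hypotheses $t_1\geq t_2+0.5\epsilon$ and $t_2\geq 0$ also give $t_1-\epsilon/2\geq 0$, so the auxiliary points introduced below are well defined.

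Next I would introduce on the ray emanating from $p$ in the direction $\theta_1$ the \emph{margin point} $x=p+(t_1-\epsilon/2)\theta_1$ and the \emph{target-radius point} $\wt{x}=p+t_2\theta_1$. Because $y_1\in\dom(p,A)$, Lemma \ref{lem:segment} gives $[p,y_1]\subseteq\dom(p,A)$, so $x\in\dom(p,A)$, and moreover $d(x,y_1)=\epsilon/2$; hence \eqref{eq:propertyT} applies with $y=y_1$ and produces the quantitative margin $d(x,p)\leq d(x,A)-\lambda$. The core of the argument is to show that this same margin $\lambda$ persists at the inner point $\wt{x}$. Since $t_2\leq t_1-\epsilon/2=d(x,p)$, the three points $p,\wt{x},x$ are collinear in this order, so $d(x,\wt{x})=d(x,p)-d(\wt{x},p)$. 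For every $a\in A$ the triangle inequality then gives $d(\wt{x},a)\geq d(x,a)-d(x,\wt{x})=d(x,a)-d(x,p)+d(\wt{x},p)$, and combining this with $d(x,a)\geq d(x,A)\geq d(x,p)+\lambda$ yields $d(\wt{x},a)\geq d(\wt{x},p)+\lambda=t_2+\lambda$. Taking the infimum over $a\in A$ gives $d(\wt{x},A)\geq t_2+\lambda$.

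Finally I would absorb the angular error. Note $d(y_2,p)=t_2=d(\wt{x},p)$ and, using $t_2<M$ together with $|\theta_1-\theta_2|<\beta\leq\lambda/(4M)$, $|y_2-\wt{x}|=t_2|\theta_2-\theta_1|<M\beta\leq\lambda/4$. By the $1$-Lipschitz dependence of $d(\cdot,A)$ on its point argument we then get $d(y_2,A)\geq d(\wt{x},A)-|y_2-\wt{x}|>(t_2+\lambda)-\lambda/4>t_2=d(y_2,p)$, which is the desired conclusion (in fact with strict inequality). The one genuinely delicate point is the margin-transport step: one must verify that sliding from $x$ toward $p$ does \emph{not} erode the margin $\lambda$. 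The collinear decomposition shows it is preserved exactly, which is precisely why the remaining budget $\lambda$ comfortably dominates the directional perturbation $M\beta\leq\lambda/4$; the calibration $\beta\leq\lambda/(4M)$ in the hypothesis is tailored to make this final comparison work.
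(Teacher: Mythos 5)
Your proof is correct. It uses the same basic ingredients as the paper's argument (the margin point at distance $\epsilon/2$ from $y_1$, the bound $t_1\leq M$ coming from \eqref{eq:Tdef} and \eqref{eq:MT}, the triangle inequality, and the calibration $M\beta\leq\lambda/4$), but it executes the two perturbation steps in the opposite order. The paper first \emph{rotates} at the outer radius $t_1-\epsilon/2$, passing from $x=p+(t_1-\epsilon/2)\theta_1$ to $z=p+(t_1-\epsilon/2)\theta_2$ and showing $d(z,p)\leq d(z,A)-\lambda/2$, and only then slides inward to $y_2\in[p,z]$ by invoking Lemma \ref{lem:segment} as a black box (which discards the remaining margin, but none is needed at that point). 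You instead \emph{slide} first, from $x$ to $\wt{x}=p+t_2\theta_1$, proving along the way the slightly sharper fact that the margin $\lambda$ is preserved exactly under inward translation toward $p$ (a quantitative refinement of Lemma \ref{lem:segment}), and then rotate at the inner radius $t_2$, absorbing the error $t_2|\theta_1-\theta_2|<M\beta\leq\lambda/4$ via the $1$-Lipschitz property of $d(\cdot,A)$. Your ordering has the minor advantages of dispensing with the paper's case split $t_1\leq\epsilon$ versus $t_1>\epsilon$ (your argument works uniformly once $t_1\geq\epsilon/2$, which the hypotheses guarantee) and of yielding the strict inequality $d(y_2,p)<d(y_2,A)$ with an explicit residual margin of $3\lambda/4$; the paper's ordering keeps the auxiliary point at a single radius and leans on an already-proved lemma for the inward step. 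Both are sound.
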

\begin{proof}
Suppose first that $t_1\leq \epsilon$. Then 
\begin{equation*}
6\epsilon\leq d(p,A)\leq d(p,y_2)+d(y_2,A)\leq\epsilon+d(y_2,A)
\end{equation*}
since $t_2\leq t_1\leq \epsilon$. Therefore $d(y_2,p)\leq\epsilon<5\epsilon\leq d(y_2,A)$ as claimed. 

Now consider the case where $\epsilon<t_1$. Let $x=p+(t_1-\epsilon/2)\theta_1$ and $z=p+(t_1-\epsilon/2)\theta_2$. Note that $z$ and also $y_2$ are not necessarily in $X$, but this is not important.  By \eqref{eq:MT} we have $t_1\leq M$. Hence  $d(x,z)=(t_1-\epsilon/2)|\theta_1-\theta_2|<M\beta$.  Since  $d(x,y_1)=\epsilon/2$, $y_1\in\dom(p,A)$,  and $x\in [p,y_1]$, it follows from \eqref{eq:propertyT} and the choice of $\beta$ that
\begin{multline*}
d(z,p)\leq d(z,x)+d(x,p)< M\beta+d(x,A)-\lambda\\
< M\beta+d(x,z)+d(z,A)-\lambda\leq d(z,A)-\lambda/2.
\end{multline*}
Since $t_2\leq t_1-0.5\epsilon$ it follows that $y_2\in [p,z]$. As a result,  Lemma \ref{lem:segment}  implies that  $d(y_2,p)\leq d(y_2,A)$. 
\end{proof}

\begin{lem}\label{lem:T_UppCont}
Let $X$ be a closed and convex subset of a normed space. Let $A\subseteq X$. Let $p\in X$ and let $\theta\in \Theta_p$. Suppose that there exists $M\in(0,\infty)$ such that $T(\phi,p)\in [0,M]$ for all $\phi\in \Theta_p$, where $T(\phi,p)$ is defined with respect to $A$ in \eqref{eq:Tdef}. Then for any $\epsilon\in (0,\infty)$ there exists $\beta>0$ such that  for all $\phi\in \Theta_p$, if  $|\theta-\phi|<\beta$, then $T(\phi,p)<T(\theta,p)+\epsilon$.  
\end{lem}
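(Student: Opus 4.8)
The plan is to prove this as the ``easy'' (upper semicontinuity) direction, relying only on the closedness of $X$ and the continuity of the distance functions, rather than on Lemma~\ref{lem:y_1y_2} (which would require the uniform-convexity-type estimate \eqref{eq:propertyT}, not available in a general normed space). Write $T_0:=T(\theta,p)$, and note $T_0\geq 0$. I argue by contradiction: if the assertion fails for some fixed $\epsilon>0$, then for every $n\in\N$ there is $\phi_n\in\Theta_p$ with $|\theta-\phi_n|<1/n$ and $T(\phi_n,p)\geq T_0+\epsilon$.

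The key step is to track the single point sitting at parameter $T_0+\epsilon$ along each ray. Set $w_n:=p+(T_0+\epsilon)\phi_n$. Since $T_0+\epsilon\leq T(\phi_n,p)$, and since the distance from any point to the singleton $\{p\}$ is trivially attained, the representation theorem (Theorem~\ref{thm:domIntervalApp}) applies and gives $w_n\in[p,\,p+T(\phi_n,p)\phi_n]\subseteq\dom(p,A)$; in particular $w_n\in X$ and $d(w_n,p)\leq d(w_n,A)$. (If one prefers to sidestep the endpoint case $T_0+\epsilon=T(\phi_n,p)$, pick $s\in(T_0+\epsilon,T(\phi_n,p)]$ with $p+s\phi_n\in X$ and $d(p+s\phi_n,p)\leq d(p+s\phi_n,A)$ directly from \eqref{eq:Tdef}, and apply Lemma~\ref{lem:segment} to the segment $[p,p+s\phi_n]$ containing $w_n$.)

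Now I pass to the limit. With $w:=p+(T_0+\epsilon)\theta$ one has $|w_n-w|=(T_0+\epsilon)|\phi_n-\theta|\to 0$, so the closedness of $X$ yields $w\in X$, while the $1$-Lipschitz continuity of $x\mapsto d(x,p)$ and of $x\mapsto d(x,A)$ transfers the inequalities $d(w_n,p)\leq d(w_n,A)$ to $d(w,p)\leq d(w,A)$. Thus $p+(T_0+\epsilon)\theta\in X$ satisfies the defining condition in \eqref{eq:Tdef}, whence the supremum definition forces $T(\theta,p)\geq T_0+\epsilon>T_0=T(\theta,p)$, a contradiction. This produces the required $\beta$ (indeed it shows every $\beta$ small enough works).

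The only genuine obstacle is conceptual rather than computational: since $X$ is merely closed and not open, the membership $w_n\in X$ cannot be transferred to the limit $w$ from any single $\phi_n$, so the passage to a limit is unavoidable and one cannot extract an explicit $\beta$ by a one-shot estimate. This is precisely why only the one-sided bound holds at this level of generality, the reverse (lower semicontinuity) inequality being the one that forces the emanation hypothesis of Definition~\ref{def:emanation}. Finally, the hypothesis $T(\cdot,p)\leq M$ plays no essential role in this direction beyond ensuring that $T(\phi_n,p)$ is finite, and is presumably imposed here for uniformity with the companion lower-semicontinuity statement; the displayed argument in fact remains valid when $T(\phi_n,p)=\infty$, upon reading $[p,\,p+T(\phi_n,p)\phi_n]$ as the full ray.
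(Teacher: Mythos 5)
Your proof is correct, and it follows the same overall strategy as the paper's: argue by contradiction with a sequence $\phi_n\to\theta$ of bad directions, use Theorem~\ref{thm:domIntervalApp} to place points of the rays in $\dom(p,A)$, and pass to the limit using the closedness of $X$ and the $1$-Lipschitz continuity of $d(\cdot,p)$ and $d(\cdot,A)$. The one genuine difference is where you take the limit. The paper extracts a convergent subsequence of the \emph{values} $T(\phi_{n_j},p)$ (this is exactly where the hypothesis $T(\cdot,p)\leq M$ is used, via Bolzano--Weierstrass in $[T(\theta,p)+\epsilon,\,M]$) and then sends the endpoints $p+T(\phi_{n_j},p)\phi_{n_j}$ to $p+t\theta$. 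You instead freeze the parameter at $T_0+\epsilon$ and track the single point $w_n=p+(T_0+\epsilon)\phi_n$, which lies in $\dom(p,A)$ because $T_0+\epsilon\leq T(\phi_n,p)$. This avoids the subsequence extraction entirely and, as you correctly observe, makes the bound $M$ inessential for this one-sided statement (it is only needed to make sense of the companion lower-semicontinuity direction, where Lemma~\ref{lem:y_1y_2} genuinely requires it). So your version is marginally more elementary and slightly more general; the paper's version is no harder but carries the boundedness hypothesis along for uniformity with Theorem~\ref{thm:contT}. One small caveat: your parenthetical fallback (choosing $s\in(T_0+\epsilon,T(\phi_n,p)]$ and invoking Lemma~\ref{lem:segment}) breaks down in the degenerate case $T(\phi_n,p)=T_0+\epsilon$, where that interval is empty; but this does not matter, since your primary route through Theorem~\ref{thm:domIntervalApp} already covers the closed endpoint.
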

\begin{proof}
Suppose that this is not true. Then for some $\epsilon\in (0,\infty)$ there exists a sequence $(\phi_n)_{n=1}^{\infty}$ of elements in $\Theta_p$ such that $\theta=\lim_{n\to\infty}\phi_n$ but $T(\phi_n,p)\geq T(\theta,p)+\epsilon$. By assumption $T(\phi_n,p)\in [T(\theta,p)+\epsilon,M+\epsilon]$ for all $n$. Therefore there exists a subsequence $(\phi_{n_j})_{j=1}^{\infty}$ such that $t:=\lim_{j\to\infty}T(\phi_{n_j},p)$ exists and satisfies $t\in [T(\theta,p)+\epsilon,M+\epsilon]$. Hence $T(\phi_{n_j},p)\phi_{n_j}\to t\theta$ and  $\lim_{j\to\infty}(p+T(\phi_{n_j},p)\phi_{n_j})=p+t\theta$. Because $X$ is closed it follows that $p+t\theta\in X$. By Theorem \ref{thm:domIntervalApp} (with $P=\{p\}$) we know that $p+T(\phi_{n_j},p)\phi_{n_j}\in \dom(p,A)$ for each $j$. Therefore the inequality   $d(p+T(\phi_{n_j})\phi_{n_j},p)-d(p+T(\phi_{n_j},p)\phi_{n_j},A)\leq 0$ holds  for each $j$. Thus, from the continuity of the distance function it follows  that $d(p+t\theta,p)-d(p+t\theta,A)\leq 0$. As a result $p+t\theta\in \dom(p,A)$. 
But $T(\theta,p)<t$  since $T(\theta,p)\leq t-\epsilon$, and this is a contradiction to the definition of $T(\theta,p)$ (see \eqref{eq:Tdef}). 
\end{proof}

\begin{thm}\label{thm:contT}
Let $X$ be a closed and convex subset of a uniformly convex
normed space and let $A$ be a subset of $X$. Let $p\in X$ and 
suppose that  $\eta:=d(p,A)>0$. Suppose that \eqref{eq:BallRho} holds and that $p$  has the emanation property. Then for each $\epsilon \in 
(0,\eta/6)$ and each $\theta \in \Theta_p$ there exists 
$\Delta \in (0,\infty)$ such that for each
 $\phi \in \Theta_p$, if $|\theta-\phi|<\Delta$, then
$|T(\theta,p)-T(\phi,p)|\leq \epsilon$.
\end{thm}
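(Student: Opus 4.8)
The plan is to establish the continuity of $T(\cdot,p)$ on $\Theta_p$ by proving upper and lower semicontinuity separately and then intersecting the resulting thresholds. Fix $\epsilon\in(0,\eta/6)$ and $\theta\in\Theta_p$. I first record the uniform bound: since \eqref{eq:BallRho} holds, Lemma \ref{lem:BallRho}\eqref{item:Trho} applied with $P=\{p\}$ gives $T(\phi,p)\leq\rho=:M$ for every $\phi\in\Theta_p$. Upper semicontinuity is then immediate from Lemma \ref{lem:T_UppCont}: there is $\beta_1>0$ such that $|\theta-\phi|<\beta_1$ implies $T(\phi,p)<T(\theta,p)+\epsilon$.

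The lower semicontinuity is where uniform convexity and the emanation property enter, and I expect it to be the main obstacle. First I would verify hypothesis \eqref{eq:propertyT} of Lemma \ref{lem:y_1y_2}. Given $y\in\dom(p,A)$ and $x\in[p,y]$ with $d(x,y)=\epsilon/2$, Lemma \ref{lem:StrictSegment_Appendix} (taken with $\sigma=\eta/6$, using $d(x,A)\leq\rho$ from Lemma \ref{lem:BallRho}\eqref{item:RhoAA'} and $d(p,A)=\eta$) yields $d(x,p)<d(x,A)-\lambda$ for the uniform choice
\begin{equation*}
\lambda=\frac{\epsilon}{2}\,\delta\!\left(\frac{\eta}{10\rho+5\eta/2}\right).
\end{equation*}
This $\lambda$ is positive because the space is uniformly convex and the argument of $\delta$ lies in $(0,2)$, and it satisfies $\lambda\leq\epsilon/2<\epsilon$; the monotonicity of $\delta$ together with $\epsilon/2<\eta/12$ shows that $\lambda$ respects all three constraints in \eqref{eq:EpsilonUniform}, so \eqref{eq:propertyT} holds.

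Next set $\beta=\lambda/(4M)$, which is the threshold demanded by Lemma \ref{lem:y_1y_2}, and let $\beta_2>0$ be the radius supplied by the emanation property of $p$ in the direction $\theta$ for this $\epsilon$, so that $L(\phi)\geq L(\theta)-\epsilon$ whenever $\phi\in\Theta_p$ and $|\phi-\theta|<\beta_2$. I claim that $|\theta-\phi|<\min\{\beta,\beta_2\}$ forces $T(\phi,p)\geq T(\theta,p)-\epsilon$. When $T(\theta,p)\leq\epsilon$ this is trivial since $T(\phi,p)\geq0$. Otherwise put $t_1=T(\theta,p)$, $t_2=t_1-\epsilon>0$, $\theta_1=\theta$, and $\theta_2=\phi$; then $t_1\geq t_2+\epsilon/2$ and, by the representation Theorem \ref{thm:domIntervalApp} with $P=\{p\}$, the endpoint $y_1=p+t_1\theta$ lies in $\dom(p,A)$. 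Lemma \ref{lem:y_1y_2} now gives $d(p+t_2\phi,p)\leq d(p+t_2\phi,A)$. The delicate point is that this only places $p+t_2\phi$ in the dominance region, not in $X$; here the emanation property repairs the gap. Since $y_1\in X$ and $X$ is convex we have $t_1\leq L(\theta)$, whence $t_2=t_1-\epsilon\leq L(\theta)-\epsilon\leq L(\phi)$, so $p+t_2\phi\in X$. Consequently $t_2$ belongs to the set defining $T(\phi,p)$ in \eqref{eq:Tdef}, giving $T(\phi,p)\geq t_2=T(\theta,p)-\epsilon$.

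Finally, taking $\Delta=\min\{\beta_1,\beta,\beta_2\}$ and combining the two estimates gives $|T(\theta,p)-T(\phi,p)|\leq\epsilon$ for every $\phi\in\Theta_p$ with $|\theta-\phi|<\Delta$. As noted, the crux is the lower semicontinuity: without the emanation property the shortened, slightly rotated point $p+t_2\phi$ may leave $X$ and the bound can fail, exactly the phenomenon exhibited in Example \ref{ex:NonEmanation}.
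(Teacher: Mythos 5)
Your proposal is correct and follows essentially the same route as the paper's proof: the same choice of $\lambda$ via Lemma \ref{lem:StrictSegment_Appendix}, upper semicontinuity from Lemma \ref{lem:T_UppCont}, and lower semicontinuity from Lemma \ref{lem:y_1y_2} combined with the emanation property, with the same final threshold $\Delta=\min\{\beta_1,\lambda/(4\rho),\beta_2\}$. The only cosmetic difference is that you argue the lower bound directly from the definition of $T(\phi,p)$ whereas the paper phrases it as a contradiction.
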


\begin{proof}
Let $\sigma=\eta/6$ and 
\begin{equation*}
\lambda=0.5\epsilon\cdot\delta\left(\frac{\eta}{10(\rho+\eta/4)}\right).
\end{equation*}
Then $\lambda$  satisfies \eqref{eq:propertyT}. Indeed, let $x$ and $y$ be   such that  $y\in \dom(p,A)$, $x\in [p,y]$, and $d(x,y)=0.5\epsilon$. We have $d(x,A)<\rho$ by  Lemma \ref{lem:BallRho}\eqref{item:RhoAA'} since  \eqref{eq:BallRho} holds. In addition,  
$0<\lambda\leq 0.5\epsilon<\sigma<4\eta/10$ and $\lambda$ is not greater  than the minimum in \eqref{eq:EpsilonUniform}. From Lemma \ref{lem:StrictSegment_Appendix} it follows that $d(x,p)<d(x,A)-\lambda$. 

From Lemma \ref{lem:T_UppCont} there exists $\beta_1>0$ such that for all $\phi\in \Theta_p$, if  $|\theta-\phi|<\beta_1$, then $T(\phi,p)<T(\theta,p)+\epsilon$. Since $p$ has the emanation property there exists $\beta_2>0$ such that  for all $\phi\in \Theta_p$, if  $|\theta-\phi|<\beta_2$, then $L(\phi)\geq L(\theta)-\epsilon$. Let 
\begin{equation*}
\Delta=\min\{\beta_1,\beta_2,\lambda/(4\rho)\}. 
\end{equation*}
We claim that for each $\phi \in \Theta_p$, if $|\theta-\phi|<\Delta$, then
$|T(\theta,p)-T(\phi,p)|\leq\epsilon$. Indeed, let $\phi\in \Theta_p$ satisfy $|\theta-\phi|<\Delta$. Then $|\theta-\phi|<\beta_1$, so 
Lemma \ref{lem:T_UppCont} implies that $T(\phi,p)<T(\theta,p)+\epsilon$. 

Assume for contradiction that $T(\theta,p)> T(\phi,p)+\epsilon$.  
Let $t_1=T(\theta,p)$, $y_1=p+t_1\theta$, $t_2=t_1-\epsilon$,  and $y_2=p+t_2\phi$. In particular $t_2>T(\phi,p)$. Because $|\theta-\phi|<\beta_2$  it follows that $L(\phi)\geq L(\theta)-\epsilon\geq T(\theta,p)-\epsilon=t_2$. Therefore the point $y_2$, which is on the ray emanating from $p$ in the direction of $\phi$, is in $X$. By Lemma ~\ref{lem:BallRho}\eqref{item:Trho} (with $P=\{p\}$) we know that \eqref{eq:MT} is satisfied with $M=\rho$. In addition, since $|\theta-\phi|<\lambda/(4\rho)$ and since $y_1\in X$ and $d(y_1,p)\leq d(y_1,A)$ (by Theorem \ref{thm:domIntervalApp}), it follows from Lemma \ref{lem:y_1y_2}  that $d(y_2,p)\leq d(y_2,A)$. By the definition of $T(\phi,p)$  we therefore have $t_2\leq T(\phi,p)$, a contradiction.
\end{proof}

\begin{remark}\label{rem:CounterExampleT}
The conditions mentioned in Theorem \ref{thm:contT} are necessary for the continuity of $T$ mentioned there. Indeed, the uniform convexity of the norm is needed as can be seen  in the case described in Figure \ref{fig:InStability000} where $p=(0,0)$ and $A=\{(0,-2), (2,0), (-2,0)\}$. The discontinuity of $T(\cdot,p)$ is at the four unit vectors $\theta=(\pm 1,\pm 1)$. The emanation property is necessary since in  Example \ref{ex:NonEmanation} it can be seen that $T(\cdot,p)$, $p=0$, is  discontinuous at $\theta=\theta_1$. As for the assumption $d(p,A)>0$, consider $X=[-1,1]^2$ in the Euclidean plane where  $p=(0,0)$ and $A=[-1,1]\times \{0\}$. Then $T(\theta,p)$ is discontinuous at $\theta=(0,1)$ because $T(\theta,p)=1$ but $T(\phi,p)=0$ for any $\phi$ of the form $\phi=(1/n,\sqrt{1-1/n^2})$, $n\in \N$. As for the boundedness condition \eqref{eq:BallRho}, let $X=\R^2$ with the Euclidean norm, $p=(0,-1)$, and $A=\{(0,1)\}$. Then $T(\cdot,p)$ is not continuous at $\theta=(1,0)$, because $T(\theta,p)=\infty$ but $T(\phi,p)<\infty$ for any $\phi=(\sqrt{1-1/n^2},1/n),\,n\in\N$.   
\end{remark}

\end{document}